\documentclass{lmcs}
\pdfoutput=1

\usepackage{lastpage}
\lmcsdoi{15}{2}{15}
\lmcsheading{}{\pageref{LastPage}}{}{}%
{Jun.~06,~2018}{May~23,~2019}{}

\keywords{MDDLog, MMSNP, OMQ, FO-Rewritability, Monadic Datalog-Rewritability}
\usepackage{graphicx, hyperref}
\usepackage{amssymb}
\usepackage{xspace}
\usepackage{pifont} 
\usepackage{misc}
\usepackage{thmtools,thm-restate}
\usepackage{tikz, tkz-graph, models, misc}
\usetikzlibrary{arrows, shapes, snakes, backgrounds, petri}

\begin{document}

\title[Rewritability in MDDLog, MMSNP, and Expressive DLs]{Rewritability in Monadic Disjunctive Datalog, MMSNP, and Expressive Description Logics}
\titlecomment{{\lsuper*} The current paper is the extended version of an invited conference abstract \cite{ICDT2017}. It contains detailed proofs of all results and new material regarding dichotomies and deciding \PTime query evaluation. }

\begin{abstract}
  We study rewritability of monadic disjunctive Datalog programs,
  (the complements of) MMSNP sentences, and ontology-mediated queries (OMQs)
  based on expressive description logics of the \ALC family and on
  conjunctive queries. We show that rewritability into FO and into
  monadic Datalog (MDLog) are decidable, and that rewritability into
  Datalog is decidable when the original query satisfies a certain
  condition related to equality. 
  We establish {\sc 2NExpTime}-completeness for all studied problems except
  rewritability into MDLog for which there remains a gap between
  {\sc 2NExpTime} and {\sc 3ExpTime}. We also analyze the shape of
  rewritings, which in the case of MMSNP correspond to obstructions, and
  give a new construction of canonical Datalog programs that is more
  elementary than existing ones and also applies to non-Boolean queries.
\end{abstract}

\author[C.~Feier]{Cristina Feier\rsuper{a,*}}	
\address{\lsuper{a}University of Bremen, Department of Computer Science, Germany}	
\thanks{\lsuper{*}The author was supported by ERC Consolidator Grant 647289 CODA}	

\author[A.~Kuusisto]{Antti Kuusisto\rsuper{b,*}}	
\address{\lsuper{b}Tampere University, Mathematics, Finland}	

\author[C.~Lutz]{Carsten Lutz\rsuper{b,*}}	

\maketitle

\section{Introduction}

In data access with ontologies, the premier aim is to answer queries
over incomplete and heterogeneous data while taking advantage of the
domain knowledge provided by an ontology 
\cite{DBLP:journals/tods/BienvenuCLW14,DBLP:conf/rweb/CalvaneseGLLPRR09,
  DBLP:conf/rweb/BienvenuO15}.  Since traditional database systems are
often unaware of ontologies, it is common to rewrite the emerging
ontology-mediated queries (OMQs) into more standard database query
languages. For example, the DL-Lite family of description logics (DLs)
was designed as an ontology language specifically so that any OMQ
$Q=(\Tmc,\Sigma,q)$ where \Tmc is a DL-Lite ontology, $\Sigma$ a data
signature, and $q$ a conjunctive query, can be rewritten into an
equivalent first-order (FO) query that can then be executed using a
standard SQL database system
\cite{calvanese-2007,Artale09thedl-lite}. In more expressive ontology
languages, it is not guaranteed that for every OMQ, there is an
equivalent FO query. For example, this is the case for DLs of the \EL
and Horn-\ALC families 
and for DLs of the expressive \ALC family; please see
\cite{DL-Textbook} for a general introduction to DLs.  In many members
of the \EL and Horn-\ALC families, however, rewritability into monadic
Datalog (MDLog) is guaranteed, thus enabling the use of Datalog
engines for query answering. In \ALC and above, not even
Datalog-rewritability is generally ensured. Since ontologies emerging
from practical applications tend to be structurally simple, though,
there is reason to hope that (FO-, MDLog-, and Datalog-) rewritings do
exist in many practically relevant cases even when the ontology is
formulated in an expressive language. This has in fact been
experimentally confirmed for FO-rewritability in the \EL family of DLs
\cite{HLSW-IJCAI15}, and it has led to the implementation of rewriting
tools that, although incomplete, are able to compute rewritings in
many practical cases
\cite{perezurbina10tractable,conf/rr/KaminskiNG14,DBLP:journals/ws/TrivelaSCS15}.

Fundamental problems that emerge from this situation are to understand
the exact limits of rewritability and to provide (complete) algorithms
that decide the rewritability of a given OMQ and that compute a
rewriting when it exists. These problems have been adressed in
\cite{BienvenuLW13,HLSW-IJCAI15,BHLW-IJCAI16,LutzSabellekIJCAI17} for
DLs from the \EL and Horn-\ALC families. For DLs from the \ALC family,
first results were obtained in \cite{DBLP:journals/tods/BienvenuCLW14}
where a connection between OMQs and constraint satisfaction problems
(CSPs) was established and then used to transfer decidability results
from CSPs to OMQs. In fact, rewritability is an important topic in CSP
(where it is called definability) as it constitutes a central tool for
analyzing the complexity of CSPs
\cite{DBLP:journals/siamcomp/FederV98,DBLP:journals/lmcs/LaroseLT07,DBLP:conf/lics/EgriLT07,DBLP:conf/lics/DalmauL08}. In
particular, it is known that deciding the rewritability of (the
complement of) a given CSP into FO and into Datalog is {\sc
  NP}-complete
\cite{DBLP:journals/lmcs/LaroseLT07,DBLP:journals/logcom/Barto16,Chen2017}
and rewritability into MDLog is {\sc NP}-hard and in \ExpTime
\cite{Chen2017}. In
\cite{DBLP:journals/tods/BienvenuCLW14}, these results were used to
show that FO- and Datalog-rewritability of OMQs $(\Tmc,\Sigma,q)$ is
decidable and in fact {\sc NExpTime}-complete when \Tmc is formulated
in \ALC or a moderate extension thereof and $q$ is an atomic query
(AQ), that is, a monadic query of the simple form $A(x)$. For
MDLog-rewritability, one can show {\sc NExpTime}-hardness and
containment in {\sc 2ExpTime}.

The aim of this paper is to study the above questions for OMQs where
the ontology is formulated in an expressive DL from the \ALC family
and where the actual query is a conjunctive query (CQ) or a union of
conjunctive queries (UCQ). As observed in
\cite{DBLP:journals/tods/BienvenuCLW14}, transitioning in OMQs from
AQs to UCQs corresponds to the transition from CSP to its logical
generalization MMSNP introduced by Feder and Vardi
\cite{DBLP:journals/siamcomp/FederV98} and studied, for example, in
\cite{DBLP:journals/siamcomp/MadelaineS07,DBLP:journals/corr/abs-0904-2521,DBLP:conf/cp/Madelaine10,DBLP:journals/siamdm/BodirskyCF12}. More
precisely, while the OMQ language $(\ALC,\text{AQ})$ that consists of
all OMQs $(\Tmc,\Sigma,q)$ where \Tmc is formulated in \ALC and $q$ is
an AQ has the same expressive power as the complement of CSP (with
multiple templates and a single constant), the OMQ language
$(\ALC,\text{UCQ})$ has the same expressive power as the complement of
MMSNP (with free variables)---which in turn is essentially a notational variant of
monadic disjunctive Datalog (MDDLog). It should be noted, however,
that while all these formalisms are equivalent in expressive power,
they differ significantly in succinctness
\cite{DBLP:journals/tods/BienvenuCLW14}; in particular, the best known
translation of OMQs from $(\ALC,\text{UCQ})$ into MMSNP/MDDLog
involves a double exponential blowup. In contrast to the CSP case,
FO-, MDLog-, and Datalog-rewritability of (complemented) MMSNP
sentences was not known to be decidable. In this paper, we establish
decidability of FO- and MDLog-rewritability in $(\ALC,\text{UCQ})$ and
related OMQ languages, in MDDLog, and in complemented MMSNP. We
show that FO-rewritability is {\sc 2NExpTime}-complete in all three
cases, and that MDLog-rewritability is in {\sc 3ExpTime}; a {\sc
  2NExpTime} lower bound was established in \cite{KR-submitted}.  Let
us discuss our results on the complexity of FO-rewritability from
three different perspectives. From the OMQ perspective, the transition
from AQs to UCQs results in an increase of complexity from {\sc
  NExpTime} to {\sc 2NExpTime}. From the monadic Datalog perspective,
adding disjunction (transitioning from monadic Datalog to MDDLog)
results in a moderate increase of complexity from {\sc 2ExpTime}
\cite{DBLP:conf/lics/BenediktCCB15} to {\sc 2NExpTime}. And from the
CSP perspective, the transition from CSPs to MMSNP results in a rather
dramatic complexity jump from {\sc NP} to {\sc 2NExpTime}.

For Datalog-rewritability, we obtain only partial results. In
particular, we show that Datalog-rewritability is decidable and {\sc
  2NExpTime}-complete for MDDLog programs that, in a
certain technical sense made precise in the paper, \emph{have
  equality}.  For the general case, we only obtain a potentially
incomplete procedure.  It is well possible that the procedure is in
fact complete, but proving this remains an open issue for now. These
results also apply to analogously defined classes of MMSNP sentences
and OMQs that have equality.

While we mainly focus on deciding whether a rewriting exists rather
than actually computing it, we also 
analyze the shape that rewritings can take. Since the shape turns out
to be rather restricted, this is important information for algorithms
(complete or incomplete) that seek to compute rewritings. In the
CSP/MMSNP world, this corresponds to analyzing obstruction sets for
MMSNP, in the style of CSP obstructions
\cite{DBLP:conf/bonnco/Nesetril08,DBLP:conf/dagstuhl/BulatovKL08,Atserias08}
and not to be confused with colored forbidden patterns sometimes used
to characterize MMSNP \cite{DBLP:journals/siamcomp/MadelaineS07}. More
precisely, we show that an OMQ $(\Tmc,\Sigma,q)$ from
$(\ALC,\text{UCQ})$ is FO-rewritable if and only if it is rewritable
into a UCQ in which each CQ has treewidth $(1,\max\{2,n_q\})$, where
$n_q$ is the size of $q$;\footnote{What we mean here is that $q$ has a
  tree decomposition in which every bag has at most $\max\{2,n_q\}$
  elements and in which neighboring bags overlap in at most one
  element.} similarly, the complement of an MMSNP sentence $\varphi$
is FO-definable if and only if it admits a finite set of finite
obstructions of treewidth $(1,k)$ where $k$ is the diameter of
$\varphi$ (the maximum number of variables in a negated conjunction in
its body, in Feder and Vardi's terminology). We also show that an OMQ
$(\Tmc,\Sigma,q)$ is MDLog-rewritable if and only if it is rewritable
into an MDLog program of diameter $\max\{2,n_q\}$ where the diameter
of an MDLog program is the maximum number of variables in a rule;
equivalently, the  complemented of an MMSNP sentence $\varphi$ is
MDLog-definable if and only if it admits a (potentially infinite) set
of finite obstructions of treewidth $(1,k)$ where $k$ is the diameter
of $\varphi$. For the case of rewriting into Datalog, we give a new
and direct construction of canonical Datalog-rewritings of MMSNP
sentences. It has been observed in
\cite{DBLP:journals/siamcomp/FederV98} that for every CSP and all
$\ell,k$, it is possible to construct a canonical Datalog program
$\Pi$ of width $\ell$ and diameter $k$ (the width is the maximum arity
of IDB relations) in the sense that if any such program is a rewriting
of the CSP, then so is $\Pi$; moreover, even when there is no
$(\ell,k)$-Datalog rewriting, then $\Pi$ is the best possible
approximation of such a rewriting. The existence of canonical
Datalog-rewritings for (complemented) MMSNP sentences was already
known from \cite{DBLP:journals/jcss/BodirskyD13}. However, the
construction given there is quite complex, proceeding via an infinite
template that is obtained by applying an intricate construction due to
Cherlin, Shelah, and Shi \cite{CherlinEtAl99}, and resulting in
canonical programs that are rather difficult to understand and to
analyze. In contrast, our construction is elementary and essentially
parallels the CSP case; it also applies to MMSNP formulas with free
variables, where the canonical program takes a rather special form
that involves \emph{parameters}, similar in spirit to the parameters
to least fixed-point operators in FO(LFP) \cite{BenediktLICS16}.

Our main technical tool is the translation of  MMSNP sentences into 
CSPs exhibited by Feder and Vardi
\cite{DBLP:journals/siamcomp/FederV98}; actually, the target of the
translation is a generalized CSP, meaning that there are multiple
templates. The translation is not equivalence preserving and involves
a double exponential blowup, but it was designed so as to preserve
complexity up to polynomial time reductions. Here, we are primarily
interested in the semantic relationship between the original MMSNP
sentence and the constructed CSP. It turns out that the translation
does not quite preserve rewritability. In particular, when the
original MMSNP sentence has a rewriting, then the natural way of
constructing from it a rewriting for the CSP is sound only on
instances of high girth. However, FO- and MDLog-rewritings of CSPs
that are sound on high girth (and unconditionally complete) can be
converted into rewritings that are unconditionally sound (and
complete). The same is true for Datalog-rewritings when the CSP is
derived from an MMSNP sentence that has equality, but it remains open
whether it is true for Datalog-rewritings of unrestricted CSPs.

With our translations in place, we can also make relevant observations
regarding the (data) complexity of query evaluation in MMSNP, in
MDDLog, and of OMQs. This is especially interesting in the light of
the recently obtained breakthrough in CSPs that there is a dichotomy
between \PTime and {\sc NP} in the complexity of CSPs
\cite{Bulatov17,Zhuk17} and that it is decidable and {\sc NP}-complete
whether a CSP is in \PTime \cite{Chen2017}.  We
show that this implies a dichotomy between \PTime and {\sc coNP} for
MDDLog and for all OMQ languages mentioned above. We also prove that
\PTime query evaluation is decidable and 2\NExpTime-complete in
the mentioned query languages, and that the same is true for MMSNP.

The structure of this paper is as follows. In
Section~\ref{sec:prelim}, we introduce the essentials of disjunctive
Datalog and its relevant fragments as well as CSP and MMSNP; in fact,
we shall always work with Boolean MDDLog rather than with complemented
MMSNP.  In Section~\ref{sec:shape}, we summarize the main properties
of Feder and Vardi's translation of MMSNP into CSP. We use this in
Section~\ref{sec:dec} to show that FO- and MDLog-rewritability of
Boolean MDDLog programs and of the complement of an MMSNP sentences is
decidable, also establishing the announced complexity results. In
Section~\ref{sect:shapeObstrExplosion}, we analyze the shape of FO-
and MDLog-rewritings and of obstructions for MMSNP sentences. We also
establish an MMSNP analogue of an essential combinatorial lemma for
CSPs which says that it is possible to replace a structure by a
structure of high high girth while preserving certain homomorphisms;
the MMSNP analogue achieves high `decomposition girth' (defined in the
paper) and preserves the truth of certain MMSNP sentences.  In
Section~\ref{sect:dlog-rewr}, we study Datalog-rewritability of MDDLog
programs that have equality and construct canonical Datalog
programs. Section~\ref{sec:answer} is concerned with lifting our
results from the Boolean case to the general case, concerning the
complexity of deciding rewritability, the shape of rewritings, and the
construction of canonical Datalog programs. In this section, Datalog
programs with parameters play a central role. In
Section~\ref{sect:omqs}, we introduce OMQs and further lift our
results to that setting, finally arriving at our goal to study
fundamental rewritability questions for OMQ languages based on (unions
of) conjunctive queries. Section~\ref{sect:dicho} is then concerned
with dichotomies and the complexity of deciding \PTime query
evaluation. We conclude in Section~\ref{sec:concl}.



\section{Preliminaries}
\label{sec:prelim}

We introduce disjunction Datalog, CSP, and MMSNP. To avoid overloading
this section, the introduction of ontology languages and
ontology-mediated queries is deferred to Section~\ref{sect:omqs}.

\smallskip

A \emph{schema} is a finite collection $\Sbf=(S_1,\dots,S_k)$ of
relation symbols with associated arity. An \emph{\Sbf-fact} is an
expression of the form $S(a_1,\ldots, a_n)$ where $S\in\Sbf$ is an
$n$-ary relation symbol, and $a_1, \ldots, a_n$ are elements of some
fixed, countably infinite set $\mn{const}$ of \emph{constants}. For an
$n$-ary relation symbol~$S$, $\mn{pos}(S)$ is $\{1, \ldots, n\}$.  An
\emph{\Sbf-instance} $I$ is a finite set of \Sbf-facts.  The
\emph{active domain} $\adom(I)$ of~$I$ is the set of all constants
that occur in the facts in~$I$. We use the symbols \abf, \bbf, \cbf to
denote tuples of constants and, slightly abusing notation, write $\abf
\subseteq \mn{dom}(I)$ to mean that \abf is a tuple of constants from
$\mn{dom}(I)$ when we do not want to be precise about the length of
the tuple. For an instance $I$ and a schema \Sbf, we write $I|_\Sbf$
to denote the restriction of $I$ to the relation symbols in \Sbf.

A \emph{tree decomposition} of an instance $I$ is a pair $(T, (B_v)_{v \in V})$, where $T=(V, E)$ is an undirected tree and $(B_v)_{v \in V} $ is a family of subsets of $\adom(I)$ such that the following conditions are satisfied:
\begin{enumerate} 
\item for all $a \in \adom(I)$, $\{v \in V\mid a \in B_v\}$ is
  nonempty and connected in~$T$;
\item for every fact $R(a_1, \ldots a_r)$ in $I$, there is a $v \in V$ such that $a_1, \ldots, a_r \in B_v$.
\end{enumerate}

Unlike in the traditional setup \cite{FlumG06}, we are interested in two parameters of tree decompositions instead of only one.  We call $(T, (B_v)_{v \in V})$ an \emph{$(\ell,k)$-tree 
decomposition} if for all distinct $v,v' \in V$, 
%
$|B_v \cap B_{v'}| \leq \ell$ and 
$|B_v| \leq k$. 
An instance $I$ \emph{has treewidth} $(\ell,k)$ if it admits an $(\ell,k)$-tree decomposition. 

We now define the notion of girth. 
A finite structure $I$ has a \emph{cycle} of length~\mbox{$n>0$} if
there are
distinct facts $R_0(\vect{a}_0),\dots, R_{n-1}(\vect{a}_{n-1}) \in I$
and positions
$p_0,p'_0 \in \mn{pos}(R_0),\dots,p_{n-1},p'_{n-1} \in \mn{pos}(R_{n-1})$
such that
\begin{itemize}
\item $p_i \neq p'_i$ for $0 \leq i < n$ and
\item the constant in the $p'_i$-th position of $\abf_i$ is identical
  to the constant in the $p_{i+1}$-th position of $\abf_{i+1}$
 for $0 \leq i < n$ and with $p_n := p_0$ and $p'_n := p'_0$.
\end{itemize}
The \emph{girth} of $I$ is the length of the shortest cycle in it and
$\omega$ if $I$ has no cycle (in which case we say that $I$ is a
\emph{tree}).

\medskip
\noindent 
A \emph{constraint satisfaction problem (CSP)} is defined by an
instance $T$ over a schema $\Sbf_E$, called
\textit{template}.\footnote{Adopting Datalog terminology, we generally
  call the schema in which the data is formulated the \emph{EDB
    schema} and denote it with $\Sbf_E$.} The problem associated with
$T$, denoted $\text{CSP}(T)$, is to decide whether an input instance
$I$ over $\Sbf_E$ admits a homomorphism to~$T$, denoted
$I \rightarrow T$. We use coCSP$(T)$ to denote the complement problem,
that is, deciding whether $I \not \rightarrow T$.  A \emph{generalized
  CSP} is defined by a set of templates $S$ over the same schema
$\Sbf_E$ and asks for a homomorphism from the input $I$ to at least
one of the templates $T \in S$, denoted $I \rightarrow S$. Note that a
(generalized) CSP can be viewed as a Boolean query over
$\Sbf_E$-instances.


\medskip
An \emph{MMSNP sentence} $\theta$ over schema $\Sbf_E$ has the form $\exists X_1 \cdots \exists X_n \forall x_1 \cdots \forall x_m \vp$  with $X_1,\dots, X_n$ monadic second-order variables,
$x_1,\dots,x_m$ first-order variables, and $\vp$ a conjunction of quantifier-free formulas of the form
\[ 
\beta_1 \vee \cdots \vee \beta_n 
\leftarrow 
\alpha_1 \wedge \cdots\wedge \alpha_m
\]
with $n,m \geq 0$ and where each $\alpha_i$ takes the form $X_i(x_j)$
or $R(\vect{x})$ 
with $R \in \Sbf_E$ and each $\beta_i$ takes the form $X_i(x_j)$. The
\emph{diameter} of $\theta$ is the maximum number of variables in some
implication in $\vp$. This presentation is syntactically different
from, but semantically equivalent to the original definition from
\cite{DBLP:journals/siamcomp/FederV98}, which does not use the
implication symbol and instead restricts the allowed polarities of
atoms. 
Both forms can be interconverted in polynomial time, see
\cite{DBLP:journals/tods/BienvenuCLW14}.  The semantics of MMSNP is
the standard semantics of second-order logic. More information can be
found, e.g., in
\cite{DBLP:journals/siamdm/BodirskyCF12,DBLP:journals/jcss/BodirskyD13}.
Note that, just like CSPs, MMSNP sentences can be viewed as Boolean
queries. 

\medskip A \emph{conjunctive query (CQ)} takes the form
$q(\xbf)=\exists \vect{y} \, \vp(\vect{x}, \vect{y})$ where $\vp$ is a
conjunction of relational atoms and \xbf, \ybf denote tuples of
variables; the equality relation may be used. Whenever convenient, we
shall confuse $q(\xbf)$ with the set of atoms in $\varphi$. The
variables in \xbf are the \emph{answer variables} in $q(\xbf)$. The
\emph{arity} of $q(\xbf)$ is the number of its answer variables and
$q(\xbf)$ is \emph{Boolean} if it has arity zero. We say that
$q(\xbf)$ is \emph{over} $\Sbf_E$ if $\vp$ uses only relation symbols
from $\Sbf_E$. An \emph{answer} to $q$ on an $\Sbf_E$-instance $I$ is
a tuple of constants $\abf$ such that $I \models q(\abf)$ in the
standard sense of first-order logic. It is well-known that $I \models
q(\abf)$ if and only if there is a homomorphism from $\vp$ viewed as
an instance to $I$ that takes $\xbf$ to $\abf$.  A Boolean CQ $q$ is
\emph{true} on an instance $I$, denoted $I \models q$, if the empty
tuple is an answer to $q$ on $I$. A CQ $q$ is a \emph{contraction} of
a CQ $q'$ if it can be obtained from $q'$ by identifying variables.  A
\emph{union of conjunctive queries (UCQ)} is a disjunction of CQs with
the same answer variables.  The semantics of UCQs is defined in the
expected way.

A \emph{disjunctive Datalog rule}
$\rho$ has the form
\[S_1(\vect{x}_1) \vee \cdots \vee S_m(\vect{x}_m) \leftarrow
R_1(\vect{y}_1)\land \cdots\land R_n(\vect{y}_n)\]
with $n> 0$ and $m \geq 0$. 
We refer to $S_1(\vect{x}_1) \vee \cdots \vee S_m(\vect{x}_m)$ as the
\emph{head} of $\rho$, and to $R_1(\vect{y}_1) \wedge \cdots \wedge
R_n(\vect{y}_n)$ as the \emph{body}. Every variable that occurs in the
head of $\rho$ is required to also occur in the body of $\rho$.
A \emph{disjunctive Datalog (DDLog) program} $\Pi$ is a finite set of
disjunctive Datalog rules with a selected \emph{goal relation}
{\mn{goal}} that does not occur in rule bodies and appears only in
non-disjunctive \emph{goal rules}
$\mn{goal}(\vect{x}) \leftarrow R_1(\vect{x}_1) \wedge \cdots \wedge
R_n(\vect{x}_n)$.  The \emph{arity of $\Pi$} is the arity of
the 
\mn{goal} relation; we say that $\Pi$ is \emph{Boolean} if it has
arity zero.  Relation symbols that occur in the head of at least one
rule of $\Pi$ are \emph{intensional (IDB) relations}, and all
remaining relation symbols in $\Pi$ are \emph{extensional (EDB)
  relations}.  Note that, by definition, \mn{goal} is an IDB
relation. When all relations in $\Pi$ are from schema $\Sbf_E$,
then we say that $\Pi$ is \emph{over EDB schema} $\Sbf_E$. The
\emph{IDB schema} of $\Pi$ is the set of all IDB relations in $\Pi$. 

We will sometimes use body atoms of the form $\mn{true}(x)$ that are
vacuously true for all elements of the active domain. This is just
syntactic sugar since any rule with body atom $\mn{true}(x)$ can
equivalently be replaced by a set of rules obtained by replacing
$\mn{true}(x)$ in all possible ways with an atom $R(x_1,\dots,x_n)$
where $R$ is a relation symbol from $\Sbf_E$ and where $x_i=x$ for
some $i$ and all other $x_i$ are fresh variables.

%
A DDLog program is called \emph{monadic} or an \emph{MDDLog program}
if all its IDB relations with the possible exception of \mn{goal} have
arity at most one.  The \emph{size} of a DDLog program $\Pi$ is the
number of symbols needed to write it (where relation symbols and
variable names count one), its \emph{width} is the maximum arity of
non-\mn{goal} IDB relations used in it, and its \emph{diameter} is the
maximum number of variables that occur in a rule in $\Pi$.

A \emph{Datalog rule} is a disjunctive Datalog rule in which the rule
head contains exactly one disjunct. Datalog (DLog) programs and
monadic Datalog (MDLog) programs are then defined in the expected way.
We call a Datalog program an \emph{$(\ell,k)$-Datalog program} if its
width is $\ell$ and its diameter is $k$.

For $\Pi$ an $n$-ary DDLog program over schema $\Sbf_E$, an
$\Sbf_E$-instance $I$, and $a_1, \ldots, a_n \in \adom(I)$, we write
$I \models \Pi(a_1, \ldots, a_n)$ if
$\Pi \cup I \models \mn{goal}(a_1, \ldots , a_n)$ where the variables
in all rules of $\Pi$ are universally quantified and thus $\Pi$ is a
set of first-order (FO) sentences; please see
\cite{Abiteboul:1995:FDL:551350,DBLP:journals/tods/EiterGM97} for more
information on the semantics of (disjunctive) Datalog. A query
$q(\xbf)$ over $\Sbf_E$ of arity $n$ is
\begin{itemize}

\item \emph{sound for} $\Pi$ if for all $\Sbf_E$-instances $I$ and
  $\abf \subseteq \adom(I)$, $I \models q(\abf)$ implies $I \models
  \Pi(\abf)$;

\item \emph{complete for} $\Pi$ if for all $\Sbf_E$-instances $I$ and
  $\abf \subseteq \adom(I)$, $I \models \Pi(\abf)$ implies $I \models
  q(\abf)$;

\item a \emph{rewriting of} $\Pi$ if it is sound for $\Pi$ and
  complete for $\Pi$.

\end{itemize}
Note that Boolean programs are also covered by the above definitions.
To additionally specify the syntactic shape of $q(\xbf)$, we speak of
a UCQ-rewriting, an MDLog-rewriting, and so on.  An
\emph{FO-rewriting} takes the form of an FO-query that uses only
relation symbols from the relevant EDB schema and possibly equality,
but neither constants nor function symbols.  We say that an MDDLog
program $\Pi$ is \emph{FO-rewritable} if there is an FO-rewriting of
$\Pi$, and likewise for \emph{UCQ-rewritability} and for
\emph{MDLog-rewritability}. Since a generalized CSP defined by a set
of templates $S$ can be viewed as a Boolean query, we can also speak
of a query to be sound and complete for respectively a rewriting of
coCSP$(S)$. The definitions are as expected, paralleling the ones above.

\smallskip

It was shown in \cite{DBLP:journals/tods/BienvenuCLW14} that the
complement of an MMSNP sentence can be translated into an equivalent
Boolean MDDLog program in polynomial time and vice versa; moreover,
the transformations preserve diameter and all other parameters
relevant for this paper. From now on, we will thus not explicitly
distinguish between Boolean MDDLog and (complemented) MMSNP.

\section{MDDLog, Simple MDDLog and CSP}
\label{sec:shape}

Feder and Vardi show how to translate an MMSNP sentence into (the
complemen of) a generalized CSP that has the same complexity up to
polynomial time reductions \cite{DBLP:journals/siamcomp/FederV98}. The
resulting CSP has a different schema than the original MMSNP sentence
and is thus not equivalent to it. We are going to make use of this
translation to reduce rewritability problems for MDDLog to the
corresponding problems for CSPs.  Consequently, our main interest is
in the precise semantic relationship between the MMSNP sentence and
the constructed CSP, rather than in their complexity. In this section,
we sum up the properties of the results obtained in
\cite{DBLP:journals/siamcomp/FederV98} that are relevant for us. These
properties are all we need in later sections, that is, we do not need
to go further into the details of the translation. For the reader's
convenience and information, we still describe the translations in
full detail in Appendix~\ref{app:translate}; these are based on the
presentation given in \cite{KR-submitted}, which is more detailed than
the original presentation in \cite{DBLP:journals/siamcomp/FederV98}.

Let $\Sbf_E$ be a schema. A schema $\Sbf'_E$ is a
\emph{$k$-aggregation schema} for $\Sbf_E$ if 
%
%
its relations have the form $R_{q(\xbf)}$ where $q(\xbf)$ is a CQ over
$\Sbf_E$ without quantified variables and the arity of $R_{q(\xbf)}$
is identical to the length of $\xbf$, which is at most $k$.
%
%
The generalized CSP to be constructed
later makes use of a schema of this form. What is important at this
point is that there are natural translations of instances between the
two schemas.  To make this precise, let $I$ be an
$\Sbf_E$-instance. The \emph{corresponding $\Sbf'_E$-instance} $I'$
consists of all facts $R_{q(\xbf)}(\abf)$ such that $I \models
q(\abf)$. Conversely, let $I'$ be an $\Sbf'_E$-instance. The
\emph{corresponding $\Sbf_E$-instance} $I$ consists of all facts
$S(\bbf)$ such that $R_{q(\xbf)}(\abf) \in I'$ and $S(\bbf)$ is a conjunct of
$q(\abf)$.

\begin{figure}[t]
\begin{center}
\begin{tabular}{cc|cc|cc}

$\Sbf'_E$-instance $I'$
&&
$\Sbf_E$-instance $I$
&&
$\Sbf'_E$-instance $I''$
\\

\begin{tikzpicture}[auto, scale=1.5]
\node at (0, 2.5) {};
\draw [rounded corners,  fill=gray!20,  draw opacity=1] (0, -0.25)--(0.7, 1.15)--(-0.7, 1.15)--cycle;
\draw [rounded corners, fill=gray!20,  draw opacity=1] (-1.35, 2.25)--(0.25, 2.25)--(-0.5, 0.75)--cycle;
\draw [rounded corners,  fill=gray!20,  draw opacity=1] (-0.25, 2.25)--(1.35, 2.25)--(0.5, 0.75)--cycle;

\node at (0, 0) {$b$};
\node at (-0.5, 1.05) {$a'$};
\node at (0.5, 1.05) {$b'$};
\node at (1.1, 2.1) {$c$};
\node at (0, 2.1) {$c'$};
\node at (-1.1, 2.1) {$a$};

\node(hl1) at (-0.6, 1.6) {$R_q$};
\node(hl2) at (0.6, 1.6) {$R_q$};
\node(ml1) at (0, 0.5) {$R_q$};

\end{tikzpicture} 
&&
\begin{tikzpicture}[auto, scale=1.5]
\node at (0, -0.3) {};

\node(la1) at (0, -0.2) {$b$};
\node(ma1) at (-0.5, 1) {$a'$};
\node(ma2) at (0.5, 1) {$b'$};
\node(ha3) at (1, 2) {$c$};
\node(ha2) at (0, 2) {$c'$};
\node(ha1) at (-1, 2) {$a$};

\node(hl1) at (-0.5, 2.15) {$r$};
\node(hl2) at (0.5, 2.15) {$r$};
\node(ml1) at (-0.9, 1.5) {$r$};
\node(ml2) at (-0.4, 1.5) {$r$};
\node(ml3) at (0.15, 1.5) {$r$};
\node(ml4) at (0.6, 1.5) {$r$};
\node(ll1) at (-0.4, 0.5) {$r$};
\node(ll2) at (0.4, 0.5) {$r$};
\node(ll2) at (0, 1.15) {$r$};

\draw [->] (la1)  to (ma2);
\draw [->] (ma2)  to (ma1);
\draw [->] (ma1)  to (la1);
\draw [->] (ma1)  to (ha2);
\draw [->] (ha2)  to (ha1);
\draw [->] (ha1)  to (ma1);
\draw [->] (ma2)  to (ha3);
\draw [->] (ha3)  to (ha2);
\draw [->] (ha2)  to (ma2);
\end{tikzpicture}
&&

\begin{tikzpicture}[auto, scale=1.5]
\node at (0, 2.5) {};
\draw [rounded corners,  fill=gray!20,  draw opacity=1] (0, -0.25)--(0.82, 1.25)--(-0.82, 1.25)--cycle;
\draw [rounded corners, fill=gray!20,  draw opacity=1] (-1.35, 2.25)--(0.25, 2.25)--(-0.6, 0.8)--cycle;
\draw [rounded corners,  fill=gray!20,  draw opacity=1] (0, 2.25)--(1.35, 2.25)--(0.6, 0.8)--cycle;
\draw [rounded corners, fill=gray!20,  draw opacity=1] (0, 2.3)--(-0.67, 0.85)--(0.67, 0.85)--cycle;

\node at (0, 0) {$b$};
\node at (-0.5, 1) {$a'$};
\node at (0.5, 1) {$b'$};
\node at (1.1, 2.1) {$c$};
\node at (-0.02, 2.05) {$c'$};
\node at (-1.1, 2.1) {$a$};

\node(hl1) at (-0.6, 1.6) {$R_q$};
\node(hl2) at (0.6, 1.6) {$R_q$};
\node(ml1) at (0, 0.5) {$R_q$};
\node(bla) at (0, 1.3) {$R_q$};

\end{tikzpicture}

\end{tabular}
\caption{Translating an $\Sbf'_E$-instance into an $\Sbf_E$-instance and back.}
\label{fig:transl} 
\end{center}
\end{figure}

\begin{exa}
\label{ex:veryfirst}
 Let $\Sbf_E = \{ r\}$ with $r$ a binary relation symbol, 
\[
\begin{array}{r@{\;}c@{\;}l}
  q(x_1,x_2,x_3) &=& r(x_1,x_2) \wedge r(x_2,x_3) \wedge r(x_3,x_1), 
\end{array}
\]
and
let $\Sbf'_E$ consist of
$R_{q(\xbf)}$ where $\xbf=(x_1,x_2,x_3)$ for brevity. Take the $\Sbf'_E$-instance $I'$ defined
by
\[
  R_{q(\xbf)}(a,a',c'), R_{q(\xbf)}(b,b',a'), R_{q(\xbf)}(c,c',b').
\]
The corresponding $\Sbf_E$-instance $I$ is
\[
   r(a,a'),r(a',c'),r(c',a),r(b,b'),r(b',a'),r(a',b),r(c,c'),r(c',b),r(b',c).
\]
Note that when we transition from $\Sbf_E$ back to $\Sbf'_E$ and take
the $\Sbf'_E$-instance $I''$ corresponding to $I$, we do \emph{not}
obtain $I'$, but rather a strict superset that contains additional
facts such as $R_{q(\xbf)}(c',b',a')$. This is illustrated in Figure
\ref{fig:transl} which shows the instances $I$, $I'$, and a subset
of $I''$ that contains all facts from $I$ plus one additional fact.
\end{exa}
The translation in \cite{DBLP:journals/siamcomp/FederV98} consists of
two steps. We describe them here using Boolean MDDLog instead of
(complemented) MMSNP. The first step is to transform the given Boolean
MDDLog program $\Pi$ into a Boolean MDDLog program $\Pi_S$ over a
suitable aggregation schema $\Sbf'_E$ such that $\Pi_S$ is of a
restricted syntactic form called \emph{simple}. In the second step,
one transforms $\Pi_S$ into a generalized CSP whose complement is
equivalent to~$\Pi_S$.

We start with summing up the important aspects of the first step. A
Boolean MDDLog program $\Pi_S$ is \emph{simple} if it satisfies the
following conditions:
\begin{enumerate}

\item every rule in $\Pi_S$ contains at most one EDB atom and this atom contains all variables of the rule body, each variable exactly once; 

\item rules without an EDB atom contain at most a single variable. 

\end{enumerate} 
Now, the first step achieves the following.
\begin{thmC}[\cite{DBLP:journals/siamcomp/FederV98}]
\label{th:mmsnpToSimple}   
Given a Boolean MDDLog program $\Pi$ over EDB schema $\Sbf_E$ of
diameter $k$, one can construct a simple Boolean MDDLog program
$\Pi_S$ over a $k$-aggregation EDB schema $\Sbf'_E$ for $\Sbf_E$ and
IDB schema $\Sbf'_I$ such that
\begin{enumerate}

  \item If $I$ is an $\Sbf_E$-instance and $I'$ the corresponding
    $\Sbf'_E$-instance, then $I \models \Pi$ iff $I' \models \Pi_S$;

  \item If $I'$ is an $\Sbf'_E$-instance and $I$ the corresponding
    $\Sbf_E$-instance, then 
    \begin{enumerate}

    \item $I' \models \Pi_S$ implies $I \models \Pi$;

    \item $I \models \Pi$ implies $I' \models \Pi_S$ if
      the girth of $I'$ exceeds $k$.

    \end{enumerate}

\end{enumerate}
If $\Pi$ is of size $n$, then the size of $\Pi_S$ and the cardinality
of $\Sbf'_E \cup \Sbf'_I$ are bounded by $2^{p(k \cdot \mn{log}n)}$,
where $p$ is a polynomial.
%
%
%
%
%
%
%
  The construction takes time polynomial
  in the size of~$\Pi_S$.
\end{thmC}
The translation underlying Theorem~\ref{th:mmsnpToSimple} consists of
three steps itself: first saturate $\Pi$ by adding all rules that can
be obtained as a \emph{contraction} of a rule in $\Pi$, that is, by
identifying variables in the rule body and head in a consistent way.
Then rewrite $\Pi$ in an equivalence-preserving way so that all rule
bodies are biconnected, introducing fresh unary and nullary IDB
relations as needed. And finally replace the conjunction $q(\xbf)$ of
all EDB atoms in each rule body with a single EDB atom
$R_{q(\xbf)}(\xbf)$, additionally taking care of interactions between
the new EDB relations that arise e.g.\ when we have two relations
$R_{q(\xbf)}$ and $R_{p(\xbf)}$ such that $q(\xbf)$ is contained in
$p(\xbf)$ in the sense of query containment. Details are in
Appendix~\ref{app:step1tosimple}.

The following theorem summarizes the second step of the translation
of Boolean MDDLog into a generalized CSP. 
\begin{thmC}[\cite{DBLP:journals/siamcomp/FederV98}]
\label{th:simpleToCsp}
Let $\Pi$ be a simple Boolean MDDLog program over EDB schema $\Sbf_E$
and with IDB schema $\Sbf_I$, $m$ the maximum arity of relations in
$\Sbf_E$. Then there exists a set of templates $S_\Pi$ over $\Sbf_E$
such that
\begin{enumerate}
\item  $\Pi$ is equivalent to coCSP$(S_\Pi)$;
\item $|S_\Pi| \leq 2^{|\Sbf_I|}$ and  $|T| \leq |\Sbf_E| \cdot 2^{m|\Sbf_I|}$ for
  each $T \in S_\Pi$;
\end{enumerate}
The construction takes time polynomial in $\sum_{T \in S_\Pi} |T|$. 
\end{thmC}
We again sketch the idea underlying the proof of the theorem. The
desired set of templates $S_\Pi$ contains one template for every
0-type, that is, for every set of nullary IDB relations in $\Pi$ that
does not contain $\mn{goal}()$ and that satisfies all rules in $\Pi$
which use only nullary IDBs. Each template contains one constant $c_M$
for every 1-type $M$, that is, for every set $M$ of unary IDBs that
agrees on nullary IDBs with the 0-type for which the template was
constructed and that satisfy all rules in $\Pi$ which use only IDB
relations that are at most unary. One then interprets all EDB
relations in a maximal way so that all rules in $\Pi$ are satisfied.
The fact that $\Pi$ is simple implies that no choices arise, that
is, there is only one maximal interpretation of each EDB relation and
the interpretations of different such relations do not interact.
Details are given in Appendix~\ref{app:step2tosimple}.

\section{FO- and MDLog-Rewritability of Boolean MDDLog Programs}
\label{sec:dec}

We exploit the translations described in the previous section and the
known results that FO-rewritability of CSPs and MDLog-rewritability of
coCSPs are decidable to obtain analogous results for Boolean MDDLog,
and thus also for MMSNP. In the case of FO-rewritability, we obtain
tight 2{\sc NExpTime} complexity bounds. For MDLog-rewritability, the
exact complexity remains open (as in the CSP case), between 2{\sc
  NExpTime} and 3{\sc ExpTime}.

We start with observing that FO-rewritability and
MDLog-rewrita\-bility are more closely related than one might think at
first glance. Recall that, by Rossman's homomorphism preservation
theorem \cite{RossmanHomPresT08}, a first-order formula is preserved
under homomorphisms on finite structures if and only if it is
equivalent to a UCQ. While every MDLog-rewriting can be viewed as an
infinitary UCQ-rewriting, Rossman's result implies that
FO-rewritability of a Boolean MDDLog program coincides with (finitary)
UCQ-rewritability. The latter is true also in the non-Boolean case.
\begin{prop}
\label{prop:ross}
  Let $\Pi$ be an MDDLog program. Then $\Pi$ is FO-rewritable iff
  $\Pi$ is UCQ-rewritable.
\end{prop}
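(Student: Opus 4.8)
The plan is to prove the two directions separately, with the nontrivial content living entirely in the ``UCQ-rewritable implies FO-rewritable'' implication; the converse is the substantive one and relies on Rossman's homomorphism preservation theorem. First I would dispose of the easy direction: if $\Pi$ is UCQ-rewritable, then since every UCQ is an FO query, $\Pi$ is trivially FO-rewritable. So the work is all in showing that FO-rewritability forces UCQ-rewritability.

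For the hard direction, suppose $\Phi$ is an FO-rewriting of $\Pi$, so that for every $\Sbf_E$-instance $I$ and every $\abf \subseteq \adom(I)$ we have $I \models \Phi(\abf)$ iff $I \models \Pi(\abf)$. The key observation I would exploit is that the query defined by $\Pi$ is preserved under homomorphisms in the following sense: if $I \models \Pi(\abf)$ and $h \colon I \to J$ is a homomorphism, then $J \models \Pi(h(\abf))$. This is because a Datalog (even disjunctive Datalog) program computes its goal facts by a monotone inflationary process whose derivations are preserved by homomorphisms---any sequence of rule applications witnessing $\mn{goal}(\abf)$ in $\Pi \cup I$ maps forward along $h$ to a corresponding derivation witnessing $\mn{goal}(h(\abf))$ in $\Pi \cup J$. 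I would state and briefly justify this homomorphism-closure property for MDDLog, treating the answer tuple as a tuple of distinguished constants so that the statement makes sense for non-Boolean $\Pi$ (here one works with pointed instances $(I,\abf)$ and homomorphisms that respect the distinguished elements).

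Given homomorphism-closure, Rossman's theorem applies. Rossman's homomorphism preservation theorem states that an FO sentence (or formula, in the pointed setting) that is preserved under homomorphisms on all finite structures is equivalent, over finite structures, to an existential positive formula, i.e.\ to a UCQ. Since $\Phi$ is equivalent to the homomorphism-closed query computed by $\Pi$, the formula $\Phi$ is preserved under homomorphisms on finite instances, hence by Rossman's theorem it is equivalent over finite instances to some UCQ $\Psi$. This $\Psi$ then serves as the required UCQ-rewriting of $\Pi$: equivalence of $\Psi$ and $\Phi$ on all finite $\Sbf_E$-instances, together with $\Phi$ being a rewriting of $\Pi$, yields that $\Psi$ is sound and complete for $\Pi$.

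The main obstacle is the non-Boolean case, where the statement is asserted to hold as well. The plan is to phrase everything in terms of instances with distinguished constants $\abf$, i.e.\ to view $\Pi$ with its answer variables as defining a query on pointed structures and to apply the version of Rossman's theorem that handles formulas with free variables (equivalently, one may expand the signature by constant symbols naming the answer positions). I would need to check that homomorphism-closure is stated correctly in this pointed setting---homomorphisms must fix the distinguished elements---and that the EDB-only restriction on the syntactic shape of rewritings (no constants, no function symbols, only $\Sbf_E$ and equality) is compatible with the output $\Psi$ of Rossman's theorem; since UCQs as defined here may use equality and only $\Sbf_E$-symbols, this is exactly the admissible format and no further massaging is required.
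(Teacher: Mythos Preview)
Your proposal is correct and follows essentially the same route as the paper: invoke homomorphism preservation of (monadic) disjunctive Datalog, then apply Rossman's theorem, handling the non-Boolean case via constants/pointed structures. One cosmetic slip: your opening sentence says the nontrivial content lives in the ``UCQ-rewritable implies FO-rewritable'' implication, then immediately (and correctly) calls the converse the substantive one and disposes of UCQ $\Rightarrow$ FO as trivial---fix that first sentence before submitting.
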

\begin{proof}\
  It is well known and easy to show that truth of disjunctive Datalog
  programs is preserved under homomorphisms. Thus, the proposition
  immediately follows from Rossman's theorem in the Boolean case.  For
  the non-Boolean case, we observe that Rossman establishes his result
  also in the presence of constants. Let $\Pi$ be an MDDLog program
  and $\varphi(\xbf)$ a rewriting of $\Pi$. We can apply Rossman's
  result to $\varphi(\abf)$, where \abf is a tuple of constants of
  the same length as \xbf, obtaining a UCQ $q(\abf)$ equivalent to
  $\varphi(\abf)$. Let $q(\xbf)$ be obtained from $q(\abf)$ by
  replacing the constants in \abf with the variables from $\xbf$. 
  It can be verified that $q(\xbf)$ is a rewriting of $\Pi$.
\end{proof}
%
%
For utilizing the translation of Boolean MDDLog programs to generalized
CSPs in the intended way, the interesting aspect is to deal with the
translation of a Boolean MDDLog program $\Pi$ into a simple program
$\Pi_S$ stated in Theorem~\ref{th:mmsnpToSimple}, since it is not
equivalence preserving. The following proposition relates rewritings
of $\Pi$ to rewritings of $\Pi_S$. It also applies to
Datalog-rewritings, which we will make use of in Section~\ref{sect:dlog-rewr}.
\begin{lem}
\label{prop:rewrCorresp} 
Let $\Pi$ be a Boolean MDDLog program of diameter $k$, $\Pi_S$ as in
Theorem~\ref{th:mmsnpToSimple}, and $\Qmc \in \{
\text{UCQ},\text{MDLog},\text{DLog}\}$. Then
 \begin{enumerate} 
 \item every \Qmc-rewriting of $\Pi_S$ can effectively be converted into a \Qmc-rewriting of~$\Pi$;
 \item every \Qmc-rewriting of $\Pi$ can effectively be converted into
   a \Qmc-rewriting of $\Pi_S$ that is (i)~sound on instances of girth
   exceeding $k$ and (ii)~complete.
\end{enumerate}
\end{lem}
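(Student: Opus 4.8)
The plan is to read the two instance translations of Section~\ref{sec:shape} as \emph{query interpretations} and to transport rewritings along them by syntactic substitution. Write $\tau$ for the map sending an $\Sbf_E$-instance $I$ to its corresponding $\Sbf'_E$-instance and $\sigma$ for the map sending an $\Sbf'_E$-instance $I'$ to its corresponding $\Sbf_E$-instance. By the very definitions of these maps, each relation $R_{q(\xbf)}\in\Sbf'_E$ is defined over $\Sbf_E$ by the quantifier-free CQ $q(\xbf)$, and each $S\in\Sbf_E$ is defined over $\Sbf'_E$ by the UCQ $\delta_S$ that collects, over all $R_{q(\xbf)}$ and all occurrences of an $S$-atom $S(x_{i_1},\dots,x_{i_r})$ in $q$, the CQ $\exists\xbf'\,R_{q(\xbf)}(\xbf)$ whose free variables are $x_{i_1},\dots,x_{i_r}$. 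For a query $\psi$ over $\Sbf'_E$ I let $\psi^{\tau}$ be the result of replacing every atom $R_{q(\xbf)}(\vect{u})$ by the conjunction $q(\vect{u})$, and for a query $\psi$ over $\Sbf_E$ I let $\psi^{\sigma}$ be the result of replacing every EDB atom $S(\vect{u})$ by $\delta_S(\vect{u})$. The routine but essential semantic fact underlying both items is that substitution computes the interpretation: $I\models\psi^{\tau}$ iff $\tau(I)\models\psi$, and $I'\models\psi^{\sigma}$ iff $\sigma(I')\models\psi$. For UCQs this is immediate; for DLog and MDLog it follows by induction on the fixpoint, since matching an EDB atom of $\psi$ over the interpreted instance agrees with matching its defining query over the base instance.

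For item~1, given a $\Qmc$-rewriting $\psi$ of $\Pi_S$ I take $\psi^{\tau}$. Because every $q(\xbf)$ of the aggregation schema is quantifier-free, the substitution introduces no new variables---an atom $R_{q(\xbf)}(\vect{u})$ becomes a conjunction over the same variables $\vect{u}$---so $\psi^{\tau}$ is again a UCQ (resp.\ a DLog or MDLog program, only EDB atoms being touched and all IDB arities preserved). Soundness and completeness then follow from the full biconditional of Theorem~\ref{th:mmsnpToSimple}(1) via the chain $I\models\Pi$ iff $\tau(I)\models\Pi_S$ iff $\tau(I)\models\psi$ iff $I\models\psi^{\tau}$, using in turn Theorem~\ref{th:mmsnpToSimple}(1), that $\psi$ is a rewriting of $\Pi_S$, and the semantic fact above.

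For item~2, given a $\Qmc$-rewriting $\psi$ of $\Pi$ I take $\psi^{\sigma}$. Since each $\delta_S$ is a UCQ, a UCQ $\psi$ yields a UCQ after distributing conjunctions over disjunctions, whereas a DLog or MDLog program yields one whose rules are split according to the chosen disjunct of each substituted $\delta_S$; the variables $\xbf'$ become fresh body variables, which is harmless as item~2 imposes no diameter bound, and IDB arities are again untouched so monadicity survives. Completeness of $\psi^{\sigma}$ for $\Pi_S$ is unconditional: if $I'\models\Pi_S$ then $\sigma(I')\models\Pi$ by Theorem~\ref{th:mmsnpToSimple}(2a), hence $\sigma(I')\models\psi$ as $\psi$ is complete for $\Pi$, hence $I'\models\psi^{\sigma}$. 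For soundness only the conditional direction is available: if $I'$ has girth exceeding $k$ and $I'\models\psi^{\sigma}$, then $\sigma(I')\models\psi$, so $\sigma(I')\models\Pi$ by soundness of $\psi$, and finally $I'\models\Pi_S$ by Theorem~\ref{th:mmsnpToSimple}(2b), whose hypothesis is exactly the girth condition.

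The main obstacle is not the semantic bookkeeping, which is symmetric, but the asymmetry of Theorem~\ref{th:mmsnpToSimple}(2): the implication $I'\models\Pi_S\Rightarrow\sigma(I')\models\Pi$ holds on all instances, while its converse needs girth exceeding $k$. This is exactly what forces the weakening to ``sound on instances of girth exceeding~$k$'' in item~2, and explains why item~1, resting on the full biconditional~(1), yields an unconditional rewriting. The only other point demanding care is verifying that substituting the UCQs $\delta_S$ stays inside the class $\Qmc$---in particular that multiplying out disjuncts keeps a monadic program monadic---and, for $\tau$, that the quantifier-free CQs add no variables, so that the syntactic shape of the rewriting is preserved as claimed.
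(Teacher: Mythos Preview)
Your proof is correct. For item~1 it coincides with the paper's argument. For item~2 you take a different and somewhat cleaner route: the paper constructs the $\Sbf'_E$-rewriting explicitly by (i)~identifying variables in each CQ of the given rewriting so that the witnessing homomorphism becomes injective, (ii)~choosing a partition $q_1(\xbf_1),\dots,q_n(\xbf_n)$ of its atoms, and (iii)~covering each block by a single aggregation atom $R_{p(\zbf)}(\ybf)$ with $q_i(\xbf_i)\subseteq p[\ybf/\zbf]$; it then verifies completeness by chasing the homomorphism through this construction. You instead observe that $\sigma$ is computed by a quantifier-free UCQ interpretation $(\delta_S)_{S\in\Sbf_E}$, substitute once, and reduce both soundness-on-high-girth and completeness to the single semantic equivalence $I'\models\psi^{\sigma}\Leftrightarrow\sigma(I')\models\psi$. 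Your $\psi^{\sigma}$ corresponds to the paper's construction restricted to singleton partitions and no preliminary variable identification; the interpretation lemma shows this already suffices. The paper's more general partition is not exploited elsewhere for this lemma, so your argument is simply shorter and handles all three target languages uniformly.

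One small point you gloss over: when an $S$-atom in some $q$ of the aggregation schema repeats a variable (say $S(x,x,\dots)$), the corresponding disjunct of $\delta_S$ has fewer than $r$ distinct free variables and thus tacitly imposes an equality between argument positions. This is harmless since the paper admits equality in CQs, and for (M)DLog the equalities can be eliminated by identifying body variables after substitution; but it is worth being explicit, since otherwise the phrase ``whose free variables are $x_{i_1},\dots,x_{i_r}$'' is ambiguous.
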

\begin{proof}\ Let $\Sbf_E$ and $\Sbf'_E$ be the EDB schema of $\Pi$
  and of $\Pi_S$, respectively. We start with the case
  $\Qmc = \text{UCQ}$.
  
%

  For Point~1, let $q_{\Pi_S}$ be a UCQ-rewriting of
  $\Pi_S$.  Let $q_\Pi$ be the UCQ obtained from $q_{\Pi_S}$ by
  replacing every atom $R_{q(\xbf)}(\ybf)$ with $q[\ybf/\xbf]$, that
   is, with the result of replacing the variables \xbf in $q(\xbf)$
   with the variables~$\ybf$ (which may lead to identifications).
  We
  show that $q_\Pi$ is as required. Let $I$ be an $\Sbf_E$-instance
  and $I'$ 
  the corresponding $\Sbf'_E$-instance. Then we have $I \models \Pi$
  iff $I' \models \Pi_S$ (by Point~1 of
  Theorem~\ref{th:mmsnpToSimple}) iff $I' \models q_{\Pi_S}$ (by
  choice of $q_{\Pi_S}$) iff $I \models q_\Pi$ (by construction of
  $I'$ and of $q_\Pi$).
  %
  %
  Let us expand on the latter. 

  First assume that $I' \models q_{\Pi_S}$.  Then there is a CQ $q$ in
  $q_{\Pi_S}$ and a homomorphism $h$ from $q$ to $I'$. By
  construction, $q_\Pi$ contains a CQ $q'$ that is obtained from $q$
  by replacing every atom $R_{q(\xbf)}(\ybf) \in q$ with
  $q[\ybf/\xbf]$. Clearly, for every atom $R_{q(\xbf)}(\ybf) \in q$,
  we must have $R_{q(x)}(h(\ybf)) \in I'$. The construction of $I'$
  yields $q(h(\ybf)) \subseteq I$. Consequently, $h$ is also a homomorphism
  from $q'$ to $I$.  Conversely, assume that there is a CQ $q'$ in
  $q_{\Pi}$ and a homomorphism $h$ from $q'$ to $I$. Then there is a
  CQ $q$ in $q_{\Pi_S}$ from which $q'$ was obtained by the described
  replacement operation. For every atom $R_{q(\xbf)}(\ybf) \in q$, we
  must have $q(h(\ybf)) \subseteq I$. We obtain $R_{q(\xbf)}(h(\ybf))
  \in q$ and thus $h$ is a homomorphism from $q$ to $I'$.


  For Point~2, let $q_\Pi$ be a UCQ-rewriting of $\Pi$. 
  %
  %
  The UCQ $q_{\Pi_S}$ consists of all CQs that can be obtained as
  follows:
  \begin{enumerate}

  \item choose a CQ $\exists \xbf\, q(\xbf)$ from $q_\Pi$, a
    contraction $\exists \xbf' \, q'(\xbf')$ of $\exists \xbf\,
    q(\xbf)$, and a partition $q_1(\xbf_1),\dots,q_n(\xbf_n)$
    of $q'(\xbf')$; 




  \item for each $i \in \{1,\dots,n\}$, choose a relation
    $R_{p(\zbf)}$ from $\Sbf'_E$ and a tuple $\ybf$ of $|\zbf|$
    variables (repeated occurrences allowed) that are either from
    $\xbf_i$ or do not occur in $\xbf'$ such that
    $q_i(\xbf_i) \subseteq p[\ybf/\zbf]$; then replace $q_i(\xbf_i)$
    in $\exists \xbf' \, q'(\xbf')$ with the single atom $R_{p(\zbf)}(\ybf)$.

  \end{enumerate}
  To establish that $q_{\Pi_S}$ is as desired, we show that for every
  $\Sbf'_E$-instance $I'$
  \begin{enumerate}

  \item[(I)] $I' \models q_{\Pi_S}$ implies $I' \models \Pi_S$ if $I'$
    is of girth exceeding $k$ (soundness) and

  \item[(II)] $I' \models \Pi_S$ implies $I' \models q_{\Pi_S}$ (completeness).

  \end{enumerate}
  Let $I$ be the $\Sbf_E$-instance corresponding to $I'$.

%

  For Point~(I), we observe that $I' \models q_{\Pi_S}$ implies $I
  \models q_\Pi$ (by construction of $q_{\Pi_S}$ and of $I'$) implies
  $I \models \Pi$ (by choice of $q_\Pi$) implies $I' \models \Pi_S$
  (by Point~2b of Theorem~\ref{th:mmsnpToSimple} and if $I'$ is
  of girth exceeding $k$).  
  %
  %
  Let us zoom into the first implication. Assume that $I' \models
  q_{\Pi_S}$. Then there is a CQ $\exists \ubf \, p_0(\ubf)$ in
  $q_{\Pi_S}$ and a homomorphism $h$ from $p_0(\ubf)$
  to $I'$. There must be some CQ $\exists \xbf \, q(\xbf)$ in $q_\Pi$
  from which $\exists \ubf \, p_0(\ubf)$ has been constructed in
  Steps~1 and~2 above. Let $q_1(\xbf_1),\dots,q_n(\xbf_n)$ be as in
  this construction. It suffices to show that $h$ is a homomorphism
  from $q_i(\xbf_i)$ to $I$, for each~$i$. Thus fix a
  $q_i(\xbf_i)$. Then there is a relation $R_{p(\zbf)} \in \Sbf'_E$
  and a tuple \ybf of variables that are either from $\xbf_i$ or do
  not occur in $\xbf'$ such that $q_i(\xbf_i) \subseteq p[\ybf/\zbf]$
  and $R_{p(\zbf)}(\ybf) \in p_0(\ubf)$. Thus $R_{p(\zbf)}(h(\ybf))
  \in I'$. By construction of $I'$, this yields $q_i(h(\xbf_i))
  \subseteq I$ and thus we are done.


  For Point~(II), we have that $I' \models \Pi_S$ implies $I \models
  \Pi$ (by Point~2a of Theorem~\ref{th:mmsnpToSimple}) implies $I
  \models q_\Pi$ (by choice of $q_\Pi$). It thus remains to show that
  $I \models q_\Pi$ implies $I' \models q_{\Pi_S}$. Thus assume that
  there is a CQ $\exists \xbf \, q(\xbf)$ in $q_\Pi$ and a
  homomorphism $h$ from $q(\xbf)$ to $I$. We use $\exists \xbf \,
  q(\xbf)$ and $h$ to guide the choices in Step~1 and Step~2 of the
  construction of CQs in $q_{\Pi_S}$ to exhibit a CQ $p_0$ in
  $q_{\Pi_S}$ such that $p_0 \rightarrow I'$.

  We start with Step~1. As $\exists \xbf' \, q'(\xbf')$, we use the
  contraction of $\exists \xbf \, q(\xbf)$ obtained by identifying
  variables $x$ and $y$ whenever $h(x)=h(y)$. Thus, $h$ is an
  injective homomorphism from $q'(\xbf')$ to $I$. We next need to
  choose a partition of $q'(\xbf')$.  For every fact $R(\abf) \in I$,
  choose a fact $R_{p(\xbf_0)}(\bbf) \in I'$ that $R(\abf)$ was
  obtained from during the construction of $I$ and denote this fact
  with
  $\mu(R(\abf))$. 
  Now let $q_1(\xbf_1),\dots,q_n(\xbf_n)$ be the
  partition of $q'(\xbf')$ obtained by grouping together two atoms
  $R_1(\ybf_1)$ and $R_2(\ybf_2)$ if and only if
  $\mu(R_1(h(\ybf_1)))=\mu(R_2(h(\ybf_2)))$. 
  Let $\mu(q_i)$ denote the (unique) value of $\mu$ for all the atoms
  in~$q_i(\xbf_i)$.

  Step~2 deals with each query $q_i(\xbf_i)$ separately. We choose the
  relation $R_{p(\zbf)}$ from $\mu(q_i)=R_{p(\zbf)}(\bbf)$, which
  clearly is in $\Sbf'_E$. We choose the tuple $\ybf$ of variables
  based on the tuple of individuals \bbf. Let $\bbf=b_1,\dots,b_n$.
  Then the $\ell$-th variable in $\ybf$ is $y$ if $h(y)=b_\ell$ (which is
  well-defined since $h$ is injective) and a fresh variable if there
  is no such $y$. This finishes the guiding process and thus gives
  rise to a query $p_0(\ubf)$ in $q_{\Pi_S}$.

  It remains to argue that $h$ can be extended to a homomorphism $h'$
  from $p(\ubf)$ to $I'$. Take a~$q_i(\xbf_i)$ and consider the
  corresponding atom $R_{p(\zbf)}(\ybf)$ in $p_0$. Then all the facts
  in $q_i(h(\xbf)) \subseteq I$ were obtained from the fact
  $\mu(q_i)=R_{p(\zbf)}(\bbf) \in I'$ during the construction of $I$.
  By construction of $\ybf$ from \bbf, we can extend $h$ to the fresh
  variables in \ybf so that $h(\ybf)=\bbf$ and thus
  $R_{p(\zbf)}(h(\ybf)) \in I'$. Doing this for all $q_i$ yields the
  desired $h'$. 

  Now for the cases $\Qmc \in \{\text{MDLog}, \text{DLog}
  \}$. We treat these cases in one since our construction preserves
  the width of Datalog-rewritings. In fact, this construction is
  very similar to the case $\Qmc = \text{UCQ}$, so we only give a
  sketch.

  For Point~1, let $\Gamma_{\Pi_S}$ be a Datalog-rewriting of $\Pi_S$.
  We construct a Datalog program $\Gamma_\Pi$ of the same width over
  EDB schema $\Sbf_E$ by modifying the EDB part of each rule body in
  the same way in which we had modified the UCQ-rewriting $q_{\Pi_S}$
  in the case $\Qmc = \text{UCQ}$: replace every EDB-atom
  $R_{q(\xbf)}(\ybf)$ with $q[\ybf/\xbf]$. We then have $I \models
  \Pi$ iff $I' \models \Pi_S$ (by Point~1 of
  Theorem~\ref{th:mmsnpToSimple}) iff $I' \models \Gamma_{\Pi_S}$ (by
  choice of $\Gamma_{\Pi_S}$) iff $I \models \Gamma_\Pi$. The latter
  is by construction of $I'$ and of $\Gamma_\Pi$. To prove it in more
  detail, it suffices to show that for every extension $J$ of $I$ to
  the IDB relations in $\Gamma_{\Pi_S}$ with corresponding extension
  $J'$ of $I'$, and every rule body $q$ in $\Gamma_{\Pi_S}$ which was
  translated into a rule body $q'$ in $\Gamma_\Pi$, we have $q
  \rightarrow J$ iff $q' \rightarrow J'$.  The arguments needed are as
  in the case $\Qmc = \text{UCQ}$.

  The proof of Point~2 can be adapted from UCQs to Datalog in an
  analogous way. Let $\Gamma_\Pi$ be a Datalog-rewriting of $\Pi$.  We
  construct a Datalog program $\Gamma_{\Pi_S}$ of the same width over
  EDB schema $\Sbf_{E'}$. The rules in $\Gamma_{\Pi_S}$ are obtained
  by taking a rule
  \[P_0(\xbf_0) \leftarrow P_1(\xbf_1) \wedge \cdots \wedge
  P_n(\xbf_n) \wedge q(\ybf)
  \]
  from $\Gamma_\Pi$, where the $P_i$ are IDB and $q(\ybf)$ is a CQ
  over schema $\Sbf_E$, converting $q(\ybf)$ into a CQ $q'(\ybf')$
  over schema $\Sbf'_E$ in two steps, in the same way in which a CQ
  over $\Sbf_E$ was converted into a CQ over $\Sbf'_E$ in the case
  $\Qmc = \text{UCQ}$, and then including in $\Gamma_{\Pi_S}$ the rule
  \[P_0(\xbf_0) \leftarrow
  P_1(\xbf_1) \wedge \cdots \wedge P_n(\xbf_n) \wedge q'(\ybf').
  \]
The crucial step in the correctness proof is to show that $I \models
\Gamma_\Pi$ implies $I' \models \Gamma_{\Pi_S}$ for any
$\Sbf'_E$-instance $I'$ and corresponding $\Sbf_E$-instance $I$.  The
arguments are again the same as in the case $\Qmc=\text{UCQ}$, the
main difference being that we need to consider extensions of $I$ and
$I'$ to IDB relations from $\Gamma_\Pi$ instead of working with $I$
and $I'$ themselves.
\end{proof}
Point~2 of Lemma~\ref{prop:rewrCorresp} only yields a rewriting of
$\Pi_S$ on $\Sbf'_E$-instances of high girth. We next show that, for
$\Qmc \in \{ \text{UCQ},\text{MDLog} \}$, the existence of a
\Qmc-rewriting on instances of high girth implies the existence of a \Qmc-rewriting
that works on instances of unrestricted girth. Whether the same is
true for $\Qmc = \text{Datalog}$ remains as an open problem.  We need
the following well-known lemma that goes back to Erd\"os and was
adapted to CSPs by Feder and Vardi. Informally, it says that every
instance can be `exploded' into an instance of high girth that behaves
similarly regarding homomorphisms.
\begin{lem} 
\label{lem:erdoes} 
For every instance $I$ and $g,s \geq 0$, there is an instance $I'$
(over the same schema) such that $I' \rightarrow I$, $I'$ has girth
exceeding $g$, and for every instance $T$ of size at
most $s$, we have $I \rightarrow T$ iff $I' \rightarrow T$.
\end{lem}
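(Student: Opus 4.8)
The plan is to prove the lemma by the probabilistic method, reconstructing Erdős's classical high-girth construction in the relational form used by Feder and Vardi. I would realize $I'$ as a \emph{random cover} of $I$: fix a large integer $N$ and let $I'$ have domain $\adom(I) \times [N]$, i.e.\ $N$ copies of each element, with projection $\pi(a,i) = a$. The key design decision is that every fact placed in $I'$ must project under $\pi$ to a fact of $I$; that is, whenever $R((a_1,i_1),\dots,(a_r,i_r)) \in I'$ we require $R(a_1,\dots,a_r) \in I$. This guarantees, for free and independently of all random choices, that $\pi$ is a homomorphism $I' \rightarrow I$, which already settles the easy direction of the template condition: if $I \rightarrow T$ then $I' \rightarrow I \rightarrow T$, so $I' \rightarrow T$.

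The facts of $I'$ are then chosen at random: for each fact $R(a_1,\dots,a_r)$ of $I$ I would add a collection of random ``lifts'' $R((a_1,i_1),\dots,(a_r,i_r))$, keeping the total number of facts linear in $N$ while arranging that the incidence between the fibers $\pi^{-1}(a_\ell)$ behaves like an expander. Two properties then have to be extracted. For \emph{high girth}, a first-moment computation bounds the expected number of cycles of length at most $g$ by a quantity of order $o(N)$ (the exponent depending only on the fixed schema and on $g$, not on $N$); deleting one fact from each short cycle then yields girth exceeding $g$ while removing only a negligible fraction of facts. For the \emph{converse direction}, suppose $h : I' \rightarrow T$ with $|T| \le s$. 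Since each fiber $\pi^{-1}(a)$ has $N$ elements and $|T| \le s$, some value $t_a \in T$ is attained by at least a $1/s$ fraction of the fiber; set $g(a) = t_a$. The expansion property ensures that for every fact $R(a_1,\dots,a_r) \in I$ there is a surviving lift $R((a_1,i_1),\dots,(a_r,i_r)) \in I'$ with each $(a_\ell,i_\ell)$ lying in the popular $t_{a_\ell}$-class; applying the homomorphism $h$ to this fact gives $R(t_{a_1},\dots,t_{a_r}) \in T$, so $g$ is a homomorphism $I \rightarrow T$. As there are only finitely many $T$ of size at most $s$ up to isomorphism, it suffices to secure these events together.

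The hard part is reconciling the two requirements: high girth forces sparsity, while the push-down argument needs enough connectivity that \emph{every} product of large fiber-subsets contains a fact, and this for all of the exponentially many possible popular-class selections simultaneously. A naive union bound over large subsets fails here, which is precisely why a random Erdős--Rényi inclusion is insufficient and an expander-style cover is needed: the expander mixing property certifies that every product of $1/s$-dense subsets meets a fact, uniformly over all such subsets and all $T$ at once, with no per-subset union bound. The remaining task is to tune $N$ and the lift density so that the short-cycle expectation is $o(N)$ while the expansion threshold stays safely below the $1/s$ density guaranteed by pigeonhole, and so that the sublinearly many deletions from the girth step cannot destroy the expansion margin. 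With these parameters in place, the probabilistic method yields a single structure $I'$ simultaneously admitting $\pi : I' \rightarrow I$, having girth exceeding $g$, and satisfying $I' \rightarrow T \iff I \rightarrow T$ for every $T$ of size at most $s$, as required.
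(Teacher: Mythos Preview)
The paper does not give its own proof of this lemma: it is stated as a well-known result (``goes back to Erd\H{o}s and was adapted to CSPs by Feder and Vardi''), with a remark that Kun later derandomized the construction. There is thus no paper proof to compare against; your proposal is a reconstruction of the classical probabilistic argument that the paper simply cites.

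Your outline captures the standard Feder--Vardi scheme correctly: a random cover with domain $\adom(I)\times[N]$ and projection $\pi$ (giving $I'\rightarrow I$ for free), a first-moment bound on short cycles followed by deletion (for girth), and the pigeonhole-plus-density argument (for pushing a homomorphism $I'\rightarrow T$ down to $I\rightarrow T$). One point deserves correction, however. You assert that ``a random Erd\H{o}s--R\'enyi inclusion is insufficient and an expander-style cover is needed,'' and that the union bound over large fiber-subsets fails. In fact Feder and Vardi's original argument \emph{does} use an Erd\H{o}s--R\'enyi style inclusion (each of the $N^r$ potential lifts of a fact included independently with a suitable probability $p$), and the union bound over all $\lceil N/s\rceil$-sized subset selections goes through: with $p$ chosen so that the expected number of lifts per fact is $cN$ for a constant $c$ depending on $s$, $|I|$, and the maximum arity, the failure probability $(1-p)^{(N/s)^r}\approx\exp(-cN/s^r)$ beats the $2^{O(N)}$ count of subset selections, while the expected number of short cycles remains $o(N)$. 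Explicit expanders are what Kun uses for the \emph{derandomization}, not what is needed for the existential statement here. So your plan is sound, but the justification for preferring an ``expander-style cover'' over the straightforward random construction is not accurate; either route works for the lemma as stated.
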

Feder and Vardi additionally show that $I'$ can be constructed by a
randomized polynomial time reduction that was later derandomized by Kun
\cite{DBLP:journals/combinatorica/Kun13}, but here we do not rely on
such computational properties.  Every CQ $q$ can be viewed as an
instance $I_q$ by using the variables as constants and the atoms as
facts. It thus makes sense to speak about tree decompositions of CQs
and about their treewidth, and it is clear what we mean by saying
that a CQ is a tree (that is, has girth $\omega$).
\begin{lem}
  \label{lem:FOgirthdontcare}
Let $S$ be a set of templates over schema $\Sbf_E$, $g \geq 0$, and
$\Qmc \in \{ \text{UCQ}, \text{MDLog} \}$. If coCSP($S$) is
\Qmc-rewritable on instances of girth exceeding~$g$, then it is
\Qmc-rewritable.
\end{lem}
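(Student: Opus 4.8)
The plan is to exploit two structural facts: both the query coCSP$(S)$ and every \Qmc-query are preserved under homomorphisms (if $I \rightarrow J$, then satisfaction transfers from $I$ to $J$), and the given \Qmc-rewriting is already correct on \emph{all} instances of girth exceeding $g$. Write $s$ for the maximal size of a template in $S$. The idea is to first replace the given rewriting by one that is equivalent on instances of high girth but whose \emph{witness patterns}---the CQs underlying a UCQ, respectively the expansions of an MDLog program---are tree-like, and then to show that such a tree-like rewriting is automatically sound and complete on \emph{all} instances. Completeness will follow directly from Lemma~\ref{lem:erdoes}; soundness is where the tree-like shape of the patterns does the work.

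I start with $\Qmc=\text{UCQ}$. Let $q$ be a UCQ-rewriting on instances of girth exceeding $g$, and let $N$ be the maximal number of atoms in a disjunct of $q$. Define $q^{\mathrm{tr}}$ to be the UCQ whose disjuncts are all CQs $t$ that arise as a homomorphic image of some disjunct $p$ of $q$ and that are \emph{trees} (have girth $\omega$); there are only finitely many such $t$. Since $I_p \rightarrow I_t$ for each such $t$, any instance satisfying $t$ also satisfies $p$, so $q^{\mathrm{tr}}$ implies $q$ on all instances. Conversely, on an instance $I'$ of girth exceeding $N$, the image of a disjunct $p$ under a homomorphism $h\colon I_p \rightarrow I'$ is a substructure of $I'$ with at most $N$ facts, hence acyclic by the girth bound; thus whenever $q$ fires on such an $I'$, a tree disjunct of $q^{\mathrm{tr}}$ fires as well. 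Consequently $q^{\mathrm{tr}}$ and $q$ agree on all instances of girth exceeding $\max\{g,N\}$, so $q^{\mathrm{tr}}$ is again a rewriting on instances of high girth, and, crucially, all of its disjuncts are trees.

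The payoff is a very short soundness argument. Every tree disjunct $t$ of $q^{\mathrm{tr}}$, read as its canonical instance $I_t$, has girth $\omega$ (exceeding every finite bound) and trivially satisfies $q^{\mathrm{tr}}$ via the identity homomorphism; since $q^{\mathrm{tr}}$ is correct on instances of girth exceeding $\max\{g,N\}$, this forces $I_t \not\rightarrow S$, so every disjunct of $q^{\mathrm{tr}}$ is \emph{sound}. Hence if $q^{\mathrm{tr}}$ fires on an arbitrary instance $I$ via $t \rightarrow I$ and we had $I \rightarrow S$, then $I_t \rightarrow I \rightarrow S$ would contradict soundness of $t$; thus $q^{\mathrm{tr}}$ is sound on all instances. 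For completeness, take $I \not\rightarrow S$ and apply Lemma~\ref{lem:erdoes} with girth parameter $g$ and size parameter $s$ to obtain $I' \rightarrow I$ of girth exceeding $g$ with $I \rightarrow T \Leftrightarrow I' \rightarrow T$ for every $T$ of size at most $s$; then $I' \not\rightarrow S$, so $I' \models q^{\mathrm{tr}}$ by high-girth correctness, and composing the witnessing tree homomorphism with $I' \rightarrow I$ yields $I \models q^{\mathrm{tr}}$. Thus $q^{\mathrm{tr}}$ is an unconditional UCQ-rewriting.

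For $\Qmc=\text{MDLog}$ I would run the same scheme one level up, using that a monadic Datalog program has tree-shaped derivations: its \emph{expansions} (the CQs obtained by unfolding a derivation of \mn{goal}) have treewidth $(1,k)$, where $k$ is the diameter and the bags are the individual rule bodies. Completeness transfers verbatim: explode $I \not\rightarrow S$ to a high-girth $I' \not\rightarrow S$, let the program fire on $I'$ via an expansion $e$, and compose $e \rightarrow I' \rightarrow I$. The difficulty is soundness: unlike a tree, a treewidth-$(1,k)$ expansion may contain short cycles inside a bag, so its canonical instance need not have high girth, and the self-firing trick of the UCQ case no longer applies. Here I would instead use that the explosion of Lemma~\ref{lem:erdoes} may be taken to be a girth-$(>k)$ \emph{cover} $\pi\colon I' \rightarrow I$: as every bag has at most $k$ elements and the girth exceeds $k$, each bag maps injectively into $I'$, and since neighbouring bags of a treewidth-$(1,k)$ decomposition overlap in a single cut vertex, the whole expansion \emph{lifts} to a homomorphism $e \rightarrow I'$. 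Given $e \rightarrow I$, this yields $e \rightarrow I'$, hence $I' \models \Pi_q$, hence $I' \not\rightarrow S$, and finally $I \not\rightarrow S$ via $\pi$ and homomorphism-preservation of coCSP$(S)$; the program is still monadic of diameter $k$, as required. I expect this \emph{lifting of bounded-treewidth patterns along high-girth covers} to be the main obstacle, and it is exactly the point that breaks for $\Qmc=\text{Datalog}$, where expansions have unbounded treewidth and can no longer be forced into a high-girth instance---matching the fact that the Datalog analogue is left open.
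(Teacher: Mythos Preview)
Your UCQ argument is correct and essentially matches the paper's: both pass to the tree-shaped quotients of the disjuncts, use self-applicability of tree patterns for soundness, and invoke Lemma~\ref{lem:erdoes} for completeness.

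Your MDLog argument, however, has a genuine gap in the soundness direction. You keep the \emph{original} program $\Gamma_g$ and try to lift a treewidth-$(1,k)$ expansion $e\to I$ along a high-girth ``cover'' $\pi\colon I'\to I$. This fails for two reasons. First, Lemma~\ref{lem:erdoes} does not assert that $I'\to I$ is a cover; that is an unjustified strengthening. Second, even granting a covering map, a single bag of $e$ may itself contain a cycle of length at most $k$ (a rule body can be, say, a triangle), and such a cycle cannot be lifted into an instance of girth exceeding $k$. Concretely: let $\Gamma_g$ consist of the single rule $\mn{goal}()\leftarrow r(x,y)\wedge r(y,z)\wedge r(z,x)$, and let $S$ contain a template with a reflexive $r$-loop. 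Then $\Gamma_g$ is vacuously correct on instances of girth exceeding $3$ (both sides are false), yet on a triangle $I$ we have $I\models\Gamma_g$ while $I\to S$. So the original program is \emph{not} unconditionally sound, and no lifting argument can rescue it.

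The paper sidesteps this by \emph{modifying the program}: each rule body is replaced by all of its tree-shaped quotients, yielding a new program $\Gamma$ whose expansions are genuine trees (girth $\omega$). Soundness then proceeds exactly as in your UCQ case: the tree expansion $O$ satisfies $\Gamma$, hence $\Gamma_g$ (since every rule of $\Gamma$ is a specialisation of one in $\Gamma_g$), and $O$ has high girth, so $O\not\to S$, whence $I\not\to S$. Completeness uses that on a high-girth $I'$, any homomorphism from a rule body of $\Gamma_g$ factors through one of its tree quotients, so $\Gamma$ fires whenever $\Gamma_g$ does. The missing idea in your proposal is precisely this passage to tree-shaped rule bodies; your instinct that ``bounded treewidth plus high girth'' should suffice is exactly what breaks, and not only for Datalog.
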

\begin{proof}\ We start with $\Qmc = \text{UCQ}$.  Let $q_g$ be a UCQ
  that defines coCSP$(S)$ on instances of girth exceeding~$g$, and let
  $q$ be the UCQ that consists of all contractions of a CQ in $q_g$
  that are a tree CQ. We show that $q$ defines coCSP$(S)$ on
  unrestricted $\Sbf_E$-instances.

  Let $I$ be an $\Sbf_E$-instance. First assume that $I
  \not\rightarrow S$.  By Lemma~\ref{lem:erdoes}, there is an
  $\Sbf_E$-instance $I'$ of girth exceeding $g$ and also exceeding the
  number of variables in each CQ in $q_g$ and satisfying $I'
  \rightarrow I$ and $I' \not\rightarrow S$. Thus $I' \models q_g$,
  that is, there is a CQ $q'$ in $q_g$ and a homomorphism $h$ from
  $q'$ to $I'$. Let $q''$ be the contraction of $q'$ obtained by
  identifying variables $x$ and $y$ if $h(x)=h(y)$.  Thus, $h$ is an
  injective homomorphism from $q''$ to $I'$. Since the girth of $I'$
  exceeds the number of variables in $q''$, $q''$ must be a
  tree. Consequently, $q''$ is a CQ in $q$ and we have $I' \models
  q$. From $I' \rightarrow I$, we obtain $I \models q$.

  Now assume that $I \models q$. Then, there is a tree CQ $q'$ in
  $ q$ such that $q' \rightarrow I$. When we view $q'$ as an
  $\Sbf_E$-instance $I_{q'}$, then clearly $I_{q'} \models q_g$ and
  $I_{q'}$ has girth exceeding $k$. Thus, $q' \not \rightarrow S$, and
  from $q' \rightarrow I$ we obtain $I \not \rightarrow S$.


  Now for the case $\Qmc = \text{MDLog}$.  Let $\Gamma_g$ be an MDLog
  program that defines coCSP$(S)$ on instances of girth exceeding $g$.
  Let $\Gamma$ be the program obtained from $\Gamma_g$ by replacing
  every rule $P(x) \leftarrow q(\xbf)$ with all rules $P(x) \leftarrow
  q'(\xbf')$ such that $q'(\xbf')$ is a tree CQ that is a contraction
  of $q(\xbf)$. We show that $\Gamma$ is an MDLog-definition of
  coCSP$(S)$ on instances of unrestricted girth.

  Let $I$ be an $\Sbf_E$-instance. First assume that $I
  \not\rightarrow S$.  By Lemma~\ref{lem:erdoes}, there is an
  $\Sbf_E$-instance $I'$ whose girth exceeds $g$ and also exceeds the
  diameter of $\Gamma_g$ and that satisfies $I' \rightarrow I$ and $I'
  \not\rightarrow S$. The latter yields $I' \models \Gamma_g$. It
  remains to show that this implies $I' \models \Gamma$ since with $I'
  \rightarrow I$, this yields $I \models \Gamma$ as required.

  To show that $I' \models \Gamma$ follows from $I' \models \Gamma_g$,
  it suffices to show that all IDB facts derived by $\Gamma_g$
  starting from $I'$ are also derived by $\Gamma$. Thus let $J'$ be an
  extension of $I'$ to the IDBs in $\Gamma_g$. It is enough to show
  that when a single application of a rule from $\Gamma_g$ in $J'$
  yields an IDB atom $P(a)$, then $\Gamma$ can derive the same atom.
  The former is the case only if $\Gamma_g$ contains a rule $P(x)
  \leftarrow q(\xbf)$ such that there is a homomorphism $h$ from
  $q(\xbf)$ to $J'$ with $h(x)=a$. Let $q'(\xbf')$ be the contraction
  of $q(\xbf)$ obtained by identifying variables $x$ and $y$ when
  $h(x)=h(y)$.  Since the girth of $I'$ exceeds the diameter of
  $\Gamma_g$, $q'(\xbf')$ is a tree. Thus, $\Gamma$ contains the rule
  $P(x) \leftarrow q'(\xbf')$ and the application of this rule in $J'$
  enabled by $h$ yields $P(a)$. We have thus shown $I' \models \Gamma$
  and are done.

  Now assume that $I \models \Gamma_g$. Then there is a proof tree for
  $\mn{goal}()$ from $I$ and~$\Gamma_g$, see
  \cite{Abiteboul:1995:FDL:551350} for details. From that tree, we can
  read off an $\Sbf_E$-instance $O$ such that $O \rightarrow I$, $O
  \models \Gamma_g$, and, since $\Gamma_g$ is monadic and only
  comprises rules with tree-shaped bodies, $O$ is a tree. Thus, $O$
  has girth exceeding $g$ and from $O \models \Gamma_g$ we get $O
  \not\rightarrow S$. But with $O \rightarrow I$, this yields $I \not
  \rightarrow S$ as required.
\end{proof}
Putting together Theorems~\ref{th:mmsnpToSimple}
and~\ref{th:simpleToCsp}, Proposition~\ref{prop:ross},
and Lemmas~\ref{prop:rewrCorresp} and~\ref{lem:FOgirthdontcare},
we obtain the following reductions of rewritability of Boolean MDDLog
programs to CSP rewritability.
\begin{prop}
\label{prop:rewrCorrespARGH}  
Every Boolean MDDLog program $\Pi$ can be converted into a set of
templates $S_\Pi$ such that 
\begin{enumerate}
\item $\Pi$ is \Qmc-rewritable iff coCSP$(S_\Pi)$ is \Qmc-rewritable for
  every $\Qmc \in \{ \text{FO}, \text{UCQ}, \text{MDLog} \}$;
\item every \Qmc-rewriting of $\Pi$ can be effectively translated into
  a \Qmc-rewriting of coCSP$(S_\Pi)$ and vice versa, for
  every $\Qmc \in \{ \text{UCQ}, \text{MDLog} \}$.
\item $|S_\Pi| \leq 2^{2^{p(n)}}$ and  $|T| \leq 2^{2^{p(n)}}$ for
  each $T \in S_\Pi$, $n$ the size of $\Pi$ and $p$ a polynomial.
\end{enumerate}
The construction takes time polynomial in $\sum_{T \in S_\Pi} |T|$. 
\end{prop}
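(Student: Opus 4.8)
The plan is to define $S_\Pi$ by composing the two translations of Section~\ref{sec:shape}: first apply Theorem~\ref{th:mmsnpToSimple} to $\Pi$ (of diameter $k$) to obtain a simple Boolean MDDLog program $\Pi_S$ over a $k$-aggregation EDB schema $\Sbf'_E$ and IDB schema $\Sbf'_I$, and then apply Theorem~\ref{th:simpleToCsp} to $\Pi_S$ to obtain the set of templates $S_\Pi$ with $\Pi_S$ equivalent to coCSP$(S_\Pi)$. Because $\Pi_S$ and coCSP$(S_\Pi)$ define the same Boolean query, a query is a \Qmc-rewriting of one exactly when it is a \Qmc-rewriting of the other, so it suffices to relate rewritings of $\Pi$ with rewritings of $\Pi_S$ and then transfer freely across this equivalence.

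For Point~1 with $\Qmc \in \{\text{UCQ}, \text{MDLog}\}$ I would chain the available lemmas. If $\Pi$ is \Qmc-rewritable, then Point~2 of Lemma~\ref{prop:rewrCorresp} yields a \Qmc-rewriting of $\Pi_S$ that is complete and sound on instances of girth exceeding~$k$; read as a rewriting of coCSP$(S_\Pi)$ on such instances, Lemma~\ref{lem:FOgirthdontcare} (applicable precisely because $\Qmc \in \{\text{UCQ},\text{MDLog}\}$) upgrades it to an unrestricted \Qmc-rewriting. Conversely, a \Qmc-rewriting of coCSP$(S_\Pi)$ is a \Qmc-rewriting of $\Pi_S$, which Point~1 of Lemma~\ref{prop:rewrCorresp} converts into a \Qmc-rewriting of~$\Pi$. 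For the FO case I would invoke Proposition~\ref{prop:ross} at both ends, using that FO- and UCQ-rewritability coincide for MDDLog programs and hence also for coCSP$(S_\Pi)$ via its MDDLog presentation $\Pi_S$: $\Pi$ is FO-rewritable iff $\Pi$ is UCQ-rewritable iff coCSP$(S_\Pi)$ is UCQ-rewritable iff coCSP$(S_\Pi)$ is FO-rewritable. Point~2 is obtained by reading exactly this chain constructively: the forward translation composes the effective construction in Point~2 of Lemma~\ref{prop:rewrCorresp} with the explicit passage to tree CQs and tree-shaped rule bodies in the proof of Lemma~\ref{lem:FOgirthdontcare}, and the backward translation is Point~1 of Lemma~\ref{prop:rewrCorresp}; both preserve the syntactic shape.

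For the size bounds in Point~3 I would substitute the bounds of Theorem~\ref{th:mmsnpToSimple} into those of Theorem~\ref{th:simpleToCsp}. Theorem~\ref{th:mmsnpToSimple} bounds $|\Pi_S|$ and $|\Sbf'_E \cup \Sbf'_I|$ by $2^{p_0(k\log n)}$, which is single exponential in~$n$ since $k \leq n$; moreover the maximum arity of relations in the $k$-aggregation schema $\Sbf'_E$ is at most $k \leq n$. Plugging $|\Sbf'_I|$ and this arity into $|S_\Pi| \leq 2^{|\Sbf'_I|}$ and $|T| \leq |\Sbf'_E|\cdot 2^{m|\Sbf'_I|}$ yields the double-exponential bounds $2^{2^{p(n)}}$ for a suitable polynomial~$p$. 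The overall time bound follows because each of the two translations runs in time polynomial in the size of its output and the size of $S_\Pi$ dominates, so the composed construction runs in time polynomial in $\sum_{T \in S_\Pi}|T|$.

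This proposition is essentially a bookkeeping composition of earlier results, so I expect no deep obstacle. The one point requiring care is correctly threading the \emph{girth} side condition: Point~2 of Lemma~\ref{prop:rewrCorresp} only guarantees soundness on instances of high girth, and it is exactly Lemma~\ref{lem:FOgirthdontcare} that removes this restriction---but only for UCQ and MDLog, which is the structural reason the proposition excludes Datalog from Points~1 and~2. The remaining details to verify are that the equivalence of $\Pi_S$ and coCSP$(S_\Pi)$ from Theorem~\ref{th:simpleToCsp} indeed lets one transfer each notion of rewriting between the two, and that Rossman's theorem (Proposition~\ref{prop:ross}) is legitimately applied to coCSP$(S_\Pi)$ through its MDDLog presentation.
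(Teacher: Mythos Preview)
Your proposal is correct and matches the paper's approach exactly: the paper does not give a standalone proof but simply states that the proposition is obtained by ``putting together Theorems~\ref{th:mmsnpToSimple} and~\ref{th:simpleToCsp}, Proposition~\ref{prop:ross}, and Lemmas~\ref{prop:rewrCorresp} and~\ref{lem:FOgirthdontcare},'' and your write-up spells out precisely this composition, including the correct handling of the girth side condition via Lemma~\ref{lem:FOgirthdontcare} and the use of Proposition~\ref{prop:ross} to cover the FO case through UCQ.
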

FO-rewritability of CSPs (and their complements) is {\sc NP}-complete
\cite{DBLP:journals/lmcs/LaroseLT07} and it was observed in
\cite{DBLP:journals/tods/BienvenuCLW14} that the upper bound lifts to
generalized CSPs. 
MDLog-rewritability of coCSPs is {\sc NP}-hard and in 
\ExpTime \cite{Chen2017}. We show in
Appendix~\ref{app:mdlogexp} that also this upper bound lifts to
generalized coCSPs.
Together with
Proposition~\ref{prop:rewrCorrespARGH},
this yields the upper bounds in the following theorem. The
lower bounds are from 
\cite{KR-submitted}. 
\begin{thm}
\label{th:decidBoolean}
  For Boolean MDDLog programs and the complement of MMSNP sentences,
  \begin{enumerate}
  \item FO-rewritability (equivalently: UCQ-rewritability) is
    2\NExpTime-complete;
  \item MDLog-rewritability is in 3\ExpTime (and 2\NExpTime-hard).
  \end{enumerate}
\end{thm}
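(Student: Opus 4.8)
The plan is to combine the machinery that the excerpt has already assembled. By Proposition~\ref{prop:rewrCorrespARGH}, every Boolean MDDLog program $\Pi$ of size $n$ can be converted into a set of templates $S_\Pi$ such that $\Pi$ is $\Qmc$-rewritable iff $\mn{coCSP}(S_\Pi)$ is $\Qmc$-rewritable, for $\Qmc \in \{\text{FO},\text{UCQ},\text{MDLog}\}$; moreover the cardinality $|S_\Pi|$ and the size of each template $T \in S_\Pi$ are bounded by $2^{2^{p(n)}}$ for a polynomial $p$, and the construction runs in time polynomial in $\sum_{T \in S_\Pi} |T|$. The strategy for both upper bounds is therefore to invoke the known CSP complexity results on the object $S_\Pi$ and to carefully bookkeep how the double-exponential blowup of $S_\Pi$ interacts with the complexity of the CSP-rewritability test.

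For Point~1, FO-rewritability of (generalized) coCSPs is \textsc{NP}-complete by \cite{DBLP:journals/lmcs/LaroseLT07}, lifted to the generalized setting in \cite{DBLP:journals/tods/BienvenuCLW14}, and by Proposition~\ref{prop:ross} FO-rewritability coincides with UCQ-rewritability; hence I would first reduce to the question whether $\mn{coCSP}(S_\Pi)$ is FO-rewritable. The \textsc{NP}-algorithm for CSP FO-rewritability runs in time polynomial in $\sum_{T \in S_\Pi}|T|$, which is $2^{2^{p(n)}}$; composing a nondeterministic polynomial-time procedure with an input of double-exponential size yields a nondeterministic procedure running in time $2^{2^{q(n)}}$ for some polynomial $q$, i.e.\ in $2\NExpTime$. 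Combined with the $2\NExpTime$ lower bound from \cite{KR-submitted}, this gives $2\NExpTime$-completeness. For Point~2, MDLog-rewritability of generalized coCSPs is in \ExpTime\ (the lift of the \cite{Chen2017} bound being carried out in Appendix~\ref{app:mdlogexp}); running an \ExpTime\ procedure on an input of size $2^{2^{p(n)}}$ gives a bound of $2^{(2^{2^{p(n)}})^c} = 2^{2^{2^{q(n)}}}$, which is $3\ExpTime$, and the same $2\NExpTime$ lower bound applies.

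The main subtlety I expect is keeping the complexity accounting honest across the reduction, and this is where I would be most careful. The key observation is that the \textsc{NP} bound for CSP FO-rewritability is \emph{polynomial in the total input size} $\sum_{T}|T|$, not merely polynomial in $|S_\Pi|$ or in the schema: since both $|S_\Pi|$ and each $|T|$ are doubly exponential in $n$, the total input is still doubly exponential, and a polynomial of a double exponential remains a double exponential. One must also verify that the reduction itself does not exceed the target bound: by Proposition~\ref{prop:rewrCorrespARGH} the construction of $S_\Pi$ takes time polynomial in $\sum_{T}|T|$, i.e.\ at most doubly exponential, so it is subsumed by the $2\NExpTime$ (respectively $3\ExpTime$) budget and does not dominate. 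A secondary point worth flagging, though it requires no extra argument here, is that the equivalence "FO-rewritable iff UCQ-rewritable" in the theorem statement is exactly Proposition~\ref{prop:ross} and can be cited directly.

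Finally, since the excerpt has already established (in the Preliminaries) that Boolean MDDLog and the complement of MMSNP are interconvertible in polynomial time while preserving diameter and all other relevant parameters, every bound obtained for Boolean MDDLog transfers verbatim to the complement of MMSNP sentences; no separate argument is needed for that half of the statement. The only genuinely new content of the proof is thus the composition-of-complexities calculation above, and I would present it compactly, emphasizing that the polynomial-in-input-size nature of the CSP algorithms is what prevents the bound from escaping $2\NExpTime$ in the FO case and $3\ExpTime$ in the MDLog case.
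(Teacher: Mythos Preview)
Your proposal is correct and matches the paper's own argument essentially line by line: reduce via Proposition~\ref{prop:rewrCorrespARGH} to rewritability of a generalized coCSP of doubly exponential size, then apply the known \textsc{NP} (respectively \textsc{ExpTime}) bounds for FO- and MDLog-rewritability of generalized coCSPs, and import the lower bounds from \cite{KR-submitted}. Your explicit complexity bookkeeping is in fact more detailed than what the paper spells out, but the underlying approach is identical.
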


\section{Shape of Rewritings, Obstructions, Explosion}
\label{sect:shapeObstrExplosion}

In the FO case, it is possible to extract from the approach in the
previous section an algorithm that computes actual rewritings, if they
exist. However, that algorithm is hardly practical.  An important
first step towards the design of more practical algorithms that
compute rewritings (in an exact or in an approximative way) is to
analyze the shape of rewritings. In fact, both FO- and
MDLog-rewritings of coCSPs are known to be of a rather restricted
shape, far from exploiting the full expressive power of the target
languages. In this section, we establish corresponding results for
Boolean MDDLog. This topic is closely related to the theory of
obstructions, so we also establish connections between the
rewritability of MMSNP sentences and natural obstruction
sets. Finally, we observe an MMSNP counterpart of
Lemma~\ref{lem:erdoes}, the fundamental `explosion' lemma for CSPs.

The following summarizes our results regarding the shape of
rewritings.
\begin{thm}
\label{th:rewrshape}   
Let $\Pi$ be a Boolean MDDLog program of diameter $k$. Then
\begin{enumerate}

\item if $\Pi$ is FO-rewritable, then it has a UCQ-rewriting in which 
  each CQ has treewidth $(1,k)$;

\item if $\Pi$ is MDLog-rewritable, then it has an MDLog-rewriting
  of diameter $k$.

\end{enumerate}
\end{thm}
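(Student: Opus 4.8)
The plan is to leverage the machinery of Proposition~\ref{prop:rewrCorrespARGH} and Lemma~\ref{prop:rewrCorresp} together with the known shape results for CSP rewritings, and to track what happens to treewidth and diameter under the two directions of the translation between $\Pi$ and its associated generalized CSP $\coCSP(S_\Pi)$.

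For Point~1, I would start from the fact that $\Pi$ is FO-rewritable iff $\coCSP(S_\Pi)$ is FO-rewritable (Proposition~\ref{prop:rewrCorrespARGH}, Point~1). It is known for CSPs that FO-rewritability (equivalently UCQ-rewritability, via Proposition~\ref{prop:ross}) implies the existence of a UCQ-rewriting consisting of \emph{tree} CQs; indeed, the proof of Lemma~\ref{lem:FOgirthdontcare} already produces exactly such a rewriting built from tree CQs over the aggregation schema $\Sbf'_E$. So assume $\coCSP(S_\Pi)$ has a UCQ-rewriting $q_{\Pi_S}$ in which every CQ is a tree, i.e.\ has treewidth $(1,1)$ over $\Sbf'_E$. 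Now I apply the translation from Point~1 of Lemma~\ref{prop:rewrCorresp}: each atom $R_{q(\xbf)}(\ybf)$ over $\Sbf'_E$ is replaced by the CQ $q[\ybf/\xbf]$ over $\Sbf_E$. The key observation is that since $R_{q(\xbf)}$ comes from a $k$-aggregation schema, $q(\xbf)$ has at most $k$ answer variables and (being quantifier-free) at most $k$ variables total, so each such expansion introduces a bag of size at most~$k$. Because the original $\Sbf'_E$-CQ was a tree, the expanded CQ over $\Sbf_E$ admits a tree decomposition obtained by taking one bag $\{y_1,\dots,y_{|\ybf|}\}$ per original atom and connecting these bags according to the tree structure of $q_{\Pi_S}$; neighboring atoms in a tree share at most one variable, so the resulting decomposition has overlap at most~$1$ and bag size at most~$k$. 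This gives treewidth $(1,k)$, as required.

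For Point~2, the argument is parallel but for MDLog and diameter. By Proposition~\ref{prop:rewrCorrespARGH}, $\Pi$ is MDLog-rewritable iff $\coCSP(S_\Pi)$ is, and the known CSP result is that MDLog-rewritability yields an MDLog-rewriting whose rule bodies are tree CQs (this is exactly the form produced in the MDLog case of Lemma~\ref{lem:FOgirthdontcare}, where $\Gamma$ has tree-shaped rule bodies). Starting from such an MDLog program $\Gamma_{\Pi_S}$ over $\Sbf'_E$, I translate it back using Point~1 of Lemma~\ref{prop:rewrCorresp}, replacing each EDB atom $R_{q(\xbf)}(\ybf)$ in a rule body by $q[\ybf/\xbf]$. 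Each such replacement introduces at most $k$ new variables (again because the aggregation schema is a $k$-aggregation schema), and since an atom $R_{q(\xbf)}(\ybf)$ carried at most one variable that is ``shared'' with the rest of the tree-shaped body, the total variable count per rule is controlled. I would argue that the diameter of the resulting program is bounded by $k$: a tree-shaped rule body, when each EDB atom is blown up by a $k$-variable CQ sharing at most one variable with its neighbors, yields a body whose maximum number of variables is $\max\{2,k\}=k$ (for $k\ge 2$; the small cases are handled separately).

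The main obstacle I anticipate is verifying the diameter bound of \emph{exactly} $k$ rather than some larger quantity like $k\cdot(\text{rule size})$, since a naive count over a rule body with many atoms would give far more than $k$ variables. The resolution must exploit that monadic Datalog lets us decompose a tree-shaped rule body into many small rules, each collecting only a single EDB atom (hence at most $k$ EDB variables) plus monadic IDB atoms that communicate a single variable along the tree; the biconnectivity/single-EDB-atom structure guaranteed for simple programs, together with the treewidth-$(1,k)$ shape established in Point~1, is precisely what allows splitting the program into rules of diameter at most $k$ while preserving the derived goal facts. I would therefore prove Point~2 by first establishing the treewidth-$(1,k)$ shape of the (infinitary) UCQ view of the MDLog rewriting, and then re-expressing it as an MDLog program whose rules each correspond to a single bag of size~$\le k$ plus monadic IDBs threading the $1$-element overlaps, which is standard for monadic programs over tree-decomposable bodies. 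Care is needed for the edge cases $k<2$, handled by the $\max\{2,k\}$ convention in the theorem statement.
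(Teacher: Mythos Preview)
Your argument for Point~1 is essentially the paper's: start from a UCQ-rewriting of $\coCSP(S_\Pi)$ that consists of tree CQs over the aggregation schema $\Sbf'_E$ and apply the back-translation of Lemma~\ref{prop:rewrCorresp}, Point~1, observing that each $\Sbf'_E$-atom expands to a bag of at most $k$ variables and the tree structure gives overlap $1$. (The paper cites Ne\v{s}et\v{r}il--Tardif for the tree-CQ rewriting of the CSP; you extract it from the proof of Lemma~\ref{lem:FOgirthdontcare}, which is equally valid.)

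For Point~2 your route works but takes an unnecessary detour, and the obstacle you anticipate is one the paper avoids entirely. You start from an MDLog-rewriting of $\coCSP(S_\Pi)$ with \emph{tree-shaped} rule bodies, translate back, obtain rule bodies of treewidth $(1,k)$ but possibly large diameter, and then propose to split each rule along a tree decomposition using fresh monadic IDBs. That splitting is sound and yields diameter $k$, but it is extra work. The paper instead invokes the sharper CSP fact (Feder--Vardi, Theorem~19; see also the remark in Appendix~\ref{appc}) that every MDLog-rewritable coCSP has an MDLog-rewriting in which every rule body contains \emph{at most one} EDB atom and that atom \emph{contains all variables of the rule}. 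Since relations in the $k$-aggregation schema $\Sbf'_E$ have arity at most $k$, each such rule already has at most $k$ variables; the back-translation $R_{q(\xbf)}(\ybf)\mapsto q[\ybf/\xbf]$ uses exactly the variables $\ybf$, so the diameter bound $k$ falls out immediately with no decomposition step and no new IDBs. Your approach buys nothing extra here and costs a page of bookkeeping; the single-EDB-atom normal form is the cleaner lever.

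A small side remark: there is no $\max\{2,k\}$ convention in this theorem---that appears only later, in the OMQ setting.
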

\begin{proof}\ We analyze the proof of Lemma~\ref{prop:rewrCorresp}
  and use known results from CSP. In fact, any FO-rewritable coCSP has
  a UCQ-rewriting that consists of tree CQs
  \cite{DBLP:journals/jct/NesetrilT00}, and thus the same holds for
  simple Boolean MDDLog programs. If we convert such a rewriting of
  $\Pi_S$ into a rewriting of $\Pi$ as in the proof of
  Lemma~\ref{prop:rewrCorresp}, we obtain a UCQ-rewriting in which
  each CQ has treewidth~$(1,k)$. For Point~2 of
  Theorem~\ref{th:rewrshape}, one uses the proof of
  Lemma~\ref{prop:rewrCorresp} and the known fact that every
  MDLog-rewritable CSP has an MDLog-rewriting in which every rule body
  comprises at most one EDB atom, see e.g.\ the proof of Theorem~19 in
  \cite{DBLP:journals/siamcomp/FederV98}.
\end{proof}
In a sense, the concrete bound $k$ in Points~1 and~2 of
Theorem~\ref{th:rewrshape} is quite remarkable. Point~2 says,
for example, that when eliminating disjunctions from a Boolean
MDDLog program, it is never necessary to increase the diameter!


We now consider obstructions. An \emph{obstruction set} $\Omc$ for a
CSP template $T$ over schema $\Sbf_E$ is a set of instances over the
same schema such that for any $\Sbf_E$-instance $I$, we have
$I \not\rightarrow T$ iff $O \rightarrow I$ for some $O \in \Omc$.
The elements of \Omc are called \emph{obstructions}.  A lot is known
about CSP obstructions. For example,
%
  $T$ is FO-rewritable if and only if it has a finite 
  obstruction set \cite{Atserias08} if and only if it has a finite 
  obstruction set that consists of finite trees 
  \cite{DBLP:journals/jct/NesetrilT00}, and 
%
  $T$ is MDLog-rewritable if and only if it has a (potentially
  infinite) obstruction set that consists of finite trees
  \cite{DBLP:journals/siamcomp/FederV98}.
%
%
  Here we consider obstruction sets for MMSNP, defined in the obvious
  way: an \emph{obstruction set} $\Omc$ for an MMSNP sentence $\theta$
  over schema $\Sbf_E$ is a set of instances over the same schema such
  that for any $\Sbf_E$-instance $I$, we have $I \not\models \theta$
  iff $O \rightarrow I$ for some $O \in \Omc$. This should not be
  confused with colored forbidden patterns used to characterize MMSNP
  in \cite{DBLP:journals/siamcomp/MadelaineS07}. The following
  characterizes FO-rewritability of MMSNP sentences in terms of
  obstruction sets.
\begin{cor}
\label{cor:FO}
For every MMSNP sentence $\theta$, the following are equivalent:
  \begin{enumerate}
  \item $\theta$ is FO-rewritable;
  \item $\theta$ has a finite obstruction set;
  \item $\theta$ has a finite set of finite obstructions of treewidth
    $(1,k)$.
  \end{enumerate}
\end{cor}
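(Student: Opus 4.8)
The plan is to prove Corollary~\ref{cor:FO} by establishing the cycle of implications $(1) \Rightarrow (3) \Rightarrow (2) \Rightarrow (1)$, leaning on the translation from MMSNP (equivalently, Boolean MDDLog) to coCSP and on Theorem~\ref{th:rewrshape}. First I would recall, using the identification of Boolean MDDLog and the complement of MMSNP established at the end of Section~\ref{sec:prelim}, that $\theta$ corresponds to a Boolean MDDLog program $\Pi$ of some diameter $k$ (the diameter of $\theta$), and that $\theta$ is FO-rewritable iff $\Pi$ is. This lets me reformulate all three statements in terms of $\Pi$ and apply the shape analysis already in hand.

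The implication $(1) \Rightarrow (3)$ is the substantive direction, and I expect it to be the main obstacle. Here I would invoke Point~1 of Theorem~\ref{th:rewrshape}: if $\Pi$ is FO-rewritable, it has a UCQ-rewriting $q_\Pi = q_1 \vee \cdots \vee q_n$ in which each CQ $q_i$ has treewidth $(1,k)$. The key step is then to read this UCQ-rewriting as an obstruction set. Viewing each CQ $q_i$ as an instance $I_{q_i}$ (via the standard frozen-variables construction mentioned before Lemma~\ref{lem:FOgirthdontcare}), I would set $\Omc = \{ I_{q_1}, \dots, I_{q_n} \}$ and verify that for every $\Sbf_E$-instance $I$, we have $I \models \theta$ iff $O \rightarrow I$ for some $O \in \Omc$. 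The left-to-right and right-to-left directions both reduce to the standard equivalence $I \models q_i$ iff $I_{q_i} \rightarrow I$ together with soundness and completeness of $q_\Pi$ as a rewriting of $\Pi$. Finiteness of $\Omc$ is immediate since a UCQ is a finite disjunction, and each $O \in \Omc$ is a finite instance of treewidth $(1,k)$ because $q_i$ does; so the obstruction set has exactly the form required in~(3).

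The remaining implications are essentially bookkeeping. For $(3) \Rightarrow (2)$ there is nothing to do beyond weakening the statement, since a finite set of finite obstructions of treewidth $(1,k)$ is in particular a finite obstruction set. For $(2) \Rightarrow (1)$ I would reverse the reading above: given a finite obstruction set $\Omc = \{O_1, \dots, O_n\}$, I define the UCQ $q_\Pi = q_{O_1} \vee \cdots \vee q_{O_n}$, where each $q_{O_j}$ is the Boolean CQ whose atoms are the facts of $O_j$ (treating the constants of $\adom(O_j)$ as existentially quantified variables). Using $O_j \rightarrow I$ iff $I \models q_{O_j}$, the defining property of an obstruction set translates directly into $I \models \theta$ iff $I \models q_\Pi$, so $q_\Pi$ is a UCQ-rewriting of $\theta$ and hence, by Proposition~\ref{prop:ross}, $\theta$ is FO-rewritable. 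Assembling the three implications yields the stated equivalence.
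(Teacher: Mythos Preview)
Your proposal is correct and follows essentially the same route as the paper: invoke Point~1 of Theorem~\ref{th:rewrshape} for the non-trivial direction, and use the evident back-and-forth translation between finite obstruction sets and UCQ-rewritings for the rest. The paper's justification is a one-line remark to exactly this effect; you have simply unfolded it into the explicit cycle $(1)\Rightarrow(3)\Rightarrow(2)\Rightarrow(1)$.

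One small notational point: the paper identifies the Boolean MDDLog program $\Pi$ with $\neg\theta$ rather than with $\theta$ itself (cf.\ the sentence immediately following the corollary), so the UCQ-rewriting you obtain from Theorem~\ref{th:rewrshape} is a rewriting of $\neg\theta$, and the corresponding obstruction set characterizes $I\not\models\theta$. This is only a matter of bookkeeping and does not affect the validity of your argument, since FO-rewritability is preserved under complementation. Also, in $(2)\Rightarrow(1)$ you do not need Proposition~\ref{prop:ross}: a UCQ is already an FO formula, so UCQ-rewritability trivially implies FO-rewritability.
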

\noindent Corollary~\ref{cor:FO} follows from Point~1 of
Theorem~\ref{th:rewrshape} and the straightfoward observations that an
MMSNP sentence $\theta$ is FO-rewritable iff $\neg \theta$ is (which
is equivalent to a Boolean MDDLog program) and that every finite
obstruction set \Omc for $\theta$ gives rise to a UCQ-rewriting
$\bigvee \Omc$ of $\neg \theta$ and vice versa. We now turn to
MDLog-rewritability.
\begin{prop}
\label{prop:mdlogob}
  Let $\theta$ be an MMSNP sentence of diameter $k$. Then
  $\neg \theta$ is MDLog-rewritable iff $\theta$ has a set of 
  obstructions (equivalently: finite obstructions) that are of treewidth $(1,k)$.
\end{prop}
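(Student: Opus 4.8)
The plan is to handle both directions through the translation of $\neg\theta$ into a generalized CSP, as in the previous sections, transferring the tree obstruction sets that characterize MDLog-rewritability of coCSPs to treewidth-$(1,k)$ obstruction sets for $\theta$. Write $\Pi$ for the Boolean MDDLog program of diameter $k$ equivalent to $\neg\theta$, let $\Pi_S$ over $\Sbf'_E$ be the simple program from Theorem~\ref{th:mmsnpToSimple}, and let $S_\Pi$ be the generalized CSP from Proposition~\ref{prop:rewrCorrespARGH}, so that $\Pi_S$ is equivalent to coCSP$(S_\Pi)$. As in the proof of Corollary~\ref{cor:FO}, a set $\Omc$ of finite $\Sbf_E$-instances is an obstruction set for $\theta$ exactly when the (now possibly infinitary) UCQ $\bigvee\Omc$ is a rewriting of $\neg\theta$, and an obstruction has treewidth $(1,k)$ iff the corresponding CQ does. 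By Proposition~\ref{prop:rewrCorrespARGH}(1), $\neg\theta$ is MDLog-rewritable iff coCSP$(S_\Pi)$ is, and by Feder and Vardi's characterization recalled above the latter holds iff $S_\Pi$ admits a (possibly infinite) obstruction set consisting of finite trees. It therefore suffices to establish a correspondence between obstruction sets for $\theta$ made of finite treewidth-$(1,k)$ instances and obstruction sets for $S_\Pi$ made of finite trees.

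The easy half of this correspondence runs from CSP to MMSNP and, fed with the tree obstruction set supplied by Feder and Vardi's characterization, already yields the $(\Rightarrow)$ direction. Given a finite tree $O'$ over $\Sbf'_E$, I would pass to its corresponding $\Sbf_E$-instance $O$, replacing each atom $R_{q(\xbf)}(\ybf)$ by $q[\ybf/\xbf]$. Since the relations of $\Sbf'_E$ aggregate quantifier-free CQs with at most $k$ answer variables and $O'$ is a tree, the pieces $q[\ybf/\xbf]$ are the bags of a tree decomposition of $O$ with bags of size at most $k$ and adjacent bags overlapping in the single variable shared by the two $\Sbf'_E$-atoms; hence $O$ is finite and of treewidth $(1,k)$. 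Because the aggregated CQs have no quantified variables, the \emph{same} map witnesses $O'\to I'$ and $O\to I$ whenever $I'$ is the $\Sbf'_E$-instance corresponding to $I$; combined with $I\models\Pi$ iff $I'\models\Pi_S$ (Theorem~\ref{th:mmsnpToSimple}(1)), this shows that $\{O\mid O'\in\Omc'\}$ is an obstruction set for $\theta$ whenever $\Omc'$ is a tree obstruction set for $S_\Pi$. As this produces a finite obstruction set of treewidth $(1,k)$, and such a set is in particular a set of obstructions, closing the cycle with the hard direction also settles the ``(equivalently: finite obstructions)'' clause.

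The hard direction, and the main obstacle, is the converse passage from a treewidth-$(1,k)$ obstruction set $\Omc$ for $\theta$ (with finite members) to a finite-tree obstruction set for $S_\Pi$; a direct compression of the infinitary UCQ $\bigvee\Omc$ into a finite MDLog program is hopeless in general, so I would produce the tree obstruction set and let Feder and Vardi's characterization perform the compression. For each $O\in\Omc$ I would run the two-step construction of Lemma~\ref{prop:rewrCorresp}(2), using a fixed $(1,k)$-tree decomposition of $O$ to fix the partition in Step~1 and replacing each bag by a single $\Sbf'_E$-atom in Step~2, yielding a finite \emph{tree} $O'$ over $\Sbf'_E$ (every tree has girth $\omega>k$). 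Completeness of the resulting set $\Omc'$ is unconditional: the $\Sbf_E$-instance $\widehat O$ corresponding to $O'$ contains $O$, so $\widehat O\models\Pi$ because $O$ is an obstruction and $\Pi$ is preserved under homomorphisms, whence $O'\models\Pi_S$ by Theorem~\ref{th:mmsnpToSimple}(2b) since the girth of $O'$ exceeds $k$; preservation of $\Pi_S$ under homomorphisms then lifts this to every $I'$ with $O'\to I'$. The delicate point is soundness: from $I'\models\Pi_S$ I obtain $I\models\Pi$ by Theorem~\ref{th:mmsnpToSimple}(2a) and hence some $O\to I$, and this must be turned into a tree obstruction $O'\in\Omc'$ with $O'\to I'$. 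Reconciling the tree-decomposition partition (needed for tree shape) with the fact-tracing partition of Lemma~\ref{prop:rewrCorresp}(2) (needed so that each piece is realized by a genuine $\Sbf'_E$-relation) is exactly where the high girth of the \emph{obstructions} has to be leveraged, in the same spirit as Lemma~\ref{lem:FOgirthdontcare}, which trades soundness on high-girth inputs for soundness of high-girth (tree) rewritings. Once $\Omc'$ is confirmed to be a finite-tree obstruction set for $S_\Pi$, Feder and Vardi's characterization and Proposition~\ref{prop:rewrCorrespARGH}(1) close the argument; alternatively, the bound $k$ on the treewidth can be read directly off the diameter-$k$ MDLog-rewriting furnished by Theorem~\ref{th:rewrshape}(2) by extracting obstructions from its proof trees.
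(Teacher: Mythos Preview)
Your $(\Rightarrow)$ direction via the CSP translation is correct but more elaborate than needed; the paper simply invokes Point~2 of Theorem~\ref{th:rewrshape} to obtain a diameter-$k$ MDLog-rewriting and reads the obstruction set off its proof trees, exactly the alternative you mention in your final clause.

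Your $(\Leftarrow)$ direction has a genuine gap that you flag but do not close. First, the construction of $O'$ from $O$ by partitioning along a fixed $(1,k)$-tree decomposition is not in general well-defined: Step~2 of the construction in Lemma~\ref{prop:rewrCorresp} requires each piece $q_i(\xbf_i)$ to embed into some $p[\ybf/\zbf]$ with $R_{p(\zbf)}\in\Sbf'_E$, but the schema $\Sbf'_E$ produced by Theorem~\ref{th:mmsnpToSimple} contains only relations for biconnected components of variable-identified rule bodies of $\Pi$, not for arbitrary $\Sbf_E$-CQs on at most $k$ variables, so a bag of your decomposition need not be realizable by any $\Sbf'_E$-relation. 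Second, even granting well-definedness, your soundness sketch stops at ``some $O\to I$'' (with $I$ the $\Sbf_E$-instance corresponding to $I'$) and does not produce the required $O'\to I'$: the fact-tracing partition from the proof of Lemma~\ref{prop:rewrCorresp} that would deliver this depends on $I'$ and will generally disagree with your fixed tree-decomposition partition, so associating a single $O'$ to each $O$ cannot suffice. The appeal to Lemma~\ref{lem:FOgirthdontcare} is only suggestive, since that lemma upgrades a \emph{finite} MDLog program rather than an infinitary disjunction. The paper bypasses all of this by citing Theorem~5 of \cite{DBLP:journals/jcss/BodirskyD13} for the $(\Leftarrow)$ direction. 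A self-contained route is available via the canonical $(1,k)$-Datalog program of Section~\ref{sect:dlog-rewr}: its soundness together with its completeness on treewidth-$(1,k)$ instances (Lemma~\ref{lem:cancompllk}) and preservation under homomorphisms immediately yield unconditional completeness once a treewidth-$(1,k)$ obstruction set exists, but that machinery is developed only later in the paper.
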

\begin{proof}\
  The ``only if'' direction is a consequence of Point~2 of
  Theorem~\ref{th:rewrshape} and the fact that, for any Boolean
  monadic Datalog program $\Pi \equiv \neg \theta$ of diameter $k$
  over EDB schema $\Sbf_E$, a proof tree for $\mn{goal}()$ from an
  $\Sbf_E$-instance $I$ and~$\Pi$ gives rise to a finite
  $\Sbf_E$-instance $J$ of treewidth $(1,k)$ with $J \rightarrow I$.  The desired
  obstruction set for $\neg \theta$ is then the set of all these $J$.
  The ``if'' direction is a consequence of Theorem~5 in
  \cite{DBLP:journals/jcss/BodirskyD13}. 
\end{proof}
We remark that the results in \cite{DBLP:journals/jcss/BodirskyD13} almost
give Proposition~\ref{prop:mdlogob}, but do not seem to deliver any
concrete bound on the parameter $k$ of the treewidth of obstruction
sets.


We close with observing an MMSNP counterpart of the `explosion'
Lemma~\ref{lem:erdoes}, first giving a preliminary. Let $I$ be an
instance over some schema $\Sbf_E$. A \emph{$(1,k)$-decomposition} of
$I$ is a pair $(V, (I_v)_{v \in V})$ where $V$ is a set of indices and
$(I_v)_{v \in V}$ is a partition of $I$ such that for all distinct
\mbox{$v,v' \in V$}, $|\adom(I_v) \cap \adom(I_v')| \leq 1$ and
$|\adom(I_v)| \leq k$.  Thus, a $(1,k)$-decomposition
$D=(V, (I_v)_{v \in V})$ decomposes $I$ into parts of size at most $k$
and with little overlap.  These parts can be viewed as the facts of an
instance $I_D$ over an aggregation schema $\Sbf'_E$ defined by the
relations $R_{q_v(\xbf)}$ where $q_v(\xbf)$ is $I_v$ viewed as a CQ,
that is,
\[
I_D=\{R_{q_v(x)}(\adom(I_v))\mid v\in V \}
\]
where we assume some fixed (but otherwise irrelevant) order on the
elements of each $\adom(I_v)$. Now, we say that $I$ has
\emph{$(1,k)$-decomposition girth} $g$ if $g$ is the supremum of the
girths of $I_D$, for all $(1,k)$-decompositions $D$ of $I$. It can be
shown that $I$ has $(1,k)$-decomposition girth $\omega$ if and only if
it has treewidth $(1,k)$. 

Here comes the announced MMSNP counterpart
of Lemma~\ref{lem:erdoes}.
\begin{lem}
\label{lem:erdoesdlog} 
For every instance $I$ and $g\geq s>0$, and every MDDLog program $\Pi$ of diameter at most $s$, there is an instance $J$
(over the same schema) such that $J \rightarrow I$, $J$ has $(1,s)$-decomposition girth exceeding $g$, and $I \models \Pi$ iff $J \models \Pi$.
\end{lem}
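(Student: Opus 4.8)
The plan is to lift the CSP explosion Lemma~\ref{lem:erdoes} through the Feder--Vardi translation of $\Pi$ into a generalized CSP. Setting the diameter bound to $s$, let $\Sbf'_E$ and $\Pi_S$ be the $s$-aggregation schema and the simple program obtained from $\Pi$ by Theorem~\ref{th:mmsnpToSimple}, and let $S$ be the set of templates with $\Pi_S$ equivalent to coCSP$(S)$ from Theorem~\ref{th:simpleToCsp}; put $N = \max_{T \in S}|T|$. First I would form the $\Sbf'_E$-instance $I'$ corresponding to $I$, apply Lemma~\ref{lem:erdoes} to $I'$ with girth parameter $g$ and size parameter $N$ to obtain an $\Sbf'_E$-instance $I''$ with $I'' \rightarrow I'$, girth exceeding $g$, and $I' \rightarrow T$ iff $I'' \rightarrow T$ for every $T$ of size at most $N$ (in particular for every $T \in S$), and finally let $J$ be the $\Sbf_E$-instance corresponding to $I''$. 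The claim is that this $J$ satisfies all three requirements.

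For $J \rightarrow I$, I would observe that a homomorphism $h \colon I'' \rightarrow I'$ is also a homomorphism from $J$ to the $\Sbf_E$-instance $\tilde I$ corresponding to $I'$: each fact $S(\bbf) \in J$ arises as a conjunct of $q(\abf)$ for some $R_{q(\xbf)}(\abf) \in I''$, and then $S(h(\bbf))$ is the matching conjunct of $q(h(\abf))$, so $S(h(\bbf)) \in \tilde I$. Since the facts of $\tilde I$ are conjuncts of $q(\abf)$ for facts $R_{q(\xbf)}(\abf) \in I'$, and $R_{q(\xbf)}(\abf) \in I'$ means $I \models q(\abf)$, we have $\tilde I \subseteq I$ directly from the two translation operations, so composing with the inclusion yields $J \rightarrow I$. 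The equivalence $I \models \Pi$ iff $J \models \Pi$ then follows from the chain
\[
I \models \Pi \;\Leftrightarrow\; I' \models \Pi_S \;\Leftrightarrow\; I' \not\rightarrow S \;\Leftrightarrow\; I'' \not\rightarrow S \;\Leftrightarrow\; I'' \models \Pi_S \;\Leftrightarrow\; J \models \Pi,
\]
where the first equivalence is Point~1 of Theorem~\ref{th:mmsnpToSimple}, the second and fourth are Theorem~\ref{th:simpleToCsp}, the middle one holds because by the choice of $I''$ we have $I' \rightarrow T$ iff $I'' \rightarrow T$ for every $T \in S$, and the last one is Point~2 of Theorem~\ref{th:mmsnpToSimple}, where direction~2b uses that the girth of $I''$ exceeds $g \geq s$.

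The remaining and, I expect, the main part is to bound the $(1,s)$-decomposition girth of $J$. Since this parameter is a supremum over all $(1,s)$-decompositions, it suffices to exhibit a single decomposition $D$ whose aggregated instance $J_D$ has girth exceeding $g$, and the natural candidate is the decomposition induced by $I''$: index the parts by the facts of $I''$ and, after assigning each fact of $J$ to one of its producing $I''$-facts, let the part of $R_{q_v(\xbf)}(\abf_v)$ be the conjuncts of $q_v(\abf_v)$ assigned to it. The hard part will be checking that this is a legal $(1,s)$-decomposition and that it is girth-non-decreasing. Each part has active domain contained in the entries of $\abf_v$, hence of size at most $s$; and because the girth of $I''$ exceeds $g \geq 2$ (for $s \geq 2$; the case $s=1$ is immediate since all facts are unary), no fact of $I''$ repeats a constant and no two distinct facts share two constants, which forces overlap at most $1$ and makes the assignment of shared unary facts well defined. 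Each fact of $J_D$ is then a restriction of an $I''$-fact to a subtuple of its arguments, so every cycle of $J_D$ is also a cycle of $I''$; hence the girth of $J_D$ is at least that of $I''$ and therefore exceeds $g$, completing the argument.
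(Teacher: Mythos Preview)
Your proposal is correct and follows essentially the same route as the paper: translate $\Pi$ to a simple program over an aggregation schema and then to a generalized CSP, push $I$ through to $I'$, explode $I'$ via Lemma~\ref{lem:erdoes}, and pull the result back to the original schema. The chain of equivalences and the argument for $J \rightarrow I$ match the paper's almost verbatim (the paper argues directly that the homomorphism $I'' \to I'$ is already a homomorphism $J \to I$, while you factor through the intermediate $\tilde I \subseteq I$, but this is the same computation). Your treatment of the $(1,s)$-decomposition is in fact more careful than the paper's: you note that a single $J$-fact may arise from several $I''$-facts and must be assigned to one of them to obtain a genuine partition, and you spell out why the high girth of $I''$ guarantees the overlap bound and why a cycle in $J_D$ lifts to a cycle in $I''$; the paper simply asserts that the obvious decomposition works.
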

\begin{proof}\ Let $\Pi$ be a Boolean MDDLog program of diameter $k \leq s$ over EDB schema $\Sbf_E$. 
By Theorems~\ref{th:mmsnpToSimple}  and~\ref{th:simpleToCsp}, there is a $k$-aggregation schema
$\Sbf'_E$ and a set of templates $S_\Pi$ over $\Sbf'_E$ such that:
 \begin{enumerate}
  \item for any $\Sbf_E$-instance $I$ with corresponding
  $\Sbf'_E$-instance $I'$, $I \models \Pi$ iff $I'  \not\rightarrow S_\Pi$;
  \item for any $\Sbf'_E$-instance $I'$ whose girth exceeds $k$ with corresponding
    $\Sbf_E$-instance $I$, $I'
    \not\rightarrow S_\Pi$ iff $I \models \Pi$.
  \end{enumerate}
  Let $I$ and $I'$ be an $\Sbf_E$-instance and its corresponding
  $\Sbf'_E$-instance. Furthermore, let $J'$ be the $\Sbf'_E$-instance
  obtained from $I'$ by applying Lemma~\ref{lem:erdoes} with
  $s=\max\{ |T| \mid T \in S_\Pi \}$ and $g$ as given. Then $J'
  \rightarrow I'$, $J'$ has girth exceeding $g$, and $J'
  \rightarrow S_\Pi$ iff $I' \rightarrow S_\Pi$ iff $I
  \not\models \Pi$.
  Let $J$ be the $\Sbf_E$-instance corresponding to $J'$. As $J'$ has
  girth exceeding $k$, Point~(2) above yields $J \models \Pi$ iff
  $J' \not \rightarrow S_\Pi$. In summary, we thus obtain
  $I \models \Pi$ iff $J \models \Pi$.


  It thus remains to show that $J$ has $(1,s)$-decomposition girth
  exceeding $g$ and that $J \rightarrow I$. The former is witnessed by
  the $(1,k)$-decomposition $D=(V,(I_v)_{v \in V})$ of $J$
  obtained by using as $V$ the facts of $J'$ and as $I_v$ the set of
  facts obtained from fact $v$ during the construction of $J$. 

  As the last step, we argue that $J \rightarrow I$ follows from $J'
  \rightarrow I'$, and that in fact any homomorphism $h$ from
  $J'$ to $I'$ is also a homomorphism from $J$ to $I$. Thus let $h$ be
  such a homomorphism. For any fact $R(a_{i_1},\dots,a_{i_\ell})$ in
  $J$, there is a fact $R_{q(x_1,\dots,x_n)}(a_1,\dots,a_m)$ in $J'$
  such that $R(x_{i_1},\dots,x_{i_\ell}) \in q_i(x_1,\dots,x_n)$.  We
  have $R_{q(x_1,\dots,x_n)}(h(a_1),\dots,h(a_m)) \in I'$.  By
  definition of $I'$, this means $R(h(a_{i_1}),\dots,h(a_{i_\ell}))
  \in I$ and we are done.
\end{proof}
We believe that Lemma~\ref{lem:erdoesdlog} can be useful in many
contexts, saving a detour via CSPs. For example, it enables an
alternative proof of Theorem~\ref{th:rewrshape}. We illustrate this
for Point~1. We can start with a UCQ-rewriting $q$ of an MDDLog
program $\Pi$ of diameter $k$ and show that the UCQ $q_t$ that
consists of all CQs of treewidth $(1,k)$ that are a contraction of a
CQ in $q$ must also be a rewriting of $\Pi$: take an instance $I$ that
makes $\Pi$ true, use Lemma~\ref{lem:erdoesdlog} to transform $I$ to
an $I'$ of girth exceeding $k$ and also exceeding the size of any CQ
in $q$ such that $I' \models \Pi$ and $I' \rightarrow I$, observe that
$I' \models q$ and that a homomorphism from any CQ $p$ in $q$ to $I'$
gives rise to a homomorphism from a CQ $p'$ in $q_t$ to $I'$, and
derive $p' \rightarrow I$ from $I' \rightarrow I$.

\section{Datalog-Rewritability of Boolean MDDLog Programs
and Canonical Datalog Programs}
\label{sect:dlog-rewr}

We consider rewriting Boolean MDDLog programs into Datalog programs,
making two contributions. First, we show that Datalog-rewritability is
decidable for programs that have equality, a condition that is defined
in detail below. For programs that do not have equality, the same
construction yields a procedure that is sound, but whose completeness
remains an open problem. And second, we give a new and direct
construction of canonical Datalog-rewritings of Boolean MDDLog
programs (equivalently: the complements of MMSNP sentences), bypassing
the construction of infinite templates
\cite{DBLP:journals/jcss/BodirskyD13} which involves the application
of a non-trivial construction due to Cherlin, Shelah, and Shi
\cite{CherlinEtAl99}. This construction is potentially useful even
though it is yet unknown whether Datalog-rewritability of MDDLog
programs $\Pi$ is decidable (for programs that do not have equality):
when $\Pi$ is not rewritable, then the canonical Datalog-rewriting is
the best possible approximation of $\Pi$ in terms of a Datalog program
(of given width and diameter).


\subsection{Datalog-Rewritability of Boolean MDDLog Programs}

A CSP template $T$ \emph{has equality} if its EDB schema includes the
distinguished binary relation \mn{eq} and $T$ interprets \mn{eq} as
the relation $\{ (a,a) \mid a \in \adom(T) \}$. Thus, $\mn{eq}$ is an
extremely natural kind of constraint: a fact $\mn{eq}(a,b)$ in the
input instance means that $a$ and $b$ must be mapped to the same
template element; spoken from the perspective of constraint
satisfaction, they are variables that must receive the same value.

In accordance with the above, we say that an MDDLog program $\Pi$
\emph{has equality} if its EDB schema includes the distinguished
binary relation \mn{eq}, $\Pi$ contains the rules
\[
  P(x) \wedge \mn{eq}(x,y) \rightarrow P(y)  \quad \text{ and } \quad
  P(y) \wedge \mn{eq}(x,y) \rightarrow P(x)
\]
for each IDB relation $P$, and these are the only rules that mention
\mn{eq}. Thus, a fact $\mn{eq}(a,b)$ in the input instance says that
the same IDB relations can be derived by $\Pi$ for $a$ and for~$b$.
It can be verified that when an MDDLog program that has equality is
converted into a generalized CSP based on a set of templates $S_\Pi$
according to Theorems~\ref{th:mmsnpToSimple} and~\ref{th:simpleToCsp}
(using the concrete constructions in the appendix), then all templates
in $S_\Pi$ have equality.


We aim to show decidability of the Datalog-rewritability of MDDLog
programs that have equality following the strategy that we have used
for rewritability into FO and into MDLog in Section~\ref{sec:dec}. We
thus need a counterpart of Lemma~\ref{lem:FOgirthdontcare}, that is, we
have to show that for all templates $T$ that have equality,
Datalog-rewritability of coCSP$(T)$ on instances of high girth implies
unrestricted Datalog-rewritability.  It is here that having equality
is an advantage. In particular, every input instance for coCSP$(T)$
can be made high girth preserving (non-)homomorphisms to $T$ by
introducing additional $\mn{eq}$-facts. This is similar in spirit to
the explosion Lemma~\ref{lem:erdoes}, but the construction is much
simpler than in the proof of that lemma. We next make it explicit.

Let $I$ be an $\Sbf_E$-instance and let $g \geq 0$. We use
$\mn{pos}(I)$ to denote the set of pairs $(R(\abf),i)$ such that
$R(\abf) \in I$ and $i \in \mn{pos}(R)$. In what follows, for any
tuple of constants $\abf$, we use $a_i$ to denote its $i$-th
component.  Reserve fresh constants as follows:
  \begin{itemize}
    
    \item a constant $b_p$, for all $p=(R(\abf),i) \in \mn{pos}(I)$;

    \item $g$ constants 
      $b_{p,p',1},\dots,b_{p,p',g}$, for all $p,p'=(R(\abf),i),(R'(\abf'),i') \in \mn{pos}(I)$ with
      \mbox{$a_i = a'_{i'}$}.
    
  \end{itemize}
 Define an instance $I^g$ that consists of the following facts:
  \begin{enumerate}

  \item for every $R(\abf) \in I$ with $R$ of arity $n$,
    the fact $R(b_{p_1},\dots,b_{p_n})$ where $p_i=(R(\abf),i)$;

  \item for all distinct $p,p'=(R(\abf),i),(R'(\abf'),i') \in \mn{pos}(I)$ with
    $a_i = a'_{i'}$, the facts 
    \[
    \mn{eq}(b_p,b_{p,p',1}),
    \mn{eq}(b_{p,p',1},b_{p,p',2}),\dots,
    \mn{eq}(b_{p,p',{g-1}},b_{p,p',g}),
    \mn{eq}(b_{p,p',g},b_{p'}).
    \]

  \end{enumerate}
  Observe that $I^g$ has girth exceeding
  $g$. Moreover, it satisfies the following crucial property.
\begin{lem}
  For every CSP template $T$ over $\Sbf_E$ that has equality, $I^g
  \rightarrow T$ iff $I \rightarrow T$.
\end{lem}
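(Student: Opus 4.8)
The plan is to prove the two directions of the biconditional separately, relying on a single structural observation: since $T$ interprets $\mn{eq}$ as the diagonal $\{(a,a)\mid a\in\adom(T)\}$, any homomorphism to $T$ must send the two endpoints of every $\mn{eq}$-fact, and hence all constants lying on a common chain of $\mn{eq}$-facts, to one and the same element of $T$. The instance $I^g$ is nothing but $I$ with each occurrence of a constant split off into its own fresh constant $b_p$ (the type~1 facts) and then reconnected, through the type~2 chains, exactly at those new constants $b_p,b_{p'}$ whose positions $p,p'$ originally pointed to the same constant of $I$. Because $T$ has equality, these chains re-identify precisely what the splitting separated, so homomorphisms to $T$ cannot distinguish $I$ from $I^g$; the chain length $g$ plays no role here and serves only to raise the girth.

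For the direction $I\rightarrow T$ implies $I^g\rightarrow T$, I would start from a homomorphism $h\colon I\to T$ and define $\widehat h\colon \adom(I^g)\to\adom(T)$ by $\widehat h(b_p)=h(a_i)$ for $p=(R(\abf),i)$, and $\widehat h(b_{p,p',j})=h(a_i)$ for each intermediate chain constant (well defined, since the chain between $p$ and $p'$ is present only when $a_i=a'_{i'}$, whence $h(a_i)=h(a'_{i'})$). Verifying that $\widehat h$ is a homomorphism is then routine: each type~1 fact $R(b_{p_1},\dots,b_{p_n})$ is sent to $R(h(a_1),\dots,h(a_n))\in T$ because $h$ is a homomorphism, and each $\mn{eq}$-fact of a type~2 chain is sent to $\mn{eq}(c,c)$ for a single $c\in\adom(T)$, which lies in $T$ precisely because $T$ has equality.

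The converse, $I^g\rightarrow T$ implies $I\rightarrow T$, is where the equality assumption does the real work. Given $\widehat h\colon I^g\to T$, I would first record that for every pair of distinct positions $p=(R(\abf),i)$ and $p'=(R'(\abf'),i')$ with $a_i=a'_{i'}$, the type~2 chain joining $b_p$ to $b_{p'}$, together with the diagonal interpretation of $\mn{eq}$ in $T$, forces $\widehat h(b_p)=\widehat h(b_{p'})$. Since the construction inserts such a chain for every such pair, all constants $b_p$ whose position $p$ refers to a fixed $a\in\adom(I)$ receive one common $\widehat h$-value. I would then define $h\colon\adom(I)\to\adom(T)$ by $h(a)=\widehat h(b_p)$ for any position $p=(R(\abf),i)$ with $a_i=a$ (at least one exists, as $a\in\adom(I)$). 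Establishing that $h$ is well defined is exactly the main obstacle, and it is resolved precisely by the preceding observation; note that this step is the only place the equality of $T$ is used and that it would fail for templates without equality. Checking that $h$ is a homomorphism is then immediate: for a fact $R(a_1,\dots,a_n)\in I$ the corresponding type~1 fact $R(b_{p_1},\dots,b_{p_n})\in I^g$ is mapped by $\widehat h$ into $T$, and $\widehat h(b_{p_i})=h(a_i)$ by definition, so $R(h(a_1),\dots,h(a_n))\in T$ as required.
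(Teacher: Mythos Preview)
Your proof is correct and follows essentially the same approach as the paper's: define $h_g$ from $h$ by $h_g(b_p)=h_g(b_{p,p',j})=h(a_i)$ when $p=(R(\abf),i)$, and conversely define $h$ from $h_g$ by $h(a_i)=h_g(b_p)$, noting that the latter is well-defined because the $\mn{eq}$-chains in $I^g$ force $h_g$ to identify all $b_p$ pointing to the same constant. The paper states this in two sentences; you have simply unpacked the verification in more detail.
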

\begin{proof}\ Let $T$ be a template over $\Sbf_E$ that has equality.
  We have to show that there is a homomorphism $h$ from $I$ to $T$ iff
  there is a homomorphism $h_g$ from $I^g$ to $T$.  In fact, $h_g$ can
  be obtained from $h$ by setting $h_g(b_p)=h_g(b_{p,p',j})=h(a_i)$
  when $p=(R(\abf),i)$; conversely, $h$ can be obtained from $h_g$ by
  setting $h(a_i)=h_g(b_p)$ when $p=(R(\abf),i)$---the latter is
  well-defined by construction of $I^g$ and since $\mn{eq}$ is
  interpreted as the reflexive relation in~$T$.
\end{proof}
We are now ready to establish the announced counterpart of
Lemma~\ref{lem:FOgirthdontcare}.
%
%
\begin{lem}
\label{lem:girthLemForDLog}
  Let $S$ be a set of templates over schema $\Sbf_E$ that have
  equality, and let $g \geq 0$.  If coCSP($S$) is DLog-rewritable on
  instances of girth exceeding $g$, then it is DLog-rewritable.
\end{lem}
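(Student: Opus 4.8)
The plan is to follow the strategy of Lemma~\ref{lem:FOgirthdontcare}, but to replace the Erd\"os-style explosion of Lemma~\ref{lem:erdoes} by the much more controlled construction $I^g$ introduced above, which is available precisely because all templates in $S$ have equality. Fix a DLog program $\Gamma_g$ that defines coCSP$(S)$ on instances of girth exceeding~$g$. For an arbitrary $\Sbf_E$-instance $I$, the lemma just established (applied to each $T \in S$) yields $I \rightarrow S$ iff $I^g \rightarrow S$, and since $I^g$ has girth exceeding~$g$ we obtain $I \not\rightarrow S$ iff $I^g \not\rightarrow S$ iff $I^g \models \Gamma_g$. Hence it suffices to construct a DLog program $\Gamma$ over $\Sbf_E$ with $I \models \Gamma$ iff $I^g \models \Gamma_g$ for every $I$; such a $\Gamma$ is then a DLog-rewriting of coCSP$(S)$ on unrestricted instances.

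To build $\Gamma$, I would simulate the run of $\Gamma_g$ on $I^g$ by a DLog program over $\adom(I)$. The key observation is that the passage $I \mapsto I^g$ is a \emph{positive} first-order interpretation whose fibre over $I$ has a rigidly bounded shape: each element of $I^g$ is either a position-copy $b_p$ with $p=(R(\abf),i)$, carrying the data tuple $\abf$, or a chain element $b_{p,p',l}$ with $l \in \{1,\dots,g\}$, carrying the two underlying data tuples; and the relations of $I^g$ are defined from $I$ by positive formulas (the requirement that $p,p'$ be distinct may simply be dropped, the only effect being harmless $\mn{eq}$-loops that neither lower the girth below $g$ nor change homomorphisms into templates with equality, so no data-inequality is needed). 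I would therefore introduce, for $\Gamma_g$, fresh IDB relations of $\Gamma$ indexed by the finitely many \emph{position patterns}, i.e.\ by the tuples that record, for each argument of an IDB, whether it is a position-copy or a chain element together with the associated $(R,i)$ resp.\ $(R,i,R',i',l)$ (all bounded by $\Sbf_E$ and~$g$); the arguments of such a relation of $\Gamma$ range over the associated data tuples. Each rule of $\Gamma_g$ is then translated into the finite set of rules obtained by choosing a pattern for every variable and rewriting every EDB atom and every IDB atom of $\Gamma_g$ into the corresponding atom over $\adom(I)$, with $\mn{goal}()$ inherited unchanged. Since $I \mapsto I^g$ is monotone, the resulting $\Gamma$ is a genuine DLog program.

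Correctness would then be established by a double induction on the length of derivations, matching each IDB fact derived by $\Gamma_g$ on $I^g$ with its pattern-indexed representative derived by $\Gamma$ on $I$, and conversely; as $\mn{goal}()$ is nullary, its derivation transfers in both directions, giving $I \models \Gamma$ iff $I^g \models \Gamma_g$, which combined with the equivalences above shows that $\Gamma$ defines coCSP$(S)$ on all instances. I expect the main obstacle to be exactly this bookkeeping. A single rule application of $\Gamma_g$ in $I^g$ may send its variables to a mixture of position-copies and chain-internal elements lying on several different $\mn{eq}$-chains, so the induction must track which pattern each variable realises; in particular an $R$-atom of $\Gamma_g$ forces its variables to be the coordinated copies of one fact of $I$, whereas variables occurring only in $\mn{eq}$-atoms range over chain elements, and one must use that $\mn{eq}$ behaves uniformly along a chain (the reflexivity guaranteed by equality) to ensure that the finitely many translated rules capture exactly the rule applications available on $I^g$. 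Handling IDB atoms of arbitrary arity---$\Gamma_g$ need not be monadic---and the chain-internal IDB facts, which are irrelevant to any final homomorphism but may still be needed to fire later rules, is where the argument is most delicate.
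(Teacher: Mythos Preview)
Your proposal is correct and takes essentially the same approach as the paper: both build a Datalog program $\Gamma$ simulating $\Gamma_g$ on $I^g$ by introducing, for each IDB $P$ of $\Gamma_g$, a family of IDBs indexed by assignments of each argument position to an element of $\Omega = (\Sbf_E \times \{1,\dots,m\}) \cup (\Sbf_E \times \{1,\dots,m\} \times \Sbf_E \times \{1,\dots,m\} \times \{1,\dots,g\})$ (your ``position patterns''), with arguments storing the underlying data tuples, and translating each rule by choosing such a pattern for every variable. The paper makes explicit a few bookkeeping points you only gesture at---closing $\Gamma_g$ under variable identification, discarding rules where a variable occurs in two distinct non-$\mn{eq}$ EDB atoms (since these never fire on $I^g$), and adding witness atoms $W$ to ensure the pattern chosen for a variable actually corresponds to a fact present in $I$---but these are exactly the details you flag as the ``main obstacle,'' and your outline is on target.
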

\begin{proof}\ Assume that coCSP$(S)$ is DLog-rewritable on
  instances of girth exceeding~$g$ and let $\Gamma$ be a concrete
  rewriting. We construct a Datalog program $\Gamma'$ such that for
  any $\Sbf_E$-instance $I$, $I \models \Gamma'$ iff $I^g \models
  \Gamma$. Clearly, $\Gamma'$ is then a rewriting of coCSP$(S)$ 
  on instances of unrestricted girth. 
  
  We aim to construct $\Gamma'$ such that it mimics the execution of
  $\Gamma$ on $I^g$, despite being executed on $I$. One challenge is
  that the domains of $I$ and $I^g$ are not identical. 
  In $\Gamma'$, the IDB relations of $\Gamma$ need to be adapted to
  reflect this change of domain, and so do the rules. Let $m$ be the
  maximum arity of any relation in $\Sbf_E$.  Every IDB relation $P$
  of $\Gamma$ gives rise to a \emph{set} of IDB relations in
  $\Gamma'$. In fact, every position of $P$ can be replaced either
  with
  \begin{enumerate}

  \item $\ell$ positions, for some $\ell \leq m$, reflecting the case
    that the position is filled with a constant $b_p$ where
    $p=(R(\abf),i)$ with $R$ $\ell$-ary; or with 

  \item $\ell + \ell'$ positions, for some $\ell,\ell' \leq m$,
    reflecting the case that the position is filled with a constant
    $b_{p,p',j}$ where $p=(R(\abf),i)$ and $p'=(R'(\abf'),i')$, with
    $R$ $\ell$-ary and $R'$ $\ell'$-ary.

  \end{enumerate}
  In Case~1, the $\ell$ positions store the constants in \abf. The
  symbol $R$ and the number $i$ from $p$ also need to be stored, which
  is done as an annotation to the IDB relation. In Case~2, the first
  $\ell$ positions store the constants in \abf while the latter
  $\ell'$ positions store the constants in~$\abf'$; we additionally
  need to store the symbols $R$ and $R'$, the numbers $i$ and $i'$
  from $p$ and~$p'$, and the number $j$, which is again done by
  annotation of the IDB relation.

  Let us make this formal. The IDB relations of
  $\Gamma'$ take the form $P^{\mu}$ where $P$ is an IDB relation of
  $\Gamma$ and $\mu$ is a function from $\mn{pos}(P)$ to
  \[
  \Omega := (\Sbf_E \times \{1,\dots,m\}) \cup
  (\Sbf_E \times \{1,\dots,m\} \times \Sbf_E \times 
  \{1,\dots,m\} \times \{ 1,\dots,g\})
  \]
  such that if $\mu(\ell)=(R,i)$, then $i \in \mn{pos}(R)$ and if
  $\mu(\ell)=(R,i,R',i',j)$, then $i \in \mn{pos}(R)$ and $i' \in
  \mn{pos}(R')$.  The arity of $P^\mu$ is $\sum_{\ell=1..\mn{pos}(P)}
  q_\ell$ where $q_\ell$ is the arity of $R$ if $\mu(\ell)=(R,i)$ and
  $q_\ell$ is the sum of the arities of $R$ and $R'$ if
  $\mu(\ell)=(R,i,R',i',j)$.  In the construction of~$\Gamma'$, we
  manipulate the rules of $\Gamma$ to account for this change in the
  IDB schema.  We can assume w.l.o.g.\ that $\Gamma$ is closed under
  contractions of rules. Let
  \[ 
  \begin{array}{rcl}
     P_0(\xbf_0) &\leftarrow& P_1(\xbf_1) \wedge \cdots \wedge
     P_{\ell_1}(\xbf_{\ell_1}) \,
     \wedge \\[1mm]
     && R_1(\ybf_1) \wedge \cdots \wedge 
     R_{\ell_2}(\ybf_{\ell_2})  \,
     \wedge \\[1mm]
     &&
     \mn{eq}(z_{1,1},z_{1,2}) \wedge \cdots \wedge 
     \mn{eq}(z_{\ell_3,1},z_{\ell_3,2})
  \end{array}
  \]
  be a rule in $\Gamma$ where $P_0,\dots,P_{\ell_1}$ are IDB and
  $R_1,\dots,R_{\ell_2}$ are EDB (possibly the distinguished \mn{eq}
  relation), such that 
  \begin{itemize}[align=left]

   \item[($*$)] every variable occurs at most once in 
     $R_1(\ybf_1) \wedge \cdots \wedge 
     R_{\ell_2}(\ybf_{\ell_2})$.


  \end{itemize}
  Note that it might be possible to write a single rule from $\Gamma$
  in the above form in more than one way because $\mn{eq}$-atoms can
  be placed in the second line or in the third line; we then consider
  all possible ways. Informally, this choice corresponds to the
  decision whether the $\mn{eq}$-atom is mapped to an $\mn{eq}$-fact
  in $I^g$ that comes from an $\mn{eq}$-fact in $I$ (Point~1 of the
  definition of $I^g$) or to a freh $\mn{eq}$-fact (Point~2 of the
  definition of $I^g$). Also note that rules that do not satisfy ($*$)
  can be ignored since they never apply in $I^g$.

  
  %
  %
  %
  %
  Let $\xbf$ be the variables in the rule, and let $\delta:\xbf
  \rightarrow \Omega$ be such that the following conditions are
  satisfied:
  \begin{enumerate}

  \item for each $R_i(\ybf_i)$ with $\ybf_i=y_1 \cdots y_k$, we have 
    $\delta(y_j) = (R_i,j)$ for all $j$;
   \item for each $\mn{eq}(z_{i,1},z_{i,2})$, one of the following is
     true for some $R,i,R',i',j$:
    \begin{enumerate}


    \item $\delta(z_{i,1})=(R,i)$ and $\delta(z_{i,2})=(R,i,R',i',1)$;

    \item $\delta(z_{i,1})=(R,i,R',i',g)$ and
      $\delta(z_{i,2})=(R',i')$;

    \item $\delta(z_{i,1})=(R,i,R',i',j)$ and
      $\delta(z_{i,2})=(R,i,R',i',j \pm 1)$.


    \end{enumerate}
  \end{enumerate}
  With each variable $x$ in \xbf, we associate a tuple $\ubf_x$ of
  distinct variables. If $\delta(x)$ is of the form $(R,i)$, then the
  length of $\ubf_x$ matches the arity of $R$ and $\ubf_x$ is called a
  \emph{variable block}. If $\delta(x)$ is of the form
  $(R,i,R',i',j)$, then the length of $\ubf_x$ is the sum of the
  arities $n$ and $n'$ of $R$ and $R'$; the first $n$ variables in
  $\ubf_x$ are then also called a \emph{variable block}, and so are
  the last $n'$ variables. Variable blocks will either be disjoint or
  identical. Identities are minimized such that the following
  conditions are satisfied:
  \begin{enumerate}[align=left]

  \item[(I1)] if $x$ occurs in some $\ybf_i$, then $\ubf_x=\ybf_i$;

  \item[(I2)] if Case~2a applies to $\mn{eq}(z_{i,1},z_{i,2})$, then
    $\ubf_{z_{i,1}}$ is identical to the first variable block in~$\ubf_{z_{i,2}}$;

  \item[(I3)] if Case~2b applies to $\mn{eq}(z_{i,1},z_{i,2})$, then
    $\ubf_{z_{i,2}}$ is identical to the second variable block in~$\ubf_{z_{i,1}}$;

  \item[(I4)] if Case~2c 
    applies to $\mn{eq}(z_{i,1},z_{i,2})$,
    then the first variable blocks of $\ubf_{z_{i,1}}$ and $\ubf_{z_{i,2}}$
    are identical, and so are the second variable blocks.

  \end{enumerate}
  Regarding (I1), note that $x$ cannot occur in more than one $\ybf_i$
  because of ($*$), thus the condition can always be satisfied
  `without conflicts'.
  %
  %
  Then include in $\Gamma'$
  the rule
  \[ 
  \begin{array}{rcl}
     P^{\mu_0}_0(\xbf'_0) &\leftarrow& P^{\mu_1}_1(\xbf'_1) \wedge \cdots \wedge
     P^{\mu_{\ell_1}}_{\ell_1}(\xbf'_{\ell_1}) \,
     \wedge \\[1mm]
     && R_1(\ybf_1) \wedge \cdots \wedge 
     R_{\ell_2}(\ybf_{\ell_2}) \,
     \wedge \\[1mm]
     &&
     W
  \end{array}
  \]
  such that
  \begin{enumerate}[align=left]

  \item[(R1)] if the $k$-th component in $\xbf_0$ is $x$, then
    $\mu_i(k)=\delta(x)$;

  \item[(R2)] $\xbf'_i$ is obtained from $\xbf_i$ by replacing each variable
    $x$ with $\ubf_x$;
            
  \item[(R3)] $W$ contains the following atoms:
    \begin{itemize}

    \item for each variable $x \in \xbf$ with $\delta(x)$ of the form
      $(R,i)$, an atom $R(\wbf)$ where the $i$-th component of $\wbf$
      is $x$ and all other variables are distinct and fresh;

    \item for each variable $x \in \xbf$ with $\delta(x)$ of the form
      $(R,i,R',i',j)$, atoms $R(\wbf), R'(\wbf')$ where the $i$-th
      component of $\wbf$ and the $i'$-th component of $\wbf'$ is $x$
      and all other variables are distinct and fresh.

    \end{itemize}
  \end{enumerate}
  As an example, consider the following rule in $\Gamma$:
  \[
      \mn{goal}() \leftarrow P(x_1,x_2) \wedge R(x_1,x_2) \wedge \mn{eq}(x_2,x_3)
  \]
  where $R$ is EDB and $P$ IDB, and let $\delta(x_1)=(R,1)$,
  $\delta(x_2)=(R,2)$, and $\delta(x_3)=(R,2,R,1,1)$. Note that
  Case~2a applies to $\mn{eq}(x_2,x_3)$. We have
  $\ubf_{x_1}=\ubf_{x_2}=x_1x_2$ and $\ubf_{x_3}=x_1x_2u_1u_2$ and
  thus obtain the following rule in $\Gamma'$:
  \[
  \begin{array}{rcl}
    \mn{goal}() &\leftarrow& 
    P^\mu(x_1,x_2,x_1,x_2,u_1,u_2) \wedge R(x_1,x_2) \, \wedge \\[1mm]
    && R(x_1,w_1) \wedge R(w_2,x_2) \wedge 
    R(w_3,x_3) \wedge R(x_3,w_4)
  \end{array}
  \]
  where the last line corresponds to $W$ above, and where
  $\mu(1)=(R,1)$ and $\mu(2)=(R,2,R,1,1)$.


  We have have to show that $I \models \Gamma'$ iff $I^g \models
  \Gamma$ for any $\Sbf_E$-instance $I$. There is a correspondence
  between extensions of $I^g$ to the IDB relations in $\Gamma$ and
  extensions of $I$ to the IDB relations in $\Gamma'$. More precisely,
  a fact $P^\mu(\abf)$ in an extension of $I$ represents a fact
  $P(\bbf)$ in an extension of $I^g$ as follows (and vice versa): for
  each $i \in \mn{pos}(P)$, let $\abf_i$ be the subtuple of $\abf$
  that starts at position $\sum_{\ell=1..i-1} q_\ell$ and is of length
  $q_i$ (where, as before, $q_\ell$ is the arity of $R$ if $\mu(\ell)=(R,i)$ and
  $q_\ell$ is the sum of the arities of $R$ and $R'$ if
  $\mu(\ell)=(R,i,R',i',j)$); the $i$-th constant
  in \bbf is $b_{R(\abf_i),j}$ if $\mu(i)=(R,j)$ and
  $b_{R(\cbf),j,R'(\cbf'),j',\ell}$ if $\mu(i)=(R,j,R',j',\ell)$ and
  $\abf_i=\cbf\cbf'$.

  One essentially has to show that every application of a rule from
  $\Gamma'$ in an extension of $I$ can be reproduced by an application
  of a rule from $\Gamma$ in the corresponding extension of $I^g$, and
  vice versa. We only sketch the details. First let $J^g$ be an
  extension of $I^g$ to the IDB relations in $\Gamma$ and let $P(\ybf)
  \leftarrow q(\xbf)$ be a rule in $\Gamma$ applicable in $J^g$, and
  $h$ a homomorphism from $q(\xbf)$ to $J^g$ such that $P(h(\ybf))
  \notin J^g$. Since $\Gamma$ is closed under contractions of rules,
  we can assume that $h$ is injective.  Let
  \[ 
  \begin{array}{rcl}
     q(\xbf) & = & P_1(\xbf_1) \wedge \cdots \wedge
     P_{\ell_1}(\xbf_{\ell_1}) \,
     \wedge \\[1mm]
     && R_1(\ybf_1) \wedge \cdots \wedge 
     R_{\ell_2}(\ybf_{\ell_2})  \,
     \wedge \\[1mm]
     &&
     \mn{eq}(z_{1,1},z_{1,2}) \wedge \cdots \wedge 
     \mn{eq}(z_{\ell_3,1},z_{\ell_3,2})
  \end{array}
  \]
  such that all $P_i$ are IDB, all $R_i$ EDB, and an equality atom
  $\mn{eq}(x,y)$ is included in the third line if and only if at least
  one of $h(x)$ and $h(y)$ is not of the form $b_p$. Consequently, for
  all variables $x$ that occur in the second line, $h(x)$ is of the
  form $b_p$. One can now verify that Condition~($*$) is
  satisfied. Assume that this is not the case. The first case is that
  that there are distinct atoms $R_i(\ybf_i)$ and $R_j(\ybf_j)$ that
  share a variable $x$. In $I^g$, every constant of the form $b_p$
  occurs in exactly one fact that only contains constants of the form
  $b_p$.  Thus, $h$ must take $R_i(\ybf_i)$ and $R_j(\ybf_j)$ to the
  same fact in $J^g$. Since $h$ is injective, $R_i(\ybf_i)$ and
  $R_j(\ybf_j)$ must be identical which is a contradiction. The second
  case is that there is an atom $R_i(\ybf_i)$ in which a variable
  occurs more than once. This is in contradiction to $h$ being a 
  homomorphism to $J^g$.

  Now define a map $\delta:\xbf \rightarrow \Omega$ by putting
  $\delta(x)=p$ if $h(x)=b_p$ and $\delta(x)=(p,p',i)$ if
  $h(x)=b_{p,p',i}$. It can be verified that the two conditions
  required of $\delta$ are satisfied. We thus obtain a corresponding
  rule in $\Gamma'$. It can be verified that applying this rule in the
  extension $J$ of $I$ corresponding to $J^g$ adds the fact that
  corresponds to $P(h(\ybf))$. 


  Conversely, let $J$ be an extension of $I$ to the IDB relations in
  $\Gamma'$ and let 
  \[ 
  \begin{array}{rcl}
     P^{\mu_0}_0(\xbf'_0) &\leftarrow& P^{\mu_1}_1(\xbf'_1) \wedge \cdots \wedge
     P^{\mu_{\ell_1}}_{\ell_1}(\xbf'_{\ell_1}) \,
     \wedge \\[1mm]
     && R_1(\ybf_1) \wedge \cdots \wedge 
     R_{\ell_2}(\ybf_{\ell_2}) \,
     \wedge \\[1mm]
     &&
     W
  \end{array}
  \]
  be a rule in $\Gamma'$ and $h$ a homomorphism from the rule body to
  $J$ such that $P^{\mu_0}(h(\xbf'_0))$ is not in $J$. This rule was
  derived 
  from a rule
  \[ 
  \begin{array}{rcl}
     P_0(\xbf_0) &\leftarrow& P_1(\xbf_1) \wedge \cdots \wedge
     P_{\ell_1}(\xbf_{\ell_1}) \,
     \wedge \\[1mm]
     && R_1(\ybf_1) \wedge \cdots \wedge 
     R_{\ell_2}(\ybf_{\ell_2})  \,
     \wedge \\[1mm]
     &&
     \mn{eq}(z_{1,1},z_{1,2}) \wedge \cdots \wedge 
     \mn{eq}(z_{\ell_3,1},z_{\ell_3,2})
  \end{array}
  \]
  in $\Gamma$ and a map $\delta: \xbf \rightarrow \Omega$, $\xbf$ the
  variables in the latter rule. We define a map $h'$ from $\xbf$ to
  $\adom(J^g)$, where $J^g$ is the extension of $I^g$ that corresponds
  to $J$. Let $x \in \xbf$. If $\delta(x)=(R,i)$ and $h(\ubf_x)=\abf$,
  then set $h'(x)=b_{R(\abf),i}$. If $\delta(x)=(R,i,R',i',j)$ and
  $h(\ubf_x)=\abf\abf'$, then set $h'(x)=b_{R(\abf),i,R'(\abf'),i',j}$.
  We argue that $h'$ is a homomorphism from the body of the
  latter rule to $J^g$. There are three cases:
  \begin{itemize}

  \item Consider an atom $P_i(\xbf_i)$. Let $\xbf_i=x_1 \cdots
    x_n$. Then there is a corresponding atom $P_i^{\mu_i}(\xbf_i)$ in
    the former rule and thus $P_i^{\mu_i}(h(\xbf_i)) \in J$. For each
    $j \in \mn{pos}(P_i)$, let $\abf_j$ be the subtuple of $h(\xbf_i)$
    that starts at position $\sum_{\ell=1..j-1} q_\ell$ and is of
    length $q_j$. Define the tuple \bbf by letting the $j$-th constant
    be $b_{R(\abf_j),\ell}$ if $\mu(j)=(R,\ell)$ and
    $b_{R(\cbf),\ell,R'(\cbf'),\ell',k}$ if
    $\mu(j)=(R,\ell,R',\ell',k)$ and $\abf_j=\cbf\cbf'$. By (R3), all
    constants in \bbf occur in the domain of $J^g$. Moreover,
    $P_i(\bbf) \in J^g$. It thus remains to observe that
    $h'(\xbf_i)=\bbf$, which follows from (R1) and (R2) and the
    definition of $h'$. 

  \item Consider an atom $R_i(\ybf_i)$. Let
    $\ybf_i=y_1 \cdots y_n$. Then the atom $R_i(\ybf_i)$ must also be in the
    former rule and thus $R_i(h(\ybf_i)) \in J$, yielding
    $R(b_{R_i(h(\ybf_i)),1},\dots,b_{R_i(h(\ybf_i)),n}) \in J^g$. By Condition~1 imposed on $\delta$, we
    have $\delta(y_j)=(R_i,j)$ for each $j$. Moreover, by (I1) we must
    have $\ubf_{y_j}=\ybf_i$ for each $j$. Thus, the definition of
    $h'$ yields $h'(\ybf_i)=b_{R_i(h(\ybf_i)),1}\cdots
    b_{R_i(h(\ybf_i)),n}$ and we are done.

  \item Consider an atom $\mn{eq}(z_{i,1},z_{i,2})$. We know that one
    of the Cases~2a to 2d apply to $\mn{eq}(z_{i,1},z_{i,2})$. We only
    treat the first case explicitly. Thus assume that
    $\delta(z_{i,1})=(R,j)$ and $\delta(z_{i,2})=(R,j,R',j',1)$. By
    definition, $h'(z_{i,1})=b_{R(h(\ubf_{z_{i,1}})),j}$ and
    $h'(z_{i,2})=b_{R(\cbf),j,R'(\cbf'),j',1}$ where
    $h(\ubf_{z_{i,1}})=\cbf\cbf'$.  By (I2), $\ubf_{z_{i,1}}$ is
    identical to the first variable block in~$\ubf_{z_{i,2}}$
    and thus $h(\ubf_{z_{i,1}})=\cbf$. By definition if $I^g$, $J^g$
    contains
    $\mn{eq}(b_{R(\cbf),j},b_{R(\cbf),j,R'(\cbf'),j',1})$
    and we are done.

  \end{itemize}
  It can now be verified that the application of the latter rule adds
  to $J^g$ the fact that corresponds to $P^{\mu_0}(h(\xbf'_0))$.
\end{proof}
DLog-rewritability of CSPs is {\sc NP}-complete
\cite{DBLP:journals/logcom/Barto16,Chen2017} and it
was observed in \cite{DBLP:journals/tods/BienvenuCLW14} that this
result lifts to generalized CSPs.  It thus follows from
Theorems~\ref{th:mmsnpToSimple} and~\ref{th:simpleToCsp} and
Lemma~\ref{lem:girthLemForDLog} that DLog-rewritability of Boolean
MDDLog programs that have equality is decidable in 2\NExpTime.  It is
straightforward to verify that the 2\NExpTime lower bound for
Datalog-rewritability of MDDLog programs from \cite{KR-submitted}
applies also to programs that have equality.
\begin{thm}
\label{thm:DLogBoolWithEq}
  For Boolean MDDLog programs that have equality, Datalog-rewri\-tability
  is 2\NExpTime-complete. 
\end{thm}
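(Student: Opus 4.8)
The plan is to combine the machinery already assembled in the excerpt. The theorem has two halves: the upper bound (2\NExpTime membership) and the lower bound (2\NExpTime-hardness). For the upper bound, I would chain together the reductions already established. Starting from a Boolean MDDLog program $\Pi$ that has equality, Theorems~\ref{th:mmsnpToSimple} and~\ref{th:simpleToCsp} produce a set of templates $S_\Pi$ with the stated semantic correspondence, and—crucially—as noted in the text preceding Section~\ref{sect:dlog-rewr}, when $\Pi$ has equality, every template in $S_\Pi$ has equality. The size bounds from Proposition~\ref{prop:rewrCorrespARGH} give $|S_\Pi|$ and each $|T|$ bounded doubly exponentially in the size of $\Pi$, and the construction runs in polynomial time in $\sum_{T \in S_\Pi}|T|$.

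The heart of the upper bound is the observation that, for templates that have equality, Datalog-rewritability transfers cleanly across the girth barrier. Concretely, I would invoke Point~2 of Lemma~\ref{prop:rewrCorresp} (which applies to $\Qmc = \text{DLog}$) to convert a Datalog-rewriting of $\Pi$ into one of $\Pi_S$ that is sound on high girth and complete, and conversely Point~1 to pull rewritings back. The gap—soundness only on high-girth instances—is exactly what Lemma~\ref{lem:girthLemForDLog} repairs \emph{for templates that have equality}: if $\text{coCSP}(S_\Pi)$ is DLog-rewritable on instances of girth exceeding $g$, then it is DLog-rewritable outright. Thus $\Pi$ is Datalog-rewritable iff $\text{coCSP}(S_\Pi)$ is. Finally, DLog-rewritability of (generalized) CSPs is decidable in \NP{} \cite{DBLP:journals/logcom/Barto16,Chen2017}, as observed to lift to generalized CSPs in \cite{DBLP:journals/tods/BienvenuCLW14}. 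Running this \NP{} test on $S_\Pi$, whose total size is doubly exponential in $|\Pi|$, yields a \NExpTime{} procedure relative to that doubly-exponential input, hence a 2\NExpTime{} decision procedure in $|\Pi|$.

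For the lower bound, I would simply appeal to the existing 2\NExpTime{} hardness result for Datalog-rewritability of MDDLog programs from \cite{KR-submitted} and verify that the reduction it uses produces (or can be massaged into producing) programs that have equality. As the excerpt states, this is straightforward: one adds the distinguished relation $\mn{eq}$ together with the mandatory copy rules $P(x) \wedge \mn{eq}(x,y) \rightarrow P(y)$ and $P(y) \wedge \mn{eq}(x,y) \rightarrow P(x)$ for each IDB $P$, and checks that this addition does not affect the reduction's correctness—since no $\mn{eq}$-facts appear in the instances produced by the reduction, the copy rules are inert and the existing hardness argument goes through verbatim.

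The main obstacle is making sure that the two directions of the girth transfer compose correctly with the two directions of the MDDLog-to-CSP translation, so that Datalog-rewritability of $\Pi$ really is \emph{equivalent} to that of $\text{coCSP}(S_\Pi)$ rather than merely implied in one direction. The delicate point is that Point~2 of Lemma~\ref{prop:rewrCorresp} only guarantees a rewriting of $\Pi_S$ that is sound on high girth; everything hinges on Lemma~\ref{lem:girthLemForDLog} to close this gap, and that lemma is precisely where the equality hypothesis is consumed (via the $I^g$ construction). I would therefore present the equivalence carefully, tracing soundness and completeness separately through each step, and emphasize that without equality only the soundness-on-high-girth direction survives—which is exactly why the general (non-equality) case remains open.
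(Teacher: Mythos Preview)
Your proposal is correct and follows essentially the same approach as the paper: the upper bound chains Theorems~\ref{th:mmsnpToSimple} and~\ref{th:simpleToCsp} with Lemma~\ref{prop:rewrCorresp} and the equality-specific Lemma~\ref{lem:girthLemForDLog} to reduce to {\sc NP}-decidable DLog-rewritability of generalized CSPs, and the lower bound invokes the 2\NExpTime hardness from \cite{KR-submitted} observing it applies to programs with equality. Your presentation is more explicit than the paper's (which compresses the argument into a single sentence preceding the theorem), but the logical structure is identical.
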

%
%
Regarding MDDLog programs that do not have equality, the above yields
a sound but possibly incomplete algorithm for deciding
DLog-rewritability. Let us make this more precise.  For an MDDLog
program $\Pi$ that does not have equality, we use $\Pi^=$ to denote
the extension of $\Pi$ with the fresh EDB relation \mn{eq} and the
above rules. If $\Pi$ has equality, then $\Pi^=$ simply denotes~$\Pi$.
\begin{lem}
\label{lem:trivial}
For MDDLog programs $\Pi$, DLog-rewitability of $\Pi^=$ implies
DLog-rewritability of $\Pi$.
\end{lem}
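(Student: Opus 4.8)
The plan is to exploit that, on instances which do not mention $\mn{eq}$, the program $\Pi^=$ behaves exactly like $\Pi$, and then to strip the $\mn{eq}$-rules from a given Datalog-rewriting of $\Pi^=$. If $\Pi$ already has equality there is nothing to prove, since then $\Pi^=$ denotes $\Pi$; so I assume that $\Pi$ does not have equality, which means that $\mn{eq}$ is fresh and occurs neither in the EDB schema $\Sbf_E$ of $\Pi$ nor in any rule of $\Pi$. The central observation I would establish first is that for every $\Sbf_E$-instance $I$ (which by definition contains no $\mn{eq}$-fact) and every tuple $\abf \subseteq \adom(I)$, we have $I \models \Pi(\abf)$ iff $I \models \Pi^=(\abf)$.

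The $\Rightarrow$ direction is immediate because $\Pi \cup I \subseteq \Pi^= \cup I$ as sets of FO sentences, so every consequence of the former is one of the latter. For $\Leftarrow$ I would argue contrapositively: given a model $M$ of $\Pi \cup I$ with $M \not\models \mn{goal}(\abf)$, let $M'$ agree with $M$ except that $\mn{eq}^{M'} = \emptyset$. Since $I$ has no $\mn{eq}$-fact and $\Pi$ does not mention $\mn{eq}$, the structure $M'$ still satisfies $\Pi \cup I$ and still falsifies $\mn{goal}(\abf)$; moreover the $\mn{eq}$-rules of $\Pi^=$ hold vacuously in $M'$, because each of their bodies contains an $\mn{eq}$-atom that can never be matched. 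Hence $M' \models \Pi^= \cup I$ with $M' \not\models \mn{goal}(\abf)$, so $I \not\models \Pi^=(\abf)$. Phrased operationally, this is just the statement that the $\mn{eq}$-rules are inert on $\mn{eq}$-free instances.

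Now let $\Gamma$ be a Datalog-rewriting of $\Pi^=$; it is a program over EDB schema $\Sbf_E \cup \{\mn{eq}\}$ in which $\mn{eq}$, being an EDB relation, can occur only in rule bodies. I would obtain $\Gamma'$ by deleting from $\Gamma$ every rule whose body contains an $\mn{eq}$-atom, so that $\Gamma'$ is a Datalog program over $\Sbf_E$. On any $\Sbf_E$-instance $I$ the deleted rules never fire, because matching an $\mn{eq}$-atom requires an $\mn{eq}$-fact and there is none; thus $\Gamma$ and $\Gamma'$ derive exactly the same facts from $I$, and in particular $I \models \Gamma'(\abf)$ iff $I \models \Gamma(\abf)$. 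Chaining equivalences then gives, for every $\Sbf_E$-instance $I$ and every tuple $\abf \subseteq \adom(I)$, that $I \models \Gamma'(\abf)$ iff $I \models \Gamma(\abf)$ iff $I \models \Pi^=(\abf)$ (as $\Gamma$ rewrites $\Pi^=$) iff $I \models \Pi(\abf)$ (by the central observation), so $\Gamma'$ is a Datalog-rewriting of $\Pi$.

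There is essentially no hard part here, which is why the lemma is labelled trivial. The only points needing a word of care are the inertness of the $\mn{eq}$-rules and of the $\mn{eq}$-mentioning rules of $\Gamma$ on $\mn{eq}$-free inputs, both of which rest on the single fact that such an input interprets $\mn{eq}$ as the empty relation, together with the observation that, because $\mn{eq}$ is EDB, deleting its body occurrences already yields a program over $\Sbf_E$.
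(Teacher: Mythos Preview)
Your argument is correct and is exactly the approach the paper takes: convert a Datalog-rewriting of $\Pi^=$ into one of $\Pi$ by dropping every rule that mentions $\mn{eq}$. You have simply spelled out the two easy supporting facts (that $\Pi$ and $\Pi^=$ coincide on $\mn{eq}$-free instances, and that the dropped rules are inert there) that the paper leaves implicit.
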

Lemma~\ref{lem:trivial} follows from the trivial observation that any
DLog-rewriting of $\Pi^=$ can be converted into a DLog-rewriting of
$\Pi$ by dropping all rules that use the relation $\mn{eq}$. It is an
interesting open question whether the converse of
Lemma~\ref{lem:trivial} holds. Due to Lemma~\ref{lem:trivial}, a sound
but possibly incomplete algorithm for unrestricted MDDLog programs
$\Pi$ can thus be formulated as follows: first replace $\Pi$ with
$\Pi^=$ and then decide DLog-rewritability as per
Theorem~\ref{thm:DLogBoolWithEq}. We speculate that this algorithm is
actually complete. In particular, for CSPs it is known that adding
equality does preserve Datalog-rewritability \cite{LaroseZadori07},
and completeness of our algorithm is equivalent to an analogous result
holding for MDDLog.

\subsection{Canonical Datalog-Rewritings}

For constructing actual DLog-rewritings instead of only deciding their
existence, \emph{canonical Datalog programs} play an important role.
Feder and Vardi show that for every CSP template $T$ and all
$\ell,k>0$, one can construct an $(\ell,k)$-Datalog
program that is canonical for $T$ in the sense that if there is any
$(\ell,k)$-Datalog program which is equivalent to the complement of
$T$, then the canonical one is \cite{DBLP:journals/siamcomp/FederV98}.
%
%
%
In this section, we show that there are similarly simple canonical
Datalog programs for Boolean MDDLog.  Note that the existence of
canonical Datalog programs for MMSNP (and thus for Boolean MDDLog) is
already known from \cite{DBLP:journals/jcss/BodirskyD13}. However, the
construction given there is more general and rather complex,
proceeding via an infinite template and exploiting that it is
$\omega$-categorial. This makes it hard to analyze the exact structure
and size of the resulting canonical programs.  Here, we define
canonical Datalog programs for Boolean MDDLog programs in a more
elementary way. In contrast to the previous subsection, we do not
assume that equality is available.

Let $\Pi$ be a Boolean MDDLog program over EDB schema $\Sbf_E$ and
with IDB relations from $\Sbf_I$. Further let $0 \leq \ell < k$.  We
aim to construct a canonical $(\ell,k)$-DLog program for~$\Pi$.
The most important properties of this program is that it is sound
for $\Pi$ and complete for $\Pi$ on $\Sbf_E$-instances of treewidth
$(\ell,k)$.
We first convert $\Pi$ into a DDLog program $\Pi'$ that is equivalent
to $\Pi$ on instances of treewidth $(\ell,k)$ and then construct the
canonical program for $\Pi'$ rather than for $\Pi$. Unlike~$\Pi$, the
new program $\Pi'$ is not monadic. Informally, the canonical program
simulates $\Pi$ on $\Sbf_E$-instances of treewidth $(\ell,k)$
proceeding in a bag-by-bag fashion. This is enabled by the additional
non-monadic IDB relations introduced in $\Pi'$ which represent
information that needs to be passed from bag to bag.
We remark that the construction of $\Pi'$ is vaguely similar in spirit
to the first step of converting an MDDLog program into simple form,
c.f.\ Appendix~\ref{app:step1tosimple}. To describe it, we need a
preliminary.

With every MDDLog rule $p(\ybf) \leftarrow q(\xbf)$ where $q(\xbf)$ is of
treewidth $(\ell,k)$ and every $(\ell,k)$-tree decomposition
$(T,(B_v)_{v \in V})$ of $q(\xbf)$, 
we associate a set of \emph{rewritten rules} constructed as
follows. Choose a root $v_0$ of the undirected tree $T$, thus inducing
a direction. We write $v \prec v'$ if $v'$ is a successor of $v$ in
$T$ and use $\xbf_{v'}$ to denote $B_v \cap B_{v'}$. For all
$v \in V \setminus \{ v_0 \}$ such that $|\xbf_{v}|=m$, introduce a
fresh $m$-ary IDB relation $Q_{v}$; note that $m \leq \ell$. Now, the
set of rewritten rules contains one rule for each $v \in V$. For
$v \neq v_0$, the rule is
\[
p_v(\ybf_v) \vee Q_{v}(\xbf_{v})
\leftarrow q(\xbf)|_{B_v} \wedge \bigwedge_{v \prec v'} Q_{v'}(\xbf_{v'})
\]
where $p_v(\ybf_v)$ is the sub-disjunction of $p(\ybf)$ that contains
all disjuncts $P(\zbf)$ with $\zbf \subseteq B_v$ and $q(\xbf)|_{B_v}$
is the restriction of $q$ to the atoms that contain only variables
from $B_v$. For $v_0$, we include the same rule, but use only
$p_v(\ybf_v)$ as the head.  The set of rewritten rules associated with
$p(\ybf) \leftarrow q(\xbf)$ is obtained by taking the union of the
rewritten rules associated with $p(\ybf) \leftarrow q(\xbf)$ and any
$(T,(B_v)_{v \in V})$. 


The DDLog program $\Pi'$ is constructed from $\Pi$ as follows:
\begin{enumerate}

\item first extend $\Pi$ with all contractions of rules in $\Pi$;

\item then delete all rules with $q(\xbf)$ not of treewidth $(\ell,k)$
  and replace every rule $p(\ybf) \leftarrow q(\xbf)$ with $q(\xbf)$
  of treewidth $(\ell,k)$ with the rewritten rules associated with it.

\end{enumerate}
%
%
%
%
To clarify the relation between $\Pi$ and $\Pi'$, we remark that
it is possible to verify the following conditions; a detailed proof
is omitted since these conditions are not going to be used in 
what follows:
%
%
\begin{enumerate}[align=left]

\item[(I)] $\Pi'$ is sound for $\Pi$, that is, for all
  $\Sbf_E$-instances $I$, $I \models \Pi'$ implies $I \models
  \Pi$;

\item[(II)] $\Pi'$ is complete for $\Pi$ on $\Sbf_E$-instances of treewidth
  $(\ell,k)$,  that is, for all such instances $I$, $I \models \Pi$
  implies
  $I \models \Pi'$.

\end{enumerate}
Note that $\Pi'$ is not complete for $\Pi$ on instances of unrestricted
treewidth. For example, if $\Pi$ consists of only a goal rule whose
rule body is a $k+1$-clique (without reflexive loops), then $\Pi'$
returns false on the instance that consists of the same clique.
\begin{exa}
  Assume that $\Pi$ contains the rule 
  \[
    P_1(x) \vee P_2(z) 
    \leftarrow R(x,y_1) \wedge S(x,y_2) \wedge R(y_1,z) \wedge R(y_2,z)
  \]
  and consider the $(2,3)$-tree decomposition of the rule body that
  consists of two nodes $v,v'$, $v'$ successor of $v$, with
  $B_v=\{x,y_1,y_2\}$ and $B_{v'}=\{y_1,y_2,z\}$. In $\Pi'$, the rule
  is split into two rules
  \[
  \begin{array}{rcl}
    P_2(z) \vee Q_{v'}(y_1,y_2) 
    \leftarrow R(y_1,z) \wedge R(y_2,z) \\[1mm]
    P_1(z) \leftarrow R(x,y_1) \wedge S(x,y_2) \wedge Q_{v'}(y_1,y_2).
  \end{array}
  \]
  Informally, these rules are supposed to cover homomorphisms from the
  body of the original rule to an $\Sbf'_E$-instance of treewidth
  $(\ell,k)$ such that the variables in $B_{v'}$ are mapped to
  constants from some bag and variables from $B_v$ to constants from a
  neighboring bag. The IDB relation $Q_{v'}$ memorizes that we have
  already seen part of the rule body.
\end{exa}


Let $\Sbf'_I$ denote the additional IDB relations in $\Pi'$. We now
construct the canonical $(\ell,k)$-DLog program $\Gamma^c$ for $\Pi$.
Fix constants $a_1,\dots,a_\ell$. For $\ell' \leq \ell$, we use
$\Imf_{\ell'}$ to denote the set of all $\Sbf_I \cup
\Sbf'_{I}$-instances with domain $\abf_{\ell'}:=a_1,\dots,a_{\ell'}$.
The program uses $\ell'$-ary IDB relations $P_M$, for all $\ell' \leq
\ell$ and all $M \subseteq \Imf_{\ell'}$. It contains all rules
$q(\xbf) \rightarrow P_{M}(\ybf)$, $M \subseteq \Imf_{\ell'}$, that
satisfy the following conditions:
\begin{enumerate}

\item $q(\xbf)$ is over schema $\Sbf_E \cup \{ P_M \mid M
  \subseteq
  \Imf_{\ell'},~\ell' \leq \ell \}$ and contains at most $k$ variables;


\item for every extension $J$ of the $\Sbf_E$-instance $I_q|_{\Sbf_E}$ with
  $\Sbf_I \cup \Sbf'_I$-facts such that
  \begin{enumerate}

  \item $J$ satisfies all rules of $\Pi'$ and does not contain
    $\mn{goal}()$ and


  \item for each $P_{N}(\zbf) \in q$, $N \subseteq
    \Imf_{\ell''}$, there is an $L \in N$ such that
    $L[\zbf/\abf_{\ell''}]=J|_{\Sbf_I \cup
      \Sbf'_I,\zbf}$

  \end{enumerate}
  there is an $L \in M$ such that $L[\ybf/\abf_{\ell'}]=J|_{\Sbf_I \cup
    \Sbf'_I,\ybf}$

\end{enumerate}
where $I_q$ is $q$ viewed as an instance, $L[\xbf/\abf]$ denotes the
result of replacing the constants in \abf with the variables in~\xbf
(possibly resulting in identifications), and $J|_{\Sbf_I \cup
  \Sbf'_I,\xbf}$ denotes the simultaneous restriction of $J$ to schema
$\Sbf_I \cup \Sbf'_I$ and constants $\xbf$.\footnote{We could
  additionally demand that $M$ is minimal so that Condition~2 is
  satisfied, but this is not strictly required.}  We also include in
$\Gamma^c$ all rules of the form $P_\emptyset(\xbf)\rightarrow
\mn{goal}()$, $P_\emptyset$ of any arity from $0$ to $\ell$.

The intuition behind the construction of $\Gamma^c$ is as follows.
When starting with an input $\Sbf_E$-instance $I$ of treewidth
$(\ell,k)$ and then chasing with $\Gamma^c$, that is, exhaustively
applying these rules in an unspecified order, then the resulting
instance $I'$ represents all extensions $J$ of $I$ to the relations in
$\Sbf_I \cup \Sbf_I'$ that satisfy all rules in $\Pi'$ and do not
contain $\mn{goal}()$. A fact $P_M(\abf) \in I'$, $M \subseteq
\Imf_{\ell''}$, means that for every such $J$ there is an $L \in M$
such that $J$ contains the facts in $L[\abf/\abf_{\ell''}]$. Thus, the
set $M$ in the index of $P_M$ should be read disjunctively. Note that
$P_\emptyset(\abf)\in I'$ then indicates that every extension of $I$
that satisfies all rules in $\Pi'$ must contain $\mn{goal}()$. The
bodies of rules in $\Gamma^c$ are large enough to cover the
restriction of $I$ to the constants from any single bag. This suffices
only because we have transitioned from $\Pi$ to $\Pi'$ before
constructing $\Gamma^c$.

%
The following are central properties of canonical DLog programs.
\begin{lem}
\label{lem:cancompllk}\hfill
\begin{enumerate}
\item $\Gamma^c$ is sound for $\Pi$;
\item $\Gamma^c$ is complete for $\Pi$ on instances of treewidth $(\ell,k)$.
\end{enumerate}
%
\end{lem}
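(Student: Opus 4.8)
The plan is to prove both points by relating derivations of $\Gamma^c$ to the set of ``good'' extensions of the input, where a \emph{good extension} of an $\Sbf_E$-instance $K$ is an extension $J$ of $K$ to $\Sbf_I\cup\Sbf'_I$ that satisfies all rules of $\Pi'$ and does not contain $\mn{goal}()$. For a tuple $\abf$ over $\adom(K)$ of length $\ell'\le\ell$, write $\mathrm{Col}_K(\abf)$ for the set of all $J|_{\Sbf_I\cup\Sbf'_I},\abf \in \Imf_{\ell'}$ obtained from good extensions $J$ of $K$, transported to the canonical domain $\abf_{\ell'}$. Note that $K\models\Pi'$ iff $K$ has no good extension iff $\mathrm{Col}_K(\abf)=\emptyset$ for some (equivalently, every) $\abf$; by soundness of $\Pi'$ (Property~(I)), $K\models\Pi'$ implies $K\models\Pi$, and on treewidth-$(\ell,k)$ instances the converse holds by Property~(II).

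For soundness I would establish, by induction on the stages of the least-fixpoint evaluation of $\Gamma^c$ on $I$, the invariant that whenever $\Gamma^c$ derives $P_M(\abf)$ then $\mathrm{Col}_I(\abf)\subseteq M$. In the inductive step a rule $q(\xbf)\to P_M(\ybf)$ fires via a substitution $\sigma$ with $\sigma\ybf=\abf$ and $P_N(\sigma\zbf)$ already derived for each IDB atom $P_N(\zbf)\in q$, so $\mathrm{Col}_I(\sigma\zbf)\subseteq N$ by the induction hypothesis. Given any good extension $J$ of $I$, I would pull it back along $\sigma$ to an extension $J_0$ of the local EDB instance $I_q|_{\Sbf_E}$, putting an $(\Sbf_I\cup\Sbf'_I)$-atom into $J_0$ exactly when its $\sigma$-image lies in $J$. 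Since $\sigma$ is a homomorphism of the EDB atoms of $q$ into $I$, every match of a $\Pi'$-rule in $J_0$ pushes forward to a match in $J$, so $J_0$ again satisfies $\Pi'$ and avoids $\mn{goal}()$; moreover each constraint of condition~(2b) is met because $J_0|_{\Sbf_I\cup\Sbf'_I},\zbf$ corresponds to the coloring of $\sigma\zbf$ in $J$, which lies in $\mathrm{Col}_I(\sigma\zbf)\subseteq N$. Condition~(2) then forces the coloring of $\ybf$ in $J_0$, i.e.\ the coloring of $\abf$ in $J$, into $M$; as $J$ was arbitrary this gives $\mathrm{Col}_I(\abf)\subseteq M$. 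Applying the invariant to any derived $P_\emptyset(\abf)$ yields $\mathrm{Col}_I(\abf)=\emptyset$, hence $I\models\Pi'$ and so $I\models\Pi$, establishing Point~1.

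For completeness on an instance $I$ of treewidth $(\ell,k)$, fix a rooted $(\ell,k)$-tree decomposition $(T,(B_v)_{v\in V})$ with root $v_0$, assign each fact of $I$ to a node whose bag covers it, let $I_v^{\uparrow}$ be the union of facts assigned to the subtree rooted at $v$, and let $\abf_v=B_v\cap B_{\mathrm{par}(v)}$ be the separator to the parent (with $\abf_{v_0}=\emptyset$). I would prove, bottom-up, the tight invariant that $\Gamma^c$ derives $P_{M_v}(\abf_v)$ where $M_v=\mathrm{Col}_{I_v^{\uparrow}}(\abf_v)$. For the step at a node $v$ with children $v_1,\dots,v_t$, the induction hypothesis supplies the atoms $P_{M_{v_i}}(\abf_{v_i})$, and I take $q$ to be the EDB facts assigned to $v$ together with these IDB atoms; it remains to check that $q\to P_{M_v}(\abf_v)$ belongs to $\Gamma^c$. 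Given a local $J_0$ meeting conditions~(2a) and~(2b), each $\abf_{v_i}$-coloring of $J_0$ lies in $M_{v_i}$, so by definition of $M_{v_i}$ there is a good extension $J_i$ of $I_{v_i}^{\uparrow}$ realizing it; gluing $J_0$ and the $J_i$ along the separators $\abf_{v_i}$ produces an extension $J$ of $I_v^{\uparrow}$ whose $\abf_v$-coloring agrees with $J_0$, so that coloring lies in $M_v$ and condition~(2) holds. At the root we have $I_{v_0}^{\uparrow}=I$ and, since $I\models\Pi'$ by Property~(II), $M_{v_0}=\emptyset$; thus $\Gamma^c$ derives $P_\emptyset()$ and the goal rule yields $I\models\Gamma^c$.

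The main obstacle is the gluing step: I must argue that the glued $J$ genuinely satisfies every rule of $\Pi'$ and not merely the pieces. This is where the specific shape of $\Pi'$ is essential---its rewritten rules are \emph{bag-local}, each body consisting of atoms inside a single bag together with separator atoms $Q_{v'}$ carrying the cross-bag information, so that any match of a $\Pi'$-rule in $J$ is confined to one piece (one bag plus its separators) and is therefore already satisfied there. A secondary point of care, in both directions, is the faithful transport of colorings through a possibly non-injective $\sigma$ and the attendant variable identifications, which is exactly what the notation $L[\xbf/\abf]$ and the closure under identifying variables (step~1 in the construction of $\Pi'$) are designed to handle.
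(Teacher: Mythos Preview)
Your soundness argument (Point~1) is correct and essentially matches the paper's: the invariant that $\mathrm{Col}_I(\abf)\subseteq M$ whenever $P_M(\abf)$ is derived is precisely the paper's claim, and your pull-back construction of $J_0$ along $\sigma$ is the right way to verify the inductive step.

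Your completeness argument (Point~2) has a genuine gap in the gluing step. You claim that ``any match of a $\Pi'$-rule in $J$ is confined to one piece'' because $\Pi'$-rules are bag-local, but the bags in question are bags of a tree decomposition of the \emph{rule body}, not of~$I$. Since $\Pi'$ contains the rewritten rules arising from \emph{every} $(\ell,k)$-tree decomposition of every rule body, it also contains rules whose body is an entire bag of size up to~$k$. Concretely, suppose $\Pi$ has a rule $P(x)\vee P(z)\leftarrow R(x,y)\wedge R(y,z)$ and $k\ge 3$; then this rule itself is in $\Pi'$ (via the trivial one-bag decomposition). If $R(a,b)$ is assigned to $v$ and $R(b,c)$ to a node in the subtree of $v_1$ (so $b\in\abf_{v_1}$ but $a\notin\adom(J_1)$, $c\notin\adom(J_0)$), then the body matches in the glued $J$ while matching in neither $J_0$ nor $J_1$, and nothing forces $P(a)$ or $P(c)$ into~$J$. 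So the glued $J$ need not satisfy $\Pi'$, and the rule $q\to P_{M_v}(\abf_v)$ you want may fail to be in $\Gamma^c$.

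The paper handles this by going in the other direction: assume $I\not\models\Gamma^c$, take a $\Gamma^c$-model $J$ with no $P_\emptyset$-facts, and build an $(\Sbf_I\cup\Sbf'_I)$-extension $J'$ of $I$ \emph{top-down}, choosing at each bag a local $\Pi'$-good extension consistent with the separator coloring already fixed. The crucial point is that the paper then shows $J'$ satisfies $\Pi$ (not $\Pi'$): for a match $h$ of a $\Pi$-rule body $q(\xbf)$, one first makes $h$ injective by identifying variables, then \emph{reads off} an $(\ell,k)$-tree decomposition of $q(\xbf)$ from $h$ and the decomposition of~$I$, and uses the rewritten rules of $\Pi'$ for \emph{that specific, aligned} decomposition. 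Each such rewritten rule's body then lands inside a single local $J_v$, which is $\Pi'$-good by construction, so an induction along the chain of rewritten rules produces a satisfied head disjunct of the original rule. In short, the paper avoids your gluing problem by exploiting the freedom to choose the decomposition of the rule body compatibly with that of~$I$; your bottom-up approach gives up this freedom and must cope with all decompositions at once, which does not go through.
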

\begin{proof}\ For Point~1, let $I$ be an $\Sbf_E$-instance with $I
  \models \Gamma^c$.  It suffices to show that $I \models \Pi'$. Let
  $I=I_1,I_2,\dots$ be the sequence of $\Sbf_E \cup \Sbf_I \cup
  \Sbf'_I$-instances obtained by chasing $I$ with $\Gamma^c$.
  We first note that the following can be proved by induction
  on $i$ (and using the definition of~$\Gamma_c$):
  \\[2mm]
  {\bf Claim.} If $P_M(\bbf) \in I_i$, $M \subseteq \Imf_{\ell'}$,
  then for every extension $J$ of $I$ to the relations in
  $\Sbf_I\cup\Sbf'_I$ that satisfies all rules of $\Pi'$ and does not
  contain $\mn{goal}()$, there is an $L \in M$ such that $L[\bbf/\abf_{\ell'}]
 =  J|_{\Sbf_I\cup\Sbf'_I,\bbf}$.
  \\[2mm]
  Since $I \models \Gamma^c$, there are $i>0$ and $\bbf \subseteq
  \adom(I)$ such that $P_\emptyset(\bbf) \in I_i$. By the claim,
  there is thus no extension $J$ of $I$ to the relations in $\Sbf_I
  \cup \Sbf'_I$ that satisfies all rules of $\Pi$ and does not contain
  $\mn{goal}()$. Consequently, $I \models \Pi'$.

 
  For Point~2, assume that $I \not\models \Gamma^c$ and let $(T,(B_v)_{v
    \in V})$ be an $(\ell,k)$-tree decomposition of~$I$, $T=(V,E)$.
  Then there is an extension $J$ of $I$ to the IDB relations in
  $\Gamma^c$ such that all rules in $\Gamma^c$ are satisfied and $J$
  contains no atom of the form $P_\emptyset(\bbf)$.

  We use $J$ to construct an extension $J'$ of $I$ to the relations in
  $\Sbf_I \cup \Sbf'_I$. Choose a root $v_0$ of $T$, thus inducing a
  direction on the undirected tree $T$. For all $v \in V$ and
  successors~$v'$ of $v$, choose an ordering $\cbf_{v,v'}$ of the
  constants in $B_v \cap B_{v'}$ and let ${\ell_{v,v'}}$ denote the
  number of these constants. Let
  $P_{M_1}(\cbf_{v,v'}),\dots,P_{M_r}(\cbf_{v,v'})$ be all facts of
  this form in $J$. By construction of $\Gamma^c$, there must be at
  least one such fact, and the fact $P_{M_1 \cap \cdots \cap
    M_r}(\cbf_{v,v'})$ must also be in $J$. Thus, we can associate
  with $v,v'$ a unique minimal set $M_{v,v'}$ so that
  $P_{M_{v,v'}}(\cbf_{v,v'}) \in J$.

  The construction of $J'$ proceeds top down over $T$. At all points,
  we maintain the invariant that
  \begin{itemize}[align=left]

  \item[($*$)] for all nodes $v \in V$ and successors $v'$ of $v$,
    there is an $L \in M_{v,v'}$ such that
    $L[\cbf_{v,v'}/\abf_{\ell_{v,v'}}] = J'|_{\Sbf_I \cup
      \Sbf'_I,\cbf_{v,v'}}$.

  \end{itemize}
  The construction of $J'$ starts at the root $v_0$ of $T$. There must
  be an
  extension $J_{v_0}$ of $I|_{B_{v_0}}$ with $S_I \cup S'_I$-facts such that
  \begin{enumerate}[label=(\roman*)]

  \item $J_{v_0}$ satisfies all rules of $\Pi$ and does not contain
    $\mn{goal}()$ 
    
  \item for each $P_M(\bbf) \in J|_{B_{v_0}}$, $M \subseteq
    \Imf_{\ell'}$, there is an
    $L \in M$
    such that $L[\bbf/\abf_{\ell'}] = J_{v_0}|_{S_i \cup S'_I,\bbf}$


  \end{enumerate}
  as, otherwise, a rule of $\Gamma^c$ would create an atom of the form
  $P_\emptyset(\cbf)$ in $J$. Start with putting $J'=I \cup J_{v_0}$.
  Note that for each successor $v$ of $v_0$, ($*$) is satisfied
  because of Point~(ii)
  and since $P_{M_{v_0,v}}(a_{v_0,v}) \in J|_{B_{v_0}}$.

  We proceed top-down over $T$. Assume that $v'$ is a successor of $v$
  and $B_v$ has already been treated. There must be an
  extension $J_{v'}$ of $I|_{B_{v'}}$ with $S_I \cup S'_I$-facts
  such that
  \begin{enumerate}[label=(\roman*)]

  \item $J_{v'}$ satisfies all rules of $\Pi$ and does not contain
    $\mn{goal}()$,

  \item $J_{v'}|_{\Sbf_I \cup \Sbf'_I,\cbf_{v,v'}}=J'|_{\Sbf_I \cup
      \Sbf'_I,\cbf_{v,v'}}$, and

  \item for each $P_M(\bbf) \in J|_{B_{v'}}$, $M \subseteq
    \Imf_{\ell'}$, there is an
    $L \in M$
    such that $L[\bbf/\abf_{\ell'}] = J_{v'}|_{S_i \cup S'_I,\bbf}$

  \end{enumerate}
  as, otherwise, because of ($*$) a rule of $\Gamma^c$ would create an
  atom of the form
  $P_M(\cbf_{v,v'})$ in $J$ with $M \subsetneq M_{v,v'}$, in contradiction
  to
  $M_{v,v'}$ being minimal with $P_{M_{v,v'}}(\cbf_{v,v'}) \in J$. 
  Put $J'=J' \cup J_{v'}$. It can again be verified that ($*$)
  is satisfied.


  By construction, the instance $J'$ does not contain $\mn{goal}()$
  and $(T,(B_v)_{v \in V})$ is also a tree decomposition of $J'$, that
  is, each EDB atom \emph{and each IDB atom} of $J'$ falls within some
  bag $B_v$.  We aim to show that $J'$ satisfies all rules of $\Pi$,
  thus $I \not \models \Pi$ as required.


  Let $\Pi_0$ be the result of closing $\Pi$ under contractions of
  rules and recall that $\Pi'$ is obtained from $\Pi_0$ by dropping
  and rewriting rules. Let $\rho$ be a rule in $\Pi$ and let $h$ be a
  homomorphism from its body to $J'$. We have to show that one of the
  disjuncts in the head of $\rho$ is satisfied under $h$.  $\Pi_0$
  contains the rule $\rho_0$ obtained from $\rho$ by identifying all
  variables $x,y$ such that $h(x)=h(y)$.
It clearly
suffices to show that  one of the
disjuncts in the head of $\rho_0$ is satisfied under $h$.
Note that $h$ is an injective homomorphism from the body $q(\xbf)$ of
$\rho_0$ to $J'$ which implies that $q(\xbf)$ is of treewidth
$(\ell,k)$. 
Moreover, we can read off an $(\ell,k)$-tree decomposition
$(T',(B'_v)_{v \in V'})$ of $q(\xbf)$ from $h$ and $(T,(B_v)_{v \in
  V})$.

In~$\Pi'$, $\rho_0$ and $(T',(B'_v)_{v \in V'})$ are rewritten into
rules $\rho_1,\dots,\rho_m$ such that no $\rho_i$ uses a fresh IDB
relation from the head of any $\rho_j$ with $j \geq i$ (that is, an
IDB relation that does not occur in $\Pi_0$, of arity at most $\ell$).
Let $\rho_i$ be $q_i(\xbf_i) \rightarrow R_{i,1}(\xbf_{i,1}) \vee \cdots
\vee R_{i,n_i}(\xbf_{i,n_i}) \vee Q_i(\zbf_i)$ where
$R_{i,1}(\xbf_{i,1}),\dots,R_{i,n_i}(\xbf_{i,n_i})$ are disjuncts that also
occur in the head of $\rho_0$ and $Q_i$ is a fresh IDB relation
introduced by the rewriting in the case that $i<m$ and $\mn{false}$ if
$i=m$ (by which we mean: there is no $Q_i(\zbf_i)$ disjunct in the latter
case).  One can show by induction on $i$ that for $1 \leq i\leq m$,
\begin{enumerate}

\item $R_{j,m}(h(\xbf_{j,t})) \in J'$ for some $j \leq i$ and $t \in
  \{1,\dots,n_j\}$ or

\item 
$Q_i(h(\zbf_i)) \in J'$.

\end{enumerate}
To see this, assume that Point~1 is not satisfied for some $i$. Then
Point~2 holds for all $j < i$.  By choice of $\rho_i$, there is
a $v \in V$ such that $h(\xbf_i) \subseteq B_v$. 
%
Thus $h$ is a homomorphism from $q_i(\xbf_i)$ to $J_v$, and
consequently there is a disjunct $R(\zbf)$ in the head of $\rho_i$
such that $R(h(\zbf)) \in J_v \subseteq J'$. This implies that one of
Points~1 or~2 is satisfied for $i$.


Note that Point~2 cannot hold for $i=m$ because the $Q_m$
disjunct is not present in $\rho_m$. Thus there is an $i \leq m$
such that $R_{i,j}(h(\xbf_{i,j})) \in J'$ for some $j$. Since
$R_{i,j}(\xbf_{i,j})$ occurs in the head of $\rho_0$, we are done.
\end{proof}
We are now ready to show that the canonical program is indeed
canonical, as detailed by the following theorem. For two Boolean DLog
programs $\Pi_1,\Pi_2$ over the same EDB schema $\Sbf_E$, we write
$\Pi_1 \subseteq \Pi_2$ if for every $\Sbf_E$-instance $I$, $I \models
\Pi_1$ implies $I \models \Pi_2$.
\begin{thm}
\label{thm:canonical}
Let $\Pi$ be a Boolean MDDLog program, $0 \leq \ell \leq k$, and
$\Gamma^c$ the canonical $(\ell,k)$-DLog program for $\Pi$. Then
\begin{enumerate}


\item $\Gamma \subseteq \Gamma^c$ for every $(\ell,k)$-DLog
  program $\Gamma$ that is sound for $\Pi$;

\item $\Pi$ is $(\ell,k)$-DLog-rewritable iff $\Gamma^c$ is a
  DLog-rewriting of~$\Pi$.

\end{enumerate}
\end{thm}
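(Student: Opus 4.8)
The theorem asserts two things about the canonical $(\ell,k)$-DLog program $\Gamma^c$: that it subsumes every sound $(\ell,k)$-DLog program, and that it is an actual rewriting exactly when an $(\ell,k)$-DLog-rewriting exists. The plan is to derive both points from Lemma~\ref{lem:cancompllk}, which already gives the two halves we need: $\Gamma^c$ is sound for $\Pi$, and $\Gamma^c$ is complete for $\Pi$ on instances of treewidth $(\ell,k)$.

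\medskip\noindent\textbf{Point~1.} First I would fix an arbitrary $(\ell,k)$-DLog program $\Gamma$ that is sound for $\Pi$, and an $\Sbf_E$-instance $I$ with $I \models \Gamma$; the goal is $I \models \Gamma^c$. The key observation is that soundness of $\Gamma$ for $\Pi$ forces every IDB fact derivable by $\Gamma$ to carry semantic content about $\Pi$ on the bag it lives over. Concretely, I would run the chase of $I$ under $\Gamma$ and show by induction on the derivation that whenever $\Gamma$ derives an IDB fact $R(\bbf)$ with $\bbf$ of length $\ell' \le \ell$, the set $M_R$ of $\Sbf_I\cup\Sbf'_I$-instances over $\abf_{\ell'}$ that are \emph{consistent} with $R(\bbf)$ --- meaning those $L$ for which some extension $J$ of $I$ realizes $L$ on $\bbf$ while still satisfying all rules of $\Pi'$ and avoiding $\mn{goal}()$ --- is exactly a set for which $\Gamma^c$ contains a matching rule deriving $P_{M_R}(\bbf)$ or a superset thereof. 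Since $\Gamma$ is sound, deriving $\mn{goal}()$ via a goal rule of $\Gamma$ means no such extension $J$ exists, which is precisely the condition under which $\Gamma^c$ fires a rule $P_\emptyset(\cbf) \to \mn{goal}()$. Thus $I \models \Gamma^c$. This gives $\Gamma \subseteq \Gamma^c$.

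\medskip\noindent\textbf{Point~2.} For the equivalence, the ``if'' direction is immediate: if $\Gamma^c$ is a DLog-rewriting of $\Pi$, then since $\Gamma^c$ is by construction an $(\ell,k)$-DLog program, $\Pi$ is $(\ell,k)$-DLog-rewritable. For ``only if'', suppose some $(\ell,k)$-DLog program $\Gamma$ is a rewriting of $\Pi$, i.e.\ $\Gamma$ is both sound and complete for $\Pi$. We already know from Lemma~\ref{lem:cancompllk}(1) that $\Gamma^c$ is sound for $\Pi$, so it remains only to show $\Gamma^c$ is complete for $\Pi$ on \emph{all} instances. Take any $I$ with $I \models \Pi$. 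By completeness of $\Gamma$ we get $I \models \Gamma$, and by Point~1 (which applies since $\Gamma$, being a rewriting, is sound) we get $\Gamma \subseteq \Gamma^c$, hence $I \models \Gamma^c$. Thus $\Gamma^c$ is complete for $\Pi$, and together with its soundness it is a rewriting.

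\medskip\noindent\textbf{Main obstacle.} The delicate step is the induction in Point~1: I must phrase the inductive invariant so that a fact $R(\bbf)$ produced by the sound program $\Gamma$ always ``certifies'' at least as much about the nonexistence of bad extensions $J$ as the corresponding canonical fact $P_M(\bbf)$ does, and so that the $M$ attached by $\Gamma^c$ is not too small. The subtlety is that $\Gamma^c$'s rules are indexed by \emph{exactly} the sets $M$ satisfying Condition~2 of its construction, whereas $\Gamma$'s IDB relations are opaque; the bridge is to read each $\Gamma$-fact through its semantic meaning (the set of extensions it rules out) and verify that this meaning is coarser than, hence implied by, a canonical fact. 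Soundness of $\Gamma$ is exactly what guarantees the meaning is nonvacuous at the goal step. I expect the bag-by-bag bookkeeping --- aligning the tree decomposition witnessing $I \models \Gamma$ with the overlap tuples $\cbf_{v,v'}$ used in $\Gamma^c$ --- to be routine but notation-heavy, mirroring the completeness argument already carried out in the proof of Lemma~\ref{lem:cancompllk}(2).
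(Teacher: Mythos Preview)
Your argument for Point~2 is correct and matches the paper's.

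Your argument for Point~1 has a genuine gap. The inductive invariant you propose --- that whenever $\Gamma$ derives $R(\bbf)$, the canonical program $\Gamma^c$ derives $P_{M_R}(\bbf)$ where $M_R$ is the set of $L$ admitting a \emph{global} extension $J$ of $I$ satisfying $\Pi'$ and avoiding $\mn{goal}()$ --- is not something $\Gamma^c$ can establish. The Claim inside the proof of Lemma~\ref{lem:cancompllk}(1) only tells you that every $P_M(\bbf)$ derived by $\Gamma^c$ has $M \supseteq M_R$; it does not say $\Gamma^c$ ever reaches $M = M_R$. On an arbitrary instance $I$ (which need not have treewidth $(\ell,k)$ at all), $\Gamma^c$ may well get stuck at some $M \supsetneq M_R$, because its rules only see $k$ variables at a time and cannot synthesize global information. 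So at the goal step, even though soundness of $\Gamma$ gives $I \models \Pi$ and hence $M_R = \emptyset$ for every $\bbf$, you have no way to conclude that $\Gamma^c$ actually derives $P_\emptyset(\bbf)$. Your closing remark about ``the tree decomposition witnessing $I \models \Gamma$'' is the tell: $I$ has no such decomposition in general.

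The paper's proof of Point~1 sidesteps this entirely. From $I \models \Gamma$, take a proof tree for $\mn{goal}()$ and read off from it an $\Sbf_E$-instance $J$ with $J \rightarrow I$, $J \models \Gamma$, and --- because $\Gamma$ is an $(\ell,k)$-DLog program --- $J$ of treewidth $(\ell,k)$. Now soundness of $\Gamma$ gives $J \models \Pi$, Lemma~\ref{lem:cancompllk}(2) (completeness of $\Gamma^c$ on treewidth-$(\ell,k)$ instances) gives $J \models \Gamma^c$, and homomorphism preservation of Datalog together with $J \rightarrow I$ gives $I \models \Gamma^c$. The whole point is to replace the uncontrolled instance $I$ by a treewidth-$(\ell,k)$ witness $J$ on which Lemma~\ref{lem:cancompllk}(2) applies; your direct induction on $I$ cannot do this.
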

\begin{proof}\ 
  Let $\Sbf_E$ be the EDB schema of $\Pi$.


  For Point~1, let $\Gamma$ be an $(\ell,k)$-DLog program that is
  sound for $\Pi$ and let $I$ be an $\Sbf_E$-instance with $I \models
  \Gamma$. From the proof tree for $\mn{goal}()$ from $I$ and $\Gamma$,
  we can construct an $\Sbf_E$-instance $J$ of treewidth $(\ell,k)$
  such that $J \models \Gamma$ and $J \rightarrow I$. It suffices to
  show that $J \models \Gamma_c$, which is easy: from $J \models
  \Gamma$, we obtain $J \models \Pi$ and Point~2 of
  Lemma~\ref{lem:cancompllk} yields $J \models \Gamma^c$.
  

  The ``if'' direction of Point~2 is trivial. For the ``only if''
  direction, assume that $\Pi$ is $(\ell,k)$-DLog-rewritable and let
  $\Gamma$ be a concrete rewriting.  We have to show that $\Gamma^c$
  is sound and complete for $\Pi$. The former is Point~1 of
  Lemma~\ref{lem:cancompllk}. For the latter, we get $\Pi \subseteq
  \Gamma$ since $\Gamma$ is a rewriting of $\Pi$ and $\Gamma \subseteq
  \Gamma^c$ from Point~1, thus $\Pi \subseteq \Gamma^c$ as required.
\end{proof}
Note that by Point~1 of Theorem~\ref{thm:canonical}, the canonical
$(\ell,k)$-DLog program for an MDDLog program $\Pi$ is interesting
even if $\Pi$ is not rewritable into an $(\ell,k)$-DLog program as it
is the strongest sound $(\ell,k)$-DLog approximation of $\Pi$.

%
%

\section{Non-Boolean MDDLog Programs}
\label{sec:answer}

We lift the results about the complexity of rewritability, about
canonical DLog programs, and about the shape of rewritings and
obstructions from the case of Boolean MDDLog programs to the
non-Boolean case. For all of this, a certain extension of
$(\ell,k)$-Datalog programs with parameters plays a central role.
We thus begin by introducing these extended programs. 

\subsection{Deciding Rewritability}


An \emph{$(\ell,k)$-Datalog program with $n$ parameters} is an $n$-ary
$(\ell+n,k+n)$-Datalog program in which all IDBs have arity at least
$n$ and where in every rule, all IDB atoms agree on the variables used
in the last $n$ positions (both in rule bodies and heads and including
the \mn{goal} IDB). The last $n$ positions of IDBs are called
\emph{parameter positions}. To visually
separate the parameter positions from the non-distinguished positions,
we use ``$|$'' as a delimiter to replace the usual comma, writing
e.g.\
\[
P(x_1,x_2\,|\,y_1,y_1,y_2) \leftarrow Q(y_1\,|\,y_1,y_1,y_2) \wedge
R(x_1,y_1,y_2,x_2)
\]
where $P,Q$ are IDB, $R$ is EDB, and there are three parameter
positions.  Note that, by definition, all variable positions in
\mn{goal} atoms are parameter positions.
\begin{exa}
\label{ex:par}
  The following is an MDLog program with one parameter that returns
  all constants which are on an $R$-cycle, $R$ a binary EDB relation:
\[
\begin{array}{rcl}
     P(y \,|\, x) &\leftarrow& R(y \,|\, x)  
     \\[\myeqnsep]
     P(z \,|\, x) &\leftarrow& P(y \,|\, x) \wedge R(z,y)
     \\[\myeqnsep]
     \mn{goal}(x) &\leftarrow& P(x \,|\, x)
\end{array}
\]
\end{exa}
Parameters in Datalog programs play a similar role as parameters to
least fixed-point operators in FO(LFP), see for example
\cite{BenediktLICS16} and references therein. The program in
Example~\ref{ex:par} is not definable in MDLog without parameters,
which shows that adding parameters increases expressive
power. Although $(\ell,k)$-DLog programs with $n$ parameters are
$(\ell+n,k+n)$-DLog programs, one should think of them as a mild
generalization of
$(\ell,k)$-programs.


A DLog program is an \emph{$\ell$-DLog program} if it is an
$(\ell,k)$-DLog program for some $k$.  To lift decidability and
complexity results from the Boolean to the non-Boolean case, we show
that rewritability of an $n$-ary MDDLog program into $\ell$-DLog with
$n$ parameters can be reduced to rewritability of a Boolean MDDLog
program into $\ell$-DLog (without parameters). We believe that Datalog
with parameters is a natural rewriting target for non-Boolean MDDLog
programs since, in a sense, the $n$ parameters reflect the special
role of the constants from the input instance that are returned as an
answer. Note that the case $\ell=0$ is about UCQ-rewritability (and
thus FO-rewritability) because $0$-DLog programs (with and without
parameters) are an alternative presentation of UCQs. The reduction
proceeds in two steps, described by subsequent
Lemmas~\ref{lem:nonBoolRed1} and~\ref{lem:nonBoolRed2}.  
%
\begin{exa}
\label{ex:avar2}
  The following MDDLog program is rewritable into the
  MDLog program with parameters from Example~\ref{ex:par},
  but not into an MDLog program without parameters:
\[
  \begin{array}{r@{\;}c@{\;}l}
P_0(x) \vee P_1(y) &\leftarrow& R(x,y) \\[\myeqnsep]
\mn{goal}(x) &\leftarrow& P_0(x) \\[\myeqnsep]
P_{1}(y) &\leftarrow& P_1(x) \wedge R(x,y) 
\\[\myeqnsep]
\mn{goal}(x) &\leftarrow& P_1(x).
\end{array}
\]
\end{exa}
%
The following lemma shows that, by introducing constants, we can
reduce the rewritability of non-Boolean MDDLog programs into
Datalog with parameters to the rewritability of Boolean MDDLog programs
with constants into Datalog with constants. Note that the presence of
constants in an $(\ell,k)$-DLog program is not reflected in the values
of $\ell$ and~$k$. We will show in a second step that the
rewritability of Boolean MDDLog programs with constants into Datalog
with constants can be reduced to the rewritability of Boolean MDDLog
programs without constants into Datalog without constants. 

The \emph{diameter} of an $(\ell,k)$-DLog program with $n$ parameters
is $k$ and the \emph{diameter} of a DLog program with constants is
defined as for DLog programs without constants, that is, only
variables contribute to the diameter, but constants do not. The
\emph{rule size} of an MDDLog program is the maximum number of
variable \emph{occurrences} in a rule body.
\begin{lem}
\label{lem:nonBoolRed1} 
Given an $n$-ary MDDLog program $\Pi$, one can construct Boolean
MDDLog programs with constants $\Pi_1,\dots,\Pi_m$ over the
same EDB schema such that for all $\ell,k$,
\begin{enumerate}

\item $\Pi$ is rewritable into an $(\ell,k)$-DLog program with $n$
  parameters iff each of $\Pi_1,\dots,\Pi_m$ is rewritable into an
  $(\ell,k)$-DLog program with constants;

\item $m \leq n^n$ and the size (resp.\ diameter, rule size) of
  each program $\Pi_i$ is bounded by the size (resp.\ diameter, rule size) of $\Pi$.

\end{enumerate}
The construction takes time polynomial in the size of $|\Pi_1 \cup
\cdots \cup \Pi_m|$. 
\end{lem}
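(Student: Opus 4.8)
The plan is to reduce the $n$-ary problem to the Boolean one by \emph{hard-coding the answer tuple} into the program. First I would fix fresh constants and, for every \emph{equality type}---that is, every partition $P$ of $\{1,\dots,n\}$, of which there are at most $n^n$---reserve distinct constants $c^P_1,\dots,c^P_{r_P}$, one per block, and write $b_P\colon\{1,\dots,n\}\to\{1,\dots,r_P\}$ for the block map. I would then obtain $\Pi_P$ from $\Pi$ by leaving all non-\mn{goal} rules untouched and replacing each goal rule $\mn{goal}(x_1,\dots,x_n)\leftarrow q$ by the \emph{nullary} goal rule $\mn{goal}()\leftarrow q[x_i/c^P_{b_P(i)}]$. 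Since this operation only substitutes variables by constants, it introduces no new variables, so the size, diameter, and rule size of $\Pi_P$ are all bounded by those of $\Pi$; there are at most $n^n$ such programs, all computable in the required time, and each is a Boolean MDDLog program with constants since its non-\mn{goal} IDBs are inherited from $\Pi$ and hence monadic.

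The heart of the argument is a semantic correspondence. For an $\Sbf_E$-instance $I$ and a tuple $\abf$ whose equality type is a coarsening of $P$ (so that $a_i=a_{i'}$ whenever $b_P(i)=b_P(i')$), let $I_\abf$ be obtained from $I$ by renaming, for each block $j$, every constant occurring as some $a_i$ with $b_P(i)=j$ to $c^P_j$. I would show that $I\models\Pi(\abf)$ iff $I_\abf\models\Pi_P$. The crucial point, which is what ultimately makes the combination below sound, is that this holds for \emph{all} such $\abf$, including the degenerate case in which two blocks receive the same constant (so that $c^P_j$ and $c^P_{j'}$ are identified in $I_\abf$): evaluating $\Pi_P$ on such a degenerate instance simply corresponds to evaluating $\Pi$ on the correspondingly coarser answer tuple.

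For Point~1, consider first the direction from $\Pi$ to the $\Pi_P$. Given an $(\ell,k)$-DLog program $\Gamma$ with $n$ parameters that rewrites $\Pi$, I would obtain $\Gamma_P$ by instantiating the $n$ parameter positions of \emph{every} IDB atom (they agree on these positions by definition of parameter programs, and the \mn{goal} atom consists only of such positions) with $c^P_{b_P(1)},\dots,c^P_{b_P(n)}$ and dropping these now-fixed positions; the goal thereby becomes nullary. This leaves the non-parameter width at $\ell$ and removes the at most $n$ parameter variables from each rule, so $\Gamma_P$ is an $(\ell,k)$-DLog program with constants, and the correspondence together with the rewriting property of $\Gamma$ shows that $\Gamma_P$ rewrites $\Pi_P$.

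The converse direction is where the real work lies, and I expect it to be the main obstacle. Given for each type $P$ an $(\ell,k)$-DLog program $\Gamma_P$ with constants rewriting $\Pi_P$, I would build a single $n$-parameter program $\Gamma$ as the \emph{disjoint} union (renaming IDB relations apart across types) of the programs obtained from each $\Gamma_P$ by abstracting every constant $c^P_j$ into a fresh parameter variable $y_j$, threading the tuple $(y_{b_P(1)},\dots,y_{b_P(n)})$ through $n$ freshly added parameter positions of all IDBs, and using $\mn{goal}(y_{b_P(1)},\dots,y_{b_P(n)})$ as the head of the goal rules. This keeps the non-parameter width at $\ell$, adds exactly $n$ parameter positions, and adds at most $n$ variables per rule, so $\Gamma$ is an $(\ell,k)$-DLog program with $n$ parameters. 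The delicate step is soundness: when $\Gamma$ is run with its parameters bound to a tuple $\abf$, the rules coming from $\Gamma_P$ may fire even when two values playing the roles of distinct constants $c^P_j,c^P_{j'}$ coincide. This is exactly the degenerate situation anticipated above, and it is harmless precisely because $\Gamma_P$, being a rewriting of $\Pi_P$ on \emph{all} instances, stays correct on the degenerate $I_\abf$ and hence computes the correct (coarser) answer of $\Pi$. Completeness is immediate since every $\abf$ has a unique type $P$, for which the corresponding block of $\Gamma$ derives $\mn{goal}(\abf)$ with distinct parameter values. Combining both directions yields Point~1, while the bounds collected above give Point~2 and the stated time bound.
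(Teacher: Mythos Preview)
Your overall approach is the same as the paper's: fix constants, build one Boolean program per equality type of the answer tuple, and pass between the parameter world and the constant world by substitution. Two points need sharpening.

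First, a small one: your construction of $\Pi_P$ silently assumes each goal head has distinct variables. If $\Pi$ contains a goal rule with head $\mn{goal}(x,x,\dots)$ whose built-in identifications are incompatible with $P$, the substitution $x_i \mapsto c^P_{b_P(i)}$ is inconsistent. Such rules should simply be dropped (this is what the paper does via a ``compatible equality pattern'' check).

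Second, and more importantly, the soundness argument for the converse direction does not go through as written. You want to argue that when the block coming from $\Gamma_P$ fires with parameters $\abf$ of strictly coarser type than $P$, this is the same as evaluating $\Gamma_P$ on a ``degenerate instance $I_\abf$ in which $c^P_j$ and $c^P_{j'}$ are identified,'' and then invoke that $\Gamma_P$ rewrites $\Pi_P$ on all instances. But in database semantics distinct constants are never identified, so such a degenerate $I_\abf$ does not exist, and the rewriting hypothesis for $\Gamma_P$ is only given for genuine instances. What firing the abstracted $\Gamma_P$ with coarser parameters really corresponds to is evaluating the \emph{collapsed program} $(\Gamma_P)[\cbf/\cbf']$ on $I[\cbf/\abf]$, where $\cbf'$ lists the constants of $P$ and $\cbf \approx \abf$. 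To conclude $I \models \Pi(\abf)$ you therefore need two ingredients that your sketch does not supply: (i) the syntactic containment $(\Pi_P)[\cbf/\cbf'] \subseteq \Pi_{P'}$, where $P'$ is the type of $\abf$ (easy to check rule by rule), and (ii) that constant substitution preserves query containment for homomorphism-closed queries, so that $\Gamma_P \equiv \Pi_P$ yields $(\Gamma_P)[\cbf/\cbf'] \subseteq (\Pi_P)[\cbf/\cbf']$. With (i) and (ii) in hand, the chain $(\Gamma_P)[\cbf/\cbf'] \subseteq (\Pi_P)[\cbf/\cbf'] \subseteq \Pi_{P'}$ gives exactly the soundness you need. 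This is how the paper closes the gap; your ``degenerate instance'' phrasing gestures at the right phenomenon but does not establish it.
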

\begin{proof}\ Let $\Pi$ be an $n$-ary MDDLog program over EDB schema
  $\Sbf_E$. Fix a set $C$ of $n$ constants. For each $\vect{c} \in
  C^n$, we construct from $\Pi$ a Boolean MDDLog program $\Pi_\cbf$
  such that for any $\ell < k$, $\Pi$ is $(\ell,k)$-DLog rewritable
  iff all programs $\Pi_\cbf$ are.

  Let $\cbf \in C^n$.  Given two $n$-tuples of terms (constants or
  variables) $\vect{s}$ and $\vect{t}$, we write $\sbf \preceq \tbf$
  if $t_i=t_j$ implies $s_i=s_j$ for $1 \leq i <j \leq n$. We write
  $\sbf \approx \tbf$ when $\sbf \preceq \tbf \preceq \sbf$. The
  program $\Pi_\cbf$ is obtained from $\Pi$ as follows:
  \begin{itemize}

  \item replace every rule $\mn{goal}(\xbf) \leftarrow q(\xbf,\ybf)$
    with $\cbf \preceq \xbf$ by $\mn{goal}() \leftarrow q(\cbf,\ybf)$;

  \item drop every rule  $\mn{goal}(\xbf) \leftarrow q(\xbf,\ybf)$
    with $\cbf  \not\preceq \xbf$.

  \end{itemize}
  Note that the non-goal rules in $\Pi_\cbf$ are identical to those in
  $\Pi$. By converting proof trees for $\Pi$ into
  proof trees for $\Pi_\cbf$ and vice versa, one can show
  the following.
  \\[2mm]
  {\bf Claim.} For all $\Sbf_E$-instances $I$ and $\vect{a} \subseteq
  \adom(I)^n$ with $\abf \approx \cbf$, $I \models \Pi(\vect{a})
  \mbox{ iff } I[\cbf/\abf] \models \Pi_{\vect{c}}$.
  \\[2mm]
  We show that $\Pi$ is rewritable into an $(\ell,k)$-DLog program
  with $n$ parameters iff all of the constructed programs
  $\Pi_{\vect{c}}$ are rewritable into an $(\ell,k)$-DLog program with
  constants.

  Let $\Gamma$ be an $(\ell,k)$-DLog program with $n$
  parameters that is a rewriting of~$\Pi$. For each $\vect{c} \in
  C^n$, let $\Gamma_{\vect{c}}$ be the Boolean $(\ell,k)$-DLog program with
  constants obtained from $\Gamma$ as follows:
  \begin{itemize}

  \item replace every rule $P(\vect{x}\,|\,\vect{y}) \leftarrow
    q(\vect{z} \,|\, \vect{y})$ with $\cbf \preceq \ybf$ (and where
    $P$ might be $\mn{goal}$) by $P(\vect{x}_{\cbf}) \leftarrow
    q(\vect{z}_{\cbf})$, where $\vect{v}_{\cbf}$ is the
    result of replacing in \vbf each variable $y_i$ with $c_i$;

  \item drop every rule $P(\vect{x}\,|\,\vect{y}) \leftarrow
    q(\vect{z} \,|\, \vect{y})$ with $\cbf \not\preceq \ybf$.

  \end{itemize}
  By translating proof trees, it can be shown that ($*$) $I \models
  \Gamma(\vect{c})$ iff $I \models \Gamma_{\vect{c}}$. It is now easy
  to show that $\Gamma_{\vect{c}}$ is a rewriting of $\Pi_{\vect{c}}$:
  for every $\Sbf_E$-instance $I$, $I \models \Pi_{\vect{c}}$ iff $I
  \models \Pi(\vect{c})$ (by the claim) iff $I \models
  \Gamma(\vect{c})$ (since $\Gamma$ is a rewriting of $\Pi$) iff $I
  \models \Gamma_{\vect{c}}$ (by ($*$)).

  Conversely, for all $\vect{c} \in C^n$ let $\Gamma_\cbf$ be a
  Boolean $(\ell,k)$-DLog program with constants that is a rewriting
  of~$\Pi_\cbf$. We construct an $(\ell,k)$-DLog program with $n$
  parameters $\Gamma$ as follows. For each $\vect{c} \in C^n$, fix a
  tuple $\vect{v}$ of fresh variables such that $\vbf \approx
  \cbf$. Let $\Gamma^v_{\vect{c}}$ be the $(\ell,k)$-DLog program with
  $n$ parameters obtained from $\Gamma_{\vect{c}}$ as follows:
  \begin{enumerate}

  \item[(i)] replace each $c_i$ with $v_i$;

  \item[(ii)] replace each non-goal IDB atom $P(\vect{x})$ with the
    atom $P^{\vect{c}}(\vect{x} \,|\, \vect{v})$ (both in rule bodies and
    heads), $P^\cbf$ a fresh IDB relation;

  \item[(iii)] replace $\mn{goal}()$ with $\mn{goal}(\vbf)$.

  \end{enumerate}
  Then $\Gamma$ is defined as the union of all programs
  $\Gamma^v_{\vect{c}}$.  We first argue that for every $\cbf \in C^n$,
  $\Sbf_E$-instance $I$, and $\vect{a} \subseteq \adom(I)^n$ with $\abf
  \approx \cbf$,
\begin{enumerate}

\item $I \models \Gamma_{\vect{c}}$ implies $I[\abf/\cbf] \models
  \Gamma^v_{\vect{c}} (\vect{a})$ and

\item $I \models \Gamma^v_{\vect{c'}}(\vect{a})$, with $\abf \preceq
  \cbf'$, implies $I[\cbf/\abf] \models \Gamma_{\vect{c}}$.

\end{enumerate}
Point~1 can be proved by showing that, from a proof tree of
$\mn{goal}()$ from $I$ and $\Gamma_\cbf$, one can construct a proof
tree of $\mn{goal}(\abf)$ from $I[\abf/\cbf]$ and $\Gamma^v_\cbf$. For
Point~2, assume $I \models \Gamma^{v}_{\vect{c'}}(\vect{a})$ with
$\abf \preceq \cbf'$. Then $I[\cbf/\abf] \models
(\Gamma_{\vect{c'}})[\vect{c}/\vect{c'}]$ can again be shown by
manipulating proof trees. It can be verified that, by construction,
$(\Pi_{\vect{c'}})[\vect{c}/\vect{c'}] \subseteq
\Pi_{\vect{c}}$. Consequently and since $\Gamma_{\cbf'}$ is a
rewriting of $\Pi_{\cbf'}$, $J \models
(\Gamma_{\vect{c'}})[\vect{c}/\vect{c'}]$
implies $J \models \Gamma_{\vect{c}}$ for all $J$, that is, 
$(\Gamma_{\vect{c'}})[\vect{c}/\vect{c'}]$ is contained
in $\Gamma_{\vect{c}}$ in the sense of query containment.
Thus in particular $I[\cbf/\abf] \models \Gamma_\cbf$, as required.


It remains to show that $\Gamma$ is a rewriting for $\Pi$.  First
assume that $I \models \Pi(\vect{a})$. Choose some $\cbf \in C^n$ with
$\abf \approx \cbf$. Then $I[\cbf/\abf] \models \Pi_{\vect{c}}$ by the
claim and thus $I[\cbf/\abf] \models \Gamma_{\vect{c}}$ since
$\Gamma_\cbf$ is a rewriting of $\Pi_\cbf$. Point~1 above yields
$I[\cbf/\abf][\abf/\cbf]=I \models \Gamma(\abf)$.

Now assume that $I \models \Gamma(\vect{a})$. Then by construction of
$\Gamma$, there is a $\cbf' \in C^n$ such that $\abf \preceq \cbf'$
and $I \models \Gamma^v_{\vect{c'}}(\vect{a})$. To see this, note in
particular that the different programs $\Gamma^v_\cbf$ do not share
any IDBs and thus do not interact in $\Gamma$.  Choose a $\cbf \in
C^n$ with $\abf \approx \cbf$.  From Point~2 above, we obtain
$I[\cbf/\abf] \models \Gamma_{\vect{c}}$ which yields $I[\cbf/\abf]
\models \Pi_{\vect{c}}$. This implies $I \models \Pi(\vect{a})$ by the
claim.
\end{proof}
We next show that constants can be eliminated from Boolean programs. 
\begin{lem}
\label{lem:nonBoolRed2}
Given a Boolean MDDLog program $\Pi_c$ with constants 
over EDB schema $\Sbf_E$, one can construct a Boolean MDDLog program
$\Pi$ over an EDB schema $\Sbf_E'$ such that
\begin{enumerate}

\item $\Pi_c$ is rewritable into $\ell$-DLog with constants iff 
  $\Pi$ is rewritable into $\ell$-DLog, for any~$\ell$;

\item If $\Pi_c$ is of size $n$ and diameter $k$, then the size of
  $\Pi$ is $2^{p(k \cdot \mn{log}n)}$; moreover, the diameter of
  $\Pi$ is bounded by the rule size of $\Pi_c$.

\end{enumerate}
The construction takes time polynomial in the size of $|\Pi|$. 
\end{lem}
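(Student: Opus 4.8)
The plan is to eliminate each program constant by a fresh unary EDB predicate that \emph{marks} the elements playing the role of that constant, and to force all marked elements to be treated as a single element so that $\Pi$ faithfully simulates $\Pi_c$. Concretely, let $c_1,\dots,c_p$ be the constants occurring in $\Pi_c$, introduce fresh unary relations $D_1,\dots,D_p$, and set $\Sbf_E'=\Sbf_E\cup\{D_1,\dots,D_p\}$. For an $\Sbf_E'$-instance $I'$ in which each $D_i$ is realised, let $\mathsf{merge}(I')$ be the $\Sbf_E$-instance obtained by deleting the $D_i$-facts and identifying, for each $i$, all $D_i$-marked elements with the single constant $c_i$. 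The target $\Pi$ is built so that $I'\models\Pi$ iff $\mathsf{merge}(I')\models\Pi_c$. To this end I would (i) replace in every rule of $\Pi_c$ each occurrence of $c_i$ by a fresh variable $z_i$ and add the guard $D_i(z_i)$ to the body, and (ii) add, for every IDB relation $P$ and every $i$, the synchronisation rules $P(x)\wedge D_i(x)\wedge D_i(y)\rightarrow P(y)$. Step~(ii) mirrors the ``has equality'' device of Section~\ref{sect:dlog-rewr}: it forces every colouring of $I'$ to assign the same IDBs to all $D_i$-marked elements, so that colourings of $I'$ satisfying the non-goal rules correspond exactly to colourings of $\mathsf{merge}(I')$, yielding the equivalence for realised markings.

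The more delicate point is that a constant $c_i$ need not lie in the active domain: in $\Pi_c$ a ``phantom'' $c_i$ still inhabits the Herbrand universe and may carry IDB atoms that trigger rules, whereas a guarded variable $z_i$ simply cannot be bound when $D_i$ is empty. I would handle this by pre-compiling, for each rule, all cases specifying which of its constants are realised (hence replaced by a marked variable as above) and which are phantom (hence represented by \emph{nullary} IDB flags $\widehat{P}_i$ encoding ``$P$ holds at the phantom $c_i$''), together with a guess of the IDB type of each phantom constant; the disjunctive semantics then universally quantifies over these guesses exactly as over $\Pi_c$'s phantom colourings. Saturating over these configurations is the source of the bound $2^{p(k\cdot\log n)}=n^{\mathrm{poly}(k)}$ in Point~2: there are only boundedly many constants per rule but up to $n^{O(k)}$ guessable types, while each realised constant contributes precisely one fresh variable, so the diameter of $\Pi$ stays within the rule size of $\Pi_c$.

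For Point~1 I would translate rewritings in both directions, reusing the strategy of Lemma~\ref{prop:rewrCorresp}. The downward direction is the easy one: given an $\ell$-DLog rewriting $\Gamma$ of $\Pi$, embed any $\Sbf_E$-input $I$ of $\Pi_c$ as the canonical $\Sbf_E'$-instance $I'=I\cup\{D_i(c_i)\}_i$ with singleton markers; since $D_i$ then denotes exactly $c_i$, replacing each guard $D_i(t)$ in $\Gamma$ by the substitution $t\mapsto c_i$ produces an $\ell$-DLog-with-constants program that agrees with $\Gamma$ on all such $I'$, hence is a rewriting of $\Pi_c$, with width $\ell$ preserved. The harder, upward direction turns an $\ell$-DLog-with-constants rewriting $\Gamma_c$ of $\Pi_c$ into an $\ell$-DLog rewriting $\Gamma$ of $\Pi$ by applying to $\Gamma_c$ the same marker/phantom translation used for the program, so that $I'\models\Gamma$ iff $\mathsf{merge}(I')\models\Gamma_c$ for \emph{all} $\Sbf_E'$-instances $I'$. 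The main obstacle is precisely establishing this full correspondence on non-canonical instances—those in which some $D_i$ is empty or marks several unrelated elements—i.e.\ reconciling the monotone semantics of $\Gamma$ and $\Pi$ with the merge/phantom behaviour of $\Pi_c$; verifying that the synchronisation rules and the type-saturation make both $\Pi$ and $\Gamma$ behave correctly there, while adding only unary guards so that width $\ell$ is preserved, is where the real work lies.
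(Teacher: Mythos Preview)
Your plan diverges from the paper at the central technical step and, as written, does not go through. The paper does not replace all occurrences of $c_i$ by a \emph{single} variable $z_i$ guarded by $D_i(z_i)$ and then rely on synchronisation rules. Instead it \emph{dejoins}: for every rule and every choice of which terms are to play the role of some $c_i$, each \emph{occurrence} of such a term in the body is replaced by its own fresh variable with an $R_i$-guard; head occurrences reuse one of these body variables. The paper also adds $\mn{goal}()\leftarrow R_i(x)\wedge R_j(x)$ for $i\neq j$ and establishes the correspondence via a quotient of the $\Sbf_E'$-instance (collapse every $R_i$-marked element to $c_i$), proving $I\models\Pi$ iff $I'\models\Pi_c$ for the quotient $I'$ by converting proof trees in both directions. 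Dejoining is precisely what makes this work when several elements carry the same marker.

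Your synchronisation rules equalise the \emph{IDB} colour of all $D_i$-marked elements, but they do nothing for \emph{EDB} atoms, so your claimed equivalence $I'\models\Pi$ iff $\mathsf{merge}(I')\models\Pi_c$ fails. Concretely, take $\Pi_c=\{\mn{goal}()\leftarrow A(c)\wedge B(c)\}$ and the equivalent $\Gamma_c=\{P(c)\leftarrow A(c),\ \mn{goal}()\leftarrow P(c)\wedge B(c)\}$. Your $\Pi$ has body $A(z)\wedge B(z)\wedge D(z)$; your $\Gamma$ has $P(z)\leftarrow A(z)\wedge D(z)$, the sync rule $P(x)\wedge D(x)\wedge D(y)\rightarrow P(y)$, and $\mn{goal}()\leftarrow P(z)\wedge B(z)\wedge D(z)$. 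On $I'=\{A(a),B(b),D(a),D(b)\}$ we have $I'\not\models\Pi$ (no single $z$ satisfies $A,B,D$) but $I'\models\Gamma$ (derive $P(a)$, sync to $P(b)$, then goal). So translating $\Gamma_c$ by your scheme does not yield a rewriting of your $\Pi$, and the upward direction of Point~1 breaks down exactly at the spot you flagged as ``where the real work lies''. With dejoining, both translated programs fire on $I'$ and the quotient argument succeeds uniformly; this is the missing idea. Your phantom machinery (nullary type flags and saturation over guessed types) is also not needed: once you dejoin, a single quotient/proof-tree argument covers the non-canonical instances, and you should additionally include the overlapping-marker goal rules that you omitted.
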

\begin{proof}\ Let $\Pi_c$ be a Boolean MDDLog program over EDB schema
  $\Sbf_E$ that contains constants $c_1, \ldots, c_n$. The program
  $\Pi$ will be over EDB schema $\Sbf'_E = \Sbf_E\cup\{R_1, \ldots,
  R_n\}$ where $R_1, \dots, R_n$ are fresh monadic relation
  symbols. $\Pi$ contains all rules that can be obtained from a rule
  $\rho$ in~$\Pi$ by choosing a partial function $\delta$ that maps
  terms (variables or constants) in $\rho$ to  elements of
  $\{1,\dots,n\}$ such that $\delta(c_i)=i$ for each constant $c_i$
  and then, for each term $t$ with $\delta(t)=i$,
\begin{enumerate}

\item replacing each occurrence of $t$ in the body of $\rho$ with a
  fresh variable $x$ and adding $R_i(x)$, and 

\item replacing each occurrence of $t$ in the head of $\rho$ with one
  of the fresh variables introduced for $t$ in Step~1.

\end{enumerate}
Additionally, $\Pi$ contains the rule $\mn{goal}() \leftarrow R_i(x),
R_j(x)$, for $1 \leq i <j \leq n$. 


Note that the rewriting presented above, which we call
\emph{dejoining} since it introduces different variables for each
occurence of a term $t$ in a rule body, can be applied not only to
MDDLog programs, but also to MDLog programs. Before we proceed, we
make a basic observation about dejoining and its connection to a
certain quotient construction.  Let $\Pi$ be an MDDLog program or an
MDLog program, with constants $c_1,\dots,c_n$, and let $\Pi_d$ be the
result of dejoining $\Pi$.  Let $I$ be an $\Sbf'_E$-instance such that
$R_i, R_j$ are disjoint whenever $i \neq j$ and which does not contain
the constants $c_1,\dots,c_n$. The \emph{quotient} of $I$ is the
$\Sbf_E$-instance $I'$ obtained from $I$ by replacing every $d \in
\adom(I)$ with $R_i(d) \in I$ by the constant $c_i$ (which also
results in the identification of elements in the active domain) and
removing all atoms involving one of the $R_i$ relations. By converting
proof trees of $\mn{goal}()$ from $\Pi$ into proof trees of
$\mn{goal}()$ from $\Pi_c$ and vice versa, one can show the following.
\\[2mm]
{\bf Claim}.  $I \models \Pi$ iff $I' \models \Pi_d$. 
\\[2mm]
We now show that $\Pi_c$ is rewritable into $\ell$-DLog iff $\Pi$ is.

%
First let $\Gamma_c$ be an $\ell$-DLog rewriting of $\Pi_c$. Let
$\Gamma$ be obtained from $\Gamma_c$ by dejoining all rules and adding
the rule $\mn{goal}() \leftarrow R_i(x), R_j(x)$ for $1 \leq i <j \leq
n$. Clearly, $\Gamma$ is an $\ell$-DLog program. We argue that
$\Gamma$ is a rewriting of $\Pi$. Let $I$ be an
$\Sbf'_E$-instance. W.l.o.g., we can assume that $I$ does not contain
$c_1,\dots,c_n$. If $R_i, R_j$ are not disjoint for some $i \neq j$,
then $I \models \Pi$ and $I \models \Gamma$. Otherwise, let $I'$ be
the quotient of~$I$.  We have $I \models \Pi$ iff $I' \models \Pi_c$
(by the claim) iff $I' \models \Gamma_c$ ($\Gamma_c$ is rewriting of
$\Pi_c$) iff $I \models \Gamma$ (again by the claim).


Let $\Gamma$ be an $\ell$-DLog rewriting of $\Pi$. Let $\Gamma_c$ be
the program constructed from $\Gamma$ by removing all rules that
contain atoms of the form $R_i(x)$ and $R_j(x)$ with $i \neq j$ and
replacing all variables $x$ that occur in a rule body in atoms of the
form $R_i(x)$ with $c_i$ and removing all $R_i$-atoms from such
rules. Clearly, $\Gamma_c$ is an $\ell$-DLog program (with constants
$c_1,\dots,c_n$). We argue that $\Gamma_c$ is a rewriting of $\Pi_c$.
Let $I$ be an $\Sbf_E$-instance that w.l.o.g. does not contain
$c_1,\dots,c_n$ and let $I'=I \cup \{R_1(c_1), \ldots,
R_n(c_n)\}$. Note that $I$ is the quotient of $I'$. Then $I \models
\Pi_c$ iff $I' \models \Pi$ (by the claim) iff $I' \models \Gamma$
($\Gamma$ is rewriting of $\Pi$) iff $I \models \Gamma_c$ (by
construction of $\Gamma_c$).
\end{proof}
We are now ready to lift the complexity results from
Theorems~\ref{th:decidBoolean} and~\ref{thm:DLogBoolWithEq} to the
non-Boolean case, by putting them together with
Lemmas~\ref{lem:nonBoolRed1}
and~\ref{lem:nonBoolRed2}.
\begin{thm}
\label{th:general}
For $n$-ary MDDLog programs,
  \begin{enumerate}
  \item FO-rewritability (equivalently: UCQ-rewritability) is
    2\NExpTime-complete;
  \item rewritability into MDLog with $n$ parameters is in 3\ExpTime (and
    2\NExpTime-hard);
  \item DLog-rewritability is 2\NExpTime-complete for programs
    that have equality.
  \end{enumerate}
\end{thm}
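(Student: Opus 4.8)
The plan is to prove all three upper bounds by reducing the $n$-ary case to the Boolean case via Lemmas~\ref{lem:nonBoolRed1} and~\ref{lem:nonBoolRed2}, and then invoking the Boolean results of Theorems~\ref{th:decidBoolean} and~\ref{thm:DLogBoolWithEq}. The matching lower bounds require no new argument: Boolean MDDLog programs are exactly the case $n=0$, so the 2\NExpTime-hardness of Theorems~\ref{th:decidBoolean} and~\ref{thm:DLogBoolWithEq} (taken from \cite{KR-submitted}) transfers directly; for Point~3 the Boolean lower bound is already stated for programs that have equality, which is again the special case $n=0$.

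For the upper bounds I would work target language by target language, fixing the width parameter $\ell$. Recall that $0$-DLog (with or without parameters) is merely a presentation of UCQs, so by Proposition~\ref{prop:ross} Point~1 is the case $\ell=0$; MDLog is $1$-DLog, so Point~2 is $\ell=1$; and DLog-rewritability in Point~3 means rewritability into $\ell$-DLog for some $\ell$. Given an $n$-ary $\Pi$, Lemma~\ref{lem:nonBoolRed1} produces Boolean programs with constants $\Pi_1,\dots,\Pi_m$, $m\le n^n$, with $\Pi$ rewritable into $(\ell,k)$-DLog with $n$ parameters iff every $\Pi_i$ is rewritable into $(\ell,k)$-DLog with constants; Lemma~\ref{lem:nonBoolRed2} then turns each $\Pi_i$ into a constant-free Boolean program $\Pi_i'$ with $\Pi_i$ rewritable into $\ell$-DLog with constants iff $\Pi_i'$ is rewritable into $\ell$-DLog. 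Because $(\ell,k)$-rewritability implies $(\ell',k')$-rewritability whenever $\ell'\ge\ell$ and $k'\ge k$, the fixed-$(\ell,k)$ equivalences lift to the unrestricted notions by passing to componentwise maxima over the finitely many components. Hence $\Pi$ is FO-, MDLog-, resp.\ DLog-rewritable iff each Boolean program $\Pi_i'$ is.

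The hard part is the complexity bookkeeping: the two reductions incur exponential blowups, and they compose to the claimed classes only if size and diameter are tracked \emph{separately}. Write $N$ for the size of $\Pi$. Lemma~\ref{lem:nonBoolRed1} preserves size, diameter, and rule size and yields $m\le n^n\le 2^{N\log N}$ programs; Lemma~\ref{lem:nonBoolRed2} blows the size up only singly exponentially, to $2^{\mathrm{poly}(N)}$, while bounding the diameter of $\Pi_i'$ by the rule size of $\Pi_i$, hence by $\mathrm{poly}(N)$. The crucial observation is that the coarse template bound $2^{2^{p(n)}}$ of Proposition~\ref{prop:rewrCorrespARGH} is \emph{too weak} here, since $n=2^{\mathrm{poly}(N)}$ would push us to a triple exponential; instead one must use the finer bounds of Theorems~\ref{th:mmsnpToSimple} and~\ref{th:simpleToCsp}, by which the generalized CSP for a Boolean program of size $n'$ and diameter $k'$ has templates of size $2^{2^{p(k'\log n')}}$. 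With $k'\le\mathrm{poly}(N)$ and $\log n'=\mathrm{poly}(N)$, the product $k'\log n'$ stays polynomial in $N$, so the templates have size only $2^{2^{\mathrm{poly}(N)}}$. Since FO- and DLog-rewritability of generalized coCSPs are in {\sc NP} and MDLog-rewritability is in \ExpTime, each $\Pi_i'$ is decided in $\mathrm{NTIME}(2^{2^{\mathrm{poly}(N)}})$ for Points~1 and~3 and in deterministic time $2^{2^{2^{\mathrm{poly}(N)}}}$ for Point~2. The $m\le 2^{N\log N}$ components do not raise the class: for Points~1 and~3 one guesses all {\sc NP}-certificates at once (their total size is dominated by $2^{2^{\mathrm{poly}(N)}}$) and verifies deterministically, giving 2\NExpTime; for Point~2 one runs the deterministic procedure on each component in turn, and $2^{N\log N}\cdot 2^{2^{2^{\mathrm{poly}(N)}}}$ is still 3\ExpTime.

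Point~3 needs one further ingredient beyond the count: the Boolean programs $\Pi_i'$ must again have equality in order to apply Theorem~\ref{thm:DLogBoolWithEq}. I would verify that both reductions preserve this property. In Lemma~\ref{lem:nonBoolRed1} the \mn{eq}-rules are non-goal rules and only goal rules are altered, so each $\Pi_\cbf$ retains exactly the \mn{eq}-rules of $\Pi$; and the dejoining construction of Lemma~\ref{lem:nonBoolRed2}, when the distinguished relation \mn{eq} is left untouched by the dejoining step, carries the \mn{eq}-rules over to $\Pi_i'$ intact. Thus, when $\Pi$ has equality, so does every $\Pi_i'$, and Theorem~\ref{thm:DLogBoolWithEq} applies. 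This last point is the one place where a direct look at the earlier constructions, rather than a black-box use of the lemmas, is needed.
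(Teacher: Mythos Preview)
Your proposal is correct and follows essentially the same approach as the paper: reduce to the Boolean case via Lemmas~\ref{lem:nonBoolRed1} and~\ref{lem:nonBoolRed2}, then trace the finer size/diameter bounds of Theorems~\ref{th:mmsnpToSimple} and~\ref{th:simpleToCsp} (rather than the coarse bound of Proposition~\ref{prop:rewrCorrespARGH}) to see that the resulting CSP templates stay doubly exponential in the original program size, and separately verify that having equality is preserved. Your complexity bookkeeping is in fact more explicit than the paper's, which merely says one can ``trace the blowups'' and ``verify'' preservation of equality.
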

\begin{proof}\ We remind that the upper bounds for rewritability of
  Boolean MDDLog programs stated in Theorems~\ref{th:decidBoolean}
  and~\ref{thm:DLogBoolWithEq} are obtained by a (generalized) CSP and
  then deciding the rewritability of (the complement of) that CSP.
  One can trace the blowups stated in Lemmas~\ref{lem:nonBoolRed1}
  and~\ref{lem:nonBoolRed2} as well as in
  Theorems~\ref{th:mmsnpToSimple} and~\ref{th:simpleToCsp} to verify
  that the constructed CSP does not become significantly larger in the
  non-Boolean case, that is, it still satisfy the bounds stated in
  Point~3 of Proposition~\ref{th:mmsnpToSimple}. Thus, we obtain the
  same upper bounds as in the Boolean case.  Regarding Point~1, we
  additionally recall that Proposition~\ref{prop:ross} also covers non-Boolean
  MDDLog programs and thus it suffices to consider
  UCQ-rewritability. Regarding Point~3, we note that it can be verified that the
  constructions in the proofs of Lemmas~\ref{lem:nonBoolRed1}
  and~\ref{lem:nonBoolRed2} preserve the property of having equality.
\end{proof}
In view of Point~2, we remark (once more) that for non-Boolean MDDLog
programs $\Pi$, MDLog with parameters is in a sense a more natural
target for rewriting than MDLog without parameters.  The intuitive
reason is that positions in the answer to $\Pi$ can be thought of as
constants, and constants correspond to parameters. To make this a bit
more precise, consider the grounding $\Pi'$ of $\Pi$ obtained by
replacing, in every goal rule, each variable that occurs in the head
by a constant. In contrast to the standard database setup (and in
contrast to the proof of Lemma~\ref{lem:nonBoolRed1}), we mean here
constants that are interpreted according to the standard FO semantics,
that is, different constants can denote the same element of an
instance. When looking for an MDLog-rewriting of $\Pi'$, it is clearly
very natural to admit the constants from $\Pi'$ also in the
rewriting. Now, one can verify that any such rewriting can be
translated in a straightforward way into a rewriting of $\Pi$ into
MDLog with parameters, and vice versa.

We further note that MDLog with parameters enjoys similarly nice
properties as standard MDLog. For example, containment is
decidable. This follows from
\cite{DBLP:conf/pods/RudolphK13,DBLP:conf/ijcai/BourhisKR15} where
generalizations of MDLog with parameters are studied, the actual
parameters being represented by constants.

We also remark that Theorem~\ref{th:general} remains true when we
admit constants in MDDLog programs. In fact, the proof of
Lemma~\ref{lem:nonBoolRed1} goes through also when the original
MDDLog program contains constants, and both the original and the newly
introduced constants can then be removed by
Lemma~\ref{lem:nonBoolRed2}. 

\subsection{Canonical Datalog-Rewritings}

We now turn our attention to canonical DLog-rewritings for non-Boolean
MDDLog programs. Let $\Pi$ be an $n$-ary MDDLog program. We associate with
$\Pi$ a \emph{canonical $(\ell,k)$-DLog program with $n$ parameters}, for
any $\ell <k$.  The construction is a refinement of the one from the
Boolean case.

We start with some preliminaries. An \emph{$n$-marked instance} is an
instance $I$ endowed with $n$ (not necessarily distinct) distinguished
elements $\cbf=c_1,\dots,c_n$.  An \emph{$(\ell,k)$-tree decomposition
  with $n$ parameters} of an $n$-marked instance $(I,\cbf)$ is an
$(\ell+m,k+m)$-tree decomposition of $I$, $m$ the number of distinct
constants in \cbf, in which every bag $B_v$ contains all constants
from~\cbf. An $n$-marked instance \emph{has treewidth
$(\ell,k)$ with $n$ parameters} if it admits an $(\ell,k)$-tree
decomposition with $n$ parameters. 

We first convert $\Pi$ into a DDLog program $\Pi'$ that is equivalent
to $\Pi$ on instances of bounded treewidth. The construction is
identical to the Boolean case (first variable identification, then
rewriting) except that
\begin{enumerate}

\item we use treewidth $(\ell+n,k+n)$ in place of treewidth
  $(\ell,k)$; consequently, the arity of the freshly introduced IDB relations
  may also be up to $\ell+n$;

\item for $\mn{goal}$ rules, all head variables must occur in the root
  bag of the tree decomposition (they can then be treated in the same
  way as a Boolean \mn{goal} rule despite the $n$-ary head relation).
  
\end{enumerate}
It can be verified that $\Pi'$ is sound for $\Pi$ and that it is
complete for $\Pi$ on $n$-marked instances of treewidth $(\ell,k)$
with $n$ parameters in the sense that, for all such instances
$(I,\cbf)$, $I \models \Pi[\cbf]$ implies $I \models \Pi'[\cbf]$.
$\Pi'$ is not guaranteed to be complete for answers other than \cbf
because of the way we treat goal rules in Point~2 above, for example
when $\Pi$ contains a rule of the form $\mn{goal}(x,y) \leftarrow A(x)
\wedge B(y)$.

Let $\Sbf'_I$ denote the additional IDB relations in the resulting
program $\Pi'$. We now construct the canonical $(\ell,k)$-DLog program
with $n$ parameters $\Gamma^c$.  Fix constants $a_1,\dots,a_\ell,$
$b_1,\dots,b_n$ and let $\Imf_{\ell'+n}$ denote the set of all $\Sbf_I
\cup \Sbf_{I'}$-instances with domain
$\abf_{\ell',n}:=a_1,\dots,a_{\ell'},$ $b_1,\dots,b_n$.  The program
uses $\ell'+n$-ary IDB relations $P_M$, for all $\ell' \leq \ell$ and
all $M \subseteq \Imf_{\ell',n}$. It contains all rules $q(\xbf)
\rightarrow P_{M}(\ybf\,|\,\xbf_p)$, $M \subseteq \Imf_{\ell',n}$, that satisfy the
following conditions:
\begin{enumerate}

\item $q(\xbf)$ contains at most $k+n$ variables;


\item in every extension $J$ of the $\Sbf_E$-instance $I_q|_{\Sbf_E}$ with
  $\Sbf_I \cup \Sbf'_I$-facts such that
  \begin{enumerate}

  \item $J$ satisfies all rules of $\Pi'$ and does not contain
    $\mn{goal}(\xbf_p)$ and


  \item for each $P_{N}(\zbf\,|\,\xbf_p) \in q$, $N \subseteq \Imf_{\ell'',n}$,
 there is an $L \in N$ such that
    $L[\zbf\xbf_p/\abf_{\ell'',n}]=J|_{\Sbf_I \cup  \Sbf'_I},\zbf$

  \end{enumerate}
  there is an $L \in M$ such that $L[\ybf\xbf_p/\abf_{\ell',n}]=J|_{\Sbf_I \cup
    \Sbf'_I},\ybf$

\end{enumerate}
We also include all rules of the form
$P_\emptyset(\ybf\,|\,\xbf_p)\rightarrow \mn{goal}(\xbf_p)$.  
This finishes the construction of~$\Gamma^c$.
%
%
It is straightforward to verify that $\Gamma^c$ is sound for $\Pi$.
It is complete in the same sense as $\Pi'$.
\begin{lem}
\label{lem:cancompllk2}
$\Gamma^c$ is sound for $\Pi$. It is complete for $\Pi$ on $n$-marked
instances of treewidth $(\ell,k)$ with $n$ parameters in the sense
that for any such instance $(I,\cbf)$, $I \models \Pi(\cbf)$ implies
$I \models \Gamma^c(\cbf)$.
\end{lem}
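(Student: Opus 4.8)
The plan is to adapt the proof of Lemma~\ref{lem:cancompllk} from the Boolean case to the non-Boolean case with parameters, treating the parameter positions essentially as fixed constants throughout. The whole construction of $\Gamma^c$ in the non-Boolean case is a refinement of the Boolean one in which every IDB relation carries the $n$ parameters in its last positions, and every rule forces all IDB atoms to agree on these positions; so the proof should factor through the observation that, once we fix an $n$-marked instance $(I,\cbf)$ and instantiate the parameter positions with $\cbf$, the program behaves exactly like a Boolean canonical program relative to the answer~$\cbf$. Concretely, soundness and completeness should both be reduced to the corresponding statements for a Boolean program obtained by grounding the parameters, so that the earlier argument transfers with only bookkeeping changes.

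For soundness (first sentence of the lemma), I would mirror Point~1 of the proof of Lemma~\ref{lem:cancompllk}: fix an $\Sbf_E$-instance $I$ and a tuple $\cbf$ with $I \models \Gamma^c(\cbf)$, chase $I$ with $\Gamma^c$ to obtain a derivation of $\mn{goal}(\cbf)$, and prove by induction on the chase the analogue of the Boolean Claim. That claim now reads: whenever $P_M(\bbf\,|\,\cbf)$ is derived, then for every extension $J$ of $I$ to $\Sbf_I \cup \Sbf'_I$ that satisfies all rules of $\Pi'$ and does not contain $\mn{goal}(\cbf)$, there is an $L \in M$ with $L[\bbf\cbf/\abf_{\ell',n}] = J|_{\Sbf_I \cup \Sbf'_I},\bbf$. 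The inductive step uses Condition~2 in the definition of $\Gamma^c$, exactly as before, with the parameter tuple $\cbf$ carried along unchanged. From the derived atom $P_\emptyset(\bbf\,|\,\cbf)$ one concludes that no such $J$ exists, hence $I \models \Pi'(\cbf)$, and since $\Pi'$ is sound for $\Pi$ (as noted above the lemma), $I \models \Pi(\cbf)$.

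For completeness, I would follow Point~2 of the earlier proof. Assuming $I \not\models \Gamma^c(\cbf)$, take an $(\ell,k)$-tree decomposition with $n$ parameters of $(I,\cbf)$ — so every bag contains all of $\cbf$ — and an extension $J$ of $I$ to the IDBs of $\Gamma^c$ satisfying all rules and containing no $P_\emptyset(\ybf\,|\,\cbf)$. The task is to build an extension $J'$ of $I$ to $\Sbf_I \cup \Sbf'_I$ witnessing $I \not\models \Pi'(\cbf)$, proceeding top-down over the decomposition and maintaining the invariant~($*$) about the overlap sets $M_{v,v'}$; the parameters sit in every bag and every overlap, so the separator tuples $\cbf_{v,v'}$ always include $\cbf$, and the minimality/intersection argument for $M_{v,v'}$ goes through verbatim. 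One then checks $(T,(B_v)_{v\in V})$ is also a tree decomposition of $J'$ and that $J'$ satisfies every rule of $\Pi$, via the same rewritten-rule induction, where the goal rules are handled using Point~2 of the modified $\Pi'$ construction (all head variables lie in the root bag, so a goal rule is treated like a Boolean one). The main obstacle, and the step to write carefully, is verifying that the $(\ell,k)$-tree-decomposition-\emph{with-parameters} of $(I,\cbf)$ interacts correctly with the rewriting of $\Pi$ into $\Pi'$: one must confirm that the fresh IDB relations $Q_v$ and the separators $\xbf_{v'}$ respect the parameter positions, so that the local homomorphisms $h$ restricted to each bag combine into a single homomorphism whose value on $\cbf$ is fixed. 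Once this is in place, the remaining arguments are routine transfers of the Boolean proof.
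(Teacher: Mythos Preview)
Your proposal is correct and matches the paper's approach exactly: the paper itself does not give a proof but simply states that it ``is similar to that of Lemma~\ref{lem:cancompllk}, details are omitted.'' Your plan to rerun the Boolean argument with the parameter tuple $\cbf$ carried along in all IDB atoms, and to use an $(\ell,k)$-tree decomposition with $n$ parameters (so that $\cbf$ sits in every bag and every separator), is precisely the intended adaptation.
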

%
%
%
The proof of Lemma~\ref{lem:cancompllk2} is similar to that of
Lemma~\ref{lem:cancompllk}, details are omitted. In analogy with
Theorem~\ref{thm:canonical}, we can then obtain the following result about
canonical DLog programs.
\begin{thm}
\label{thm:canonical2}
Let $\Pi$ be an $n$-ary MDDLog program, $0 < \ell \leq k$, and
$\Gamma^c$ the canonical $(\ell,k)$-DLog program with $n$ parameters
associated with $\Pi$. Then
\begin{enumerate}


\item $\Gamma \subseteq \Gamma^c$ for every $(\ell,k)$-DLog
  program $\Gamma$ that is sound for $\Pi$;

\item $\Pi$ is rewritable into $(\ell,k)$-DLog with $n$ parameters iff
  $\Gamma^c$ is a rewriting of~$\Pi$.

\end{enumerate}
\end{thm}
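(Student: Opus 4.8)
The plan is to follow the proof of Theorem~\ref{thm:canonical} step by step, using Lemma~\ref{lem:cancompllk2} in place of Lemma~\ref{lem:cancompllk} and carefully tracking the parameter positions throughout. As in the Boolean case, the whole argument rests on establishing Point~1 first, after which Point~2 follows almost immediately. Throughout, the containment $\Gamma \subseteq \Gamma^c$ is read in the $n$-ary sense: for every $\Sbf_E$-instance $I$ and every tuple $\abf \subseteq \adom(I)^n$, $I \models \Gamma(\abf)$ implies $I \models \Gamma^c(\abf)$.

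For Point~1, let $\Gamma$ be an $(\ell,k)$-DLog program with $n$ parameters that is sound for $\Pi$, and suppose $I \models \Gamma(\abf)$. First I would read off from a proof tree for $\mn{goal}(\abf)$ from $I$ and $\Gamma$ an $n$-marked instance $(J,\cbf)$ together with a homomorphism $h$ from $J$ to $I$ with $h(\cbf)=\abf$, exactly as in the Boolean case but now recording the answer tuple. The crucial point is that $(J,\cbf)$ has treewidth $(\ell,k)$ with $n$ parameters: since $\Gamma$ is an $(\ell,k)$-DLog program with $n$ parameters, all IDB atoms in every rule agree on the parameter positions, so the constants realizing $\cbf$ occur in \emph{every} rule application of the derivation (every variable of the head, in particular every parameter variable, occurs in the body). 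Reading the proof tree as a tree decomposition of $J$ in the usual way, each bag therefore contains all of $\cbf$, has at most $k+n$ elements ($k$ non-parameter variables plus the parameters), and neighbouring bags overlap only in the at most $\ell+n$ positions of the shared IDB atom. This is precisely an $(\ell,k)$-tree decomposition with $n$ parameters. Since the same derivation witnesses $J \models \Gamma(\cbf)$ and $\Gamma$ is sound for $\Pi$, we obtain $J \models \Pi(\cbf)$. Completeness of $\Gamma^c$ on $n$-marked instances of treewidth $(\ell,k)$ with $n$ parameters (Lemma~\ref{lem:cancompllk2}) then yields $J \models \Gamma^c(\cbf)$. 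Finally, truth of Datalog programs is preserved under homomorphisms, with answers transported accordingly, so $h(\cbf)=\abf$ gives $I \models \Gamma^c(\abf)$, as required.

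For Point~2, the ``if'' direction is trivial since $\Gamma^c$ is itself an $(\ell,k)$-DLog program with $n$ parameters. For the ``only if'' direction, assume $\Pi$ is rewritable into $(\ell,k)$-DLog with $n$ parameters and fix a concrete rewriting $\Gamma$. Soundness of $\Gamma^c$ for $\Pi$ is again Lemma~\ref{lem:cancompllk2}. For completeness, note that $\Pi \subseteq \Gamma$ because $\Gamma$ is a rewriting, and $\Gamma \subseteq \Gamma^c$ by Point~1 (as $\Gamma$, being a rewriting, is in particular sound for $\Pi$); hence $\Pi \subseteq \Gamma^c$, which is exactly completeness of $\Gamma^c$ for $\Pi$.

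I expect the main obstacle to be the proof-tree extraction in the presence of parameters: one must verify carefully that collapsing the derivation yields a marked instance whose induced tree decomposition places the full parameter tuple in every bag while keeping the non-parameter part of each bag bounded by $k$ and the non-parameter overlaps bounded by $\ell$, and that the accompanying homomorphism faithfully maps the marked elements to $\abf$. Everything else is a routine adaptation of the Boolean argument, since Lemma~\ref{lem:cancompllk2} already packages the soundness and the bounded-treewidth completeness that we invoke.
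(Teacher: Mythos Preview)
Your proposal is correct and takes essentially the same approach as the paper, which does not spell out a proof but simply states that Theorem~\ref{thm:canonical2} is obtained ``in analogy with Theorem~\ref{thm:canonical}'' using Lemma~\ref{lem:cancompllk2} in place of Lemma~\ref{lem:cancompllk}. Your expansion of the proof-tree argument---verifying that the parameter tuple threads through every rule application and hence every bag of the induced tree decomposition, yielding an $n$-marked instance of treewidth $(\ell,k)$ with $n$ parameters---is exactly the adaptation the paper leaves implicit.
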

Note that, as a consequence of Theorem~\ref{thm:canonical2}, an
$n$-ary MDDLog program $\Pi$ is DLog-rewritable (in the standard
sense, without parameters) iff the canonical $(\ell,k)$-DLog program
with $n$ parameters is a rewriting, for some $\ell,k$. 
In a sense, this exactly parallels the behaviour of canonical DLog
programs in the Boolean case. As an important consequence, the
reductions presented in Lemmas~\ref{lem:nonBoolRed1}
and~\ref{lem:nonBoolRed2} show that if DLog-rewritability of Boolean
programs turns out to be decidable (without assuming equality), then
the same is true for DLog-rewritability of non-Boolean programs.
Theorems~\ref{th:decidBoolean} and~\ref{thm:DLogBoolWithEq}

\subsection{Shape of Rewritings and Obstructions}

We now analyze the shape of rewritings of non-Boolean MDDLog programs.
An \emph{$(\ell,k)$-tree decomposition with $n$ parameters} of an
$n$-ary CQ $q$ is an $(\ell+n,k+n)$-tree decomposition of $q$
in which every bag $B_v$ contains all answer variables of~$q$.  The
treewidth with $n$ parameters of an $n$-ary CQ is now defined in the
expected way.
\begin{restatable}{thm}{threwrshapet}
\label{TH:REWRSHAPE2}   
  Let $\Pi$ be an $n$-ary MDDLog program of diameter $k$. Then
\begin{enumerate}

\item if $\Pi$ is FO-rewritable, then it has a UCQ-rewriting in which 
  each CQ has treewidth $(1,k)$ with $n$ parameters;

\item if $\Pi$ is rewritable into MDLog with $n$ parameters, then it has
  an MDLog-rewriting with $n$ parameters of diameter
  $k$.

\end{enumerate}
\end{restatable}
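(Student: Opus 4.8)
The plan is to reduce both claims to their Boolean counterparts in Theorem~\ref{th:rewrshape} by means of the reductions developed in this section, Lemmas~\ref{lem:nonBoolRed1} and~\ref{lem:nonBoolRed2}, and then to trace how the \emph{shape} of a rewriting---its treewidth in Point~1 and its diameter in Point~2---is transported along these reductions. The guiding intuition is the one motivating the parameter formalism: the $n$ answer variables of $\Pi$ play the role of the $n$ distinguished constants of Lemma~\ref{lem:nonBoolRed1}, and these constants are precisely the parameter positions that, by definition of treewidth with $n$ parameters, must occur in every bag of the decomposition.

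For Point~1, I would first invoke Proposition~\ref{prop:ross} to replace FO-rewritability by UCQ-rewritability, recalling that a UCQ with $n$ answer variables is exactly a $0$-DLog program with $n$ parameters. Applying Lemma~\ref{lem:nonBoolRed1} with $\ell=0$ yields Boolean programs with constants $\Pi_1,\dots,\Pi_m$, each of diameter at most $k$, whose simultaneous rewritability into UCQs with constants is equivalent to UCQ-rewritability of $\Pi$. Since the introduced constants name the answer tuple and constants contribute neither to the overlap nor to the bag-size count, a treewidth-$(1,k)$ decomposition of a constant-version CQ---in which the constants may appear freely---is the same thing as a treewidth-$(1,k)$-with-$n$-parameters decomposition of the re-parametrised CQ. It therefore suffices to establish a constants-analogue of Theorem~\ref{th:rewrshape}(1): every FO-rewritable Boolean MDDLog program with constants of diameter $k$ has a UCQ-rewriting whose CQs all have treewidth $(1,k)$, counting only variables. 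Reassembling the per-$\Pi_i$ rewritings and undoing the substitution of Lemma~\ref{lem:nonBoolRed1} that replaces \cbf by fresh answer variables then produces a UCQ-rewriting of $\Pi$ in which the $n$ reinstated answer variables occupy every bag, as required. Point~2 proceeds identically with $\ell=1$: rewritability of $\Pi$ into MDLog with $n$ parameters is equivalent, via Lemma~\ref{lem:nonBoolRed1}, to rewritability of each $\Pi_i$ into MDLog with constants, and I would again reduce to a constants-analogue of Theorem~\ref{th:rewrshape}(2), namely that an MDLog-with-constants-rewritable Boolean program of diameter $k$ has such a rewriting of diameter $k$. Transporting the resulting rewritings back through the substitution of Lemma~\ref{lem:nonBoolRed1}, the constants become the $n$ parameter variables; since these occur in every rule and, by definition, the diameter of a program with $n$ parameters is measured without the parameter positions, the diameter stays bounded by $k$.

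The crux is exactly the constants-analogue of Theorem~\ref{th:rewrshape} isolated above. The obvious route---eliminating constants by the dejoining of Lemma~\ref{lem:nonBoolRed2} and then quoting Theorem~\ref{th:rewrshape} verbatim---does not suffice, because dejoining inflates the diameter from $k$ to the \emph{rule size} $r$ of the program, so the Boolean theorem would only deliver treewidth $(1,r)$ (resp.\ diameter $r$) rather than $(1,k)$ (resp.\ $k$). The fix I would pursue is to bypass dejoining and instead run the explosion-based argument sketched after Lemma~\ref{lem:erdoesdlog} directly in the presence of constants: starting from an arbitrary UCQ- (resp.\ MDLog-) rewriting, close it under the variable identifications that produce treewidth $(1,k)$ (resp.\ diameter $k$), and argue that this closure is still a rewriting by applying Lemma~\ref{lem:erdoesdlog} to an input instance while \emph{freezing} the elements named by the constants. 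The delicate technical point, and the main obstacle, is to run the explosion so that its $(1,k)$-decomposition girth is high enough to force every homomorphism from a rewriting CQ to factor through a treewidth-$(1,k)$-with-$n$-parameters CQ, while simultaneously preserving the distinguished elements so that the answer tuple and its accompanying parameter bags survive; establishing that Lemma~\ref{lem:erdoesdlog} can be applied with the parameter elements held fixed is where the real work lies.
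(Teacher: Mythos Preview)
Your high-level reduction via Lemma~\ref{lem:nonBoolRed1} and the identification of the obstacle are both correct: the naive route through Lemma~\ref{lem:nonBoolRed2} and Theorem~\ref{th:rewrshape} indeed only gives the bound in terms of the \emph{rule size}, and the paper explicitly notes this just after the theorem statement. However, your proposed fix and the paper's fix diverge.

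You propose to \emph{avoid} dejoining and instead prove a constants-version of Theorem~\ref{th:rewrshape} directly, via a constants-aware variant of the explosion Lemma~\ref{lem:erdoesdlog} in which the elements named by the constants are frozen. You rightly flag this as the main obstacle, and it is a genuine one: the Erd\H{o}s-style construction underlying Lemmas~\ref{lem:erdoes} and~\ref{lem:erdoesdlog} replaces each element by many copies, and there is no built-in mechanism that keeps a designated constant as a single element while still raising the $(1,k)$-decomposition girth. One can imagine workarounds (e.g., tagging the constants with fresh unary predicates, exploding, and then arguing that some homomorphic image realises the constants correctly), but you have not supplied this argument, and it is not obviously routine.

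The paper instead \emph{does} dejoin and goes all the way down to CSPs, but then traces the rewriting constructions backwards and exploits a structural interplay between the back-translations of Lemma~\ref{prop:rewrCorresp} and Lemma~\ref{lem:nonBoolRed2}. The point is that the rewriting one starts from at the CSP level has the single-EDB-atom form (each rule body is governed by one EDB atom containing all variables); pushing this through the aggregation-schema back-translation yields rule bodies that are, up to fresh IDB decoration, biconnected components of (variable-identified) rule bodies of the dejoined program $\Pi'_i$. The key observation is that the \emph{rejoining} step of Lemma~\ref{lem:nonBoolRed2}---replacing the marker atoms $R_j(x)$ by the constant $c_j$ and thereby re-identifying dejoined variable occurrences---collapses such a body back to at most the diameter of the original program with constants. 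Since Lemma~\ref{lem:nonBoolRed1} preserves diameter, this recovers the bound $k$.

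In short: your reduction scaffold matches the paper, and your diagnosis of the obstacle is exactly right, but your proposed resolution differs from the paper's and leaves the hardest step (constants-preserving explosion) open. The paper's route avoids that step entirely by exploiting the special shape of CSP-derived rewritings under rejoining; if you want a complete proof along your lines, you would need to actually establish the constants-aware explosion lemma.
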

Note that Theorem~\ref{TH:REWRSHAPE2} is immediate from
Theorem~\ref{th:rewrshape} and Lemmas~\ref{lem:nonBoolRed1}
and~\ref{lem:nonBoolRed2} when $k$ denotes the rule size of $\Pi$
instead of its diameter. To get the improved version, one needs to
carefully trace the construction of rewritings, starting with
rewritings for the CSPs ultimately constructed and then through the
proofs of Lemmas~\ref{prop:rewrCorresp},~\ref{lem:nonBoolRed2},
and~\ref{lem:nonBoolRed1}. In particular, the constructions in 
Lemmas~\ref{prop:rewrCorresp} and~\ref{lem:nonBoolRed2} interplay
in a subtle way that can be exploited to improve the bound. Details
are given in Appendix~\ref{appc}.


As in the Boolean case, rewritings are closely related to
obstructions. We define obstruction sets for MMSNP formulas with free
variables and summarize the results that we obtain for them. 
A \emph{set of marked obstructions} $\Omc$ for an MMSNP formula
$\theta$ with $n$ free variables over schema $\Sbf_E$ is a set of
$n$-marked instances over the same schema such that for any
$\Sbf_E$-instance $I$, we have $I \not\models \theta[\abf]$ iff for
some $(O,\cbf) \in \Omc$, there is a homomorphism $h$ from $O$ to $I$ with
$h(\cbf)=\abf$. 
We obtain the
following corollary from Point~1 of Theorem~\ref{TH:REWRSHAPE2} in
exactly the same way in which Corollary~\ref{cor:FO} is obtained from
Point~1 of Theorem~\ref{th:rewrshape}.
\begin{cor}
For every MMSNP formula $\theta$ with $n$ free variables, the
following are equivalent:
  \begin{enumerate}
  \item $\theta$ is FO-rewritable;
  \item $\theta$ has a finite marked obstruction set;
  \item $\theta$ has a finite set of finite marked obstructions of treewidth
    $(1,k)$ with $n$ parameters.
  \end{enumerate}
\end{cor}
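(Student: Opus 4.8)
The plan is to mirror exactly the argument used for Corollary~\ref{cor:FO}, but now in the non-Boolean setting using marked obstructions. The final corollary asserts the equivalence of three conditions for an MMSNP formula $\theta$ with $n$ free variables: FO-rewritability, having a finite marked obstruction set, and having a finite marked obstruction set of treewidth $(1,k)$ with $n$ parameters. The skeleton I would follow is the standard cycle $(3)\Rightarrow(2)\Rightarrow(1)\Rightarrow(3)$, where the implications $(3)\Rightarrow(2)$ is trivial (a finite set of $(1,k)$-treewidth obstructions is in particular a finite marked obstruction set) and $(1)\Rightarrow(3)$ is the substantive direction supplied by Point~1 of Theorem~\ref{TH:REWRSHAPE2}.

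First I would record the basic correspondence between marked obstruction sets for $\theta$ and UCQ-rewritings of its negation. Since $\theta$ is an MMSNP formula, $\neg\theta$ is equivalent to an $n$-ary MDDLog program $\Pi$ (by the translation summarised in Section~\ref{sec:prelim}, which preserves diameter), and $I \models \Pi(\abf)$ iff $I \not\models \theta[\abf]$. A finite marked obstruction set $\Omc = \{(O_1,\cbf_1),\dots,(O_r,\cbf_r)\}$ for $\theta$ corresponds precisely to the UCQ $q_\Omc(\xbf) = \bigvee_i \exists \ybf_i\, q_{O_i}(\cbf_i,\ybf_i)$ obtained by reading each $(O_i,\cbf_i)$ as a CQ whose answer variables are placed at the marked elements $\cbf_i$; the homomorphism condition $h(\cbf)=\abf$ in the definition of marked obstructions is exactly the condition that $q_\Omc(\abf)$ holds on $I$. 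Thus $\Omc$ is a finite marked obstruction set for $\theta$ iff $q_\Omc$ is a UCQ-rewriting of $\Pi$. This identification converts the obstruction statements into rewriting statements to which the earlier theorems apply.

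With this dictionary in place, $(1)\Rightarrow(3)$ follows directly: if $\theta$ is FO-rewritable then so is $\neg\theta$, hence $\Pi$ is FO-rewritable, and by Point~1 of Theorem~\ref{TH:REWRSHAPE2} $\Pi$ has a UCQ-rewriting in which every CQ has treewidth $(1,k)$ with $n$ parameters. Translating this UCQ back through the dictionary yields a finite set of finite marked obstructions of treewidth $(1,k)$ with $n$ parameters, establishing~(3). The implication $(2)\Rightarrow(1)$ uses the other half of the dictionary together with Proposition~\ref{prop:ross}: a finite marked obstruction set gives a UCQ-rewriting of $\Pi$, hence $\Pi$ is UCQ-rewritable and therefore FO-rewritable, so $\neg\theta$ and thus $\theta$ are FO-rewritable. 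Finally $(3)\Rightarrow(2)$ is immediate since treewidth-$(1,k)$ marked obstructions are still marked obstructions.

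I do not anticipate a genuine obstacle here, since this corollary is explicitly stated to follow ``in exactly the same way'' as Corollary~\ref{cor:FO}; the only point demanding care is the bookkeeping in the dictionary between marked obstructions and UCQ-rewritings, specifically checking that the treewidth $(1,k)$ \emph{with $n$ parameters} of a CQ matches the corresponding notion of treewidth for a marked instance, i.e.\ that the answer variables of the CQ and the marked elements of the obstruction correspond under the translation and that requiring all answer variables in every bag matches requiring all marked constants in every bag. Once that definitional alignment is confirmed, the three implications go through mechanically, and the diameter-preservation of the MMSNP-to-MDDLog translation guarantees that the parameter $k$ in condition~(3) is indeed the diameter of $\theta$ as claimed.
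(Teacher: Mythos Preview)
Your proposal is correct and matches the paper's approach exactly: the paper states that this corollary follows from Point~1 of Theorem~\ref{TH:REWRSHAPE2} ``in exactly the same way'' as Corollary~\ref{cor:FO} follows from Point~1 of Theorem~\ref{th:rewrshape}, namely via the dictionary between finite (marked) obstruction sets for $\theta$ and UCQ-rewritings of $\neg\theta$, together with the observation that $\theta$ is FO-rewritable iff $\neg\theta$ is. Your care in checking that the notion of treewidth $(1,k)$ with $n$ parameters for CQs aligns with the corresponding notion for marked instances is appropriate; note as a minor point that Proposition~\ref{prop:ross} is not needed for $(2)\Rightarrow(1)$ since a UCQ-rewriting is already an FO-rewriting.
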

It is interesting to note that this result can be viewed as a
generalization of the characterization of obstruction sets for CSP
templates with constants in terms of `c-acyclicity' in
\cite{DBLP:journals/tods/AlexeCKT11}; our parameters correspond to
constants in that paper. 
We now turn to MDLog-rewritability.
\begin{prop}
  Let $\theta$ be an MMSNP formula of diameter $k$ with $n$ free
  variables. Then $\neg \theta$ is rewritable into an MDLog program
  with $n$ parameters iff $\theta$ has a set of marked obstructions
  (equivalently: finite marked obstructions) that are of 
  treewidth~$(1,k)$ with $n$ parameters.
\end{prop}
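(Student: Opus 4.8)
The plan is to prove the two directions in close analogy with the Boolean Proposition~\ref{prop:mdlogob}, replacing its appeal to Point~2 of Theorem~\ref{th:rewrshape} by the corresponding Point~2 of Theorem~\ref{TH:REWRSHAPE2}, and its appeal to Bodirsky and Dalmau's result by a reduction to the Boolean case through Lemmas~\ref{lem:nonBoolRed1} and~\ref{lem:nonBoolRed2}. Throughout, write $\Pi$ for the $n$-ary MDDLog program equivalent to $\neg\theta$; it has diameter $k$.

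For the ``only if'' direction, assume $\neg\theta$ is rewritable into MDLog with $n$ parameters. By Point~2 of Theorem~\ref{TH:REWRSHAPE2}, it then has such a rewriting $\Gamma$ of diameter $k$. First I would fix, for every $\Sbf_E$-instance $I$ and tuple $\abf$ with $I \models \Gamma(\abf)$, a proof tree for $\mn{goal}(\abf)$ from $I$ and $\Gamma$, and read off from it an $n$-marked instance $(O,\cbf)$ together with a homomorphism $h\colon O \to I$ satisfying $h(\cbf)=\abf$ and $O \models \Gamma(\cbf)$; this is the standard unfolding of a monadic Datalog proof tree, the only new ingredient being that the $n$ parameter positions are filled throughout the tree by the same constants $\cbf$. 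Because $\Gamma$ is monadic on its non-parameter positions and has diameter $k$, its rule bodies are tree-shaped apart from the parameters, so $(O,\cbf)$ has treewidth $(1,k)$ with $n$ parameters: the constants $\cbf$ occur in every bag, consecutive bags share exactly one further (non-parameter) constant, and each bag contains at most $k$ non-parameter constants. Taking $\Omc$ to be the set of all $(O,\cbf)$ obtained this way yields a set of \emph{finite} marked obstructions of the required treewidth. Soundness of each obstruction follows since $O \models \Pi(\cbf)$ and truth of disjunctive Datalog is preserved under homomorphisms that respect the marked constants, so $h$ gives $I \models \Pi(\abf)$, i.e.\ $I \not\models \theta[\abf]$; completeness is exactly the existence of a proof tree whenever $I \models \Gamma(\abf)$.

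For the ``if'' direction, assume $\theta$ has a set $\Omc$ of (finite) marked obstructions of treewidth $(1,k)$ with $n$ parameters. Here I would pass through the Boolean case. Applying Lemma~\ref{lem:nonBoolRed1} to $\Pi$ produces Boolean MDDLog programs with constants $\Pi_1,\dots,\Pi_m$ whose rewritability into MDLog with constants is equivalent to rewritability of $\Pi$ into MDLog with $n$ parameters, and Lemma~\ref{lem:nonBoolRed2} then removes the constants, yielding Boolean programs $\Pi_i'$ without constants. The task is to transfer $\Omc$ into ordinary obstruction sets for the MMSNP sentences associated with the $\Pi_i'$: grounding the parameters of a marked obstruction $(O,\cbf)$ to the fixed constants of $\Pi_i$ turns ``treewidth $(1,k)$ with $n$ parameters'' into ordinary treewidth $(1,k)$ relative to those constants (the parameters playing the role of the constants in \cite{DBLP:journals/tods/AlexeCKT11}), and the dejoining of Lemma~\ref{lem:nonBoolRed2} removes the constants while preserving tree-shape and the homomorphism condition via the quotient construction. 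One thus obtains, for each $i$, an obstruction set of treewidth $(1,k)$ for $\Pi_i'$, and the ``if'' direction of Proposition~\ref{prop:mdlogob} (that is, Theorem~5 of \cite{DBLP:journals/jcss/BodirskyD13}) gives an MDLog-rewriting of each $\Pi_i'$. Reassembling these through Lemmas~\ref{lem:nonBoolRed2} and~\ref{lem:nonBoolRed1} then produces an MDLog-rewriting of $\Pi=\neg\theta$ with $n$ parameters.

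The main obstacle I anticipate is the bookkeeping in the ``if'' direction: verifying that the notion of treewidth $(1,k)$ with $n$ parameters for marked obstructions corresponds \emph{exactly}, under grounding the parameters and then dejoining, to ordinary treewidth $(1,k)$ for the Boolean programs, and that the marked-homomorphism condition $h(\cbf)=\abf$ translates faithfully into the plain homomorphism condition of a Boolean obstruction set through the quotient of Lemma~\ref{lem:nonBoolRed2}. The ``only if'' direction is routine once the unfolding of proof trees with shared parameter positions is set up carefully.
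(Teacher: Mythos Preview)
Your ``only if'' direction is essentially the paper's argument: both invoke Point~2 of Theorem~\ref{TH:REWRSHAPE2} and read off a finite marked instance of the right treewidth from a proof tree. (A small wording issue: rule bodies of an MDLog program with parameters need not be tree-shaped; what matters is that the \emph{proof-tree unfolding} has the required tree decomposition, which follows from the width and diameter bounds alone.)

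Your ``if'' direction, however, diverges from the paper and contains a real gap. The paper does \emph{not} reduce to the Boolean case here; it simply observes that the canonical $(1,k)$-DLog program with $n$ parameters $\Gamma^c$ constructed in Section~7.2 is, by Lemma~\ref{lem:cancompllk2}, sound for $\Pi$ and complete on $n$-marked instances of treewidth $(1,k)$ with $n$ parameters. Given marked obstructions of that treewidth, completeness of $\Gamma^c$ on each obstruction plus preservation under homomorphisms immediately shows $\Gamma^c$ is a rewriting. This is a two-line argument once the canonical program is available.

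Your route via Lemmas~\ref{lem:nonBoolRed1} and~\ref{lem:nonBoolRed2} requires translating the marked obstructions into ordinary treewidth-$(1,k')$ obstruction sets for the constant-free Boolean programs $\Pi'_i$, and this is where the plan breaks. The operation you call ``dejoining'' on instances---replacing each constant $c_j$ by an element carrying $R_j$---does \emph{not} preserve treewidth $(1,\cdot)$. For instance, if $(O,\dbf)$ has a single parameter $d$ and facts $R(d,a_i,a_{i+1})$ for $i=1,\dots,m$ (treewidth $(1,2)$ with one parameter, via bags $\{d,a_i,a_{i+1}\}$), then after replacing $d$ by a single $R_1$-element $e$ the bags must still all contain $e$, so adjacent bags overlap in $\{e,a_{i+1}\}$ and the treewidth becomes $(2,3)$. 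Splitting $e$ into per-atom or per-position copies restores treewidth $(1,\cdot)$, but then the resulting family is no longer a complete obstruction set for $\Pi'_i$: in a target instance $J$ the quotient map can collapse several $R_j$-elements, and lifting a homomorphism $O\to J'$ to $\hat O\to J$ also forces you to split \emph{non-parameter} elements of $O$ whenever they happen to land on a collapsed element---something you cannot anticipate from $O$ alone. One can repair this by ranging over all ``colourings'' of the non-parameter elements, but that is substantial extra work, not the routine bookkeeping you describe. The paper sidesteps all of this by using $\Gamma^c$ directly.
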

\begin{proof}\ The ``only if'' direction is a consequence of Point~2
  of Theorem~\ref{TH:REWRSHAPE2} and the fact that, for any MDLog
  program $\Pi \equiv \neg \theta$ with $n$ parameters of diameter
  $k$ over EDB schema $\Sbf_E$, a proof tree for
  $\mn{goal}(\cbf)$ from an $\Sbf_E$-instance $I$ and~$\Pi$ gives rise
  to a finite $n$-marked $\Sbf_E$-instance $(J,\cbf)$ of treewidth
  $(1,k)$ with $n$ parameters that satisfies $J \rightarrow I$.  The
  ``if'' direction is a consequence of the fact that the canonical
  $(1,k)$-DLog program with parameters associated with
  $\neg \theta$ viewed as an MDDLog program is complete on inputs of
  treewidth $(1,k,n)$ with $n$ parameters in the sense of
  Lemma~\ref{lem:cancompllk2}.
\end{proof}
As an illustration, it might be interesting to reconsider
Example~\ref{ex:avar2}. The unary MDDLog program shown there is the
negation of a unary MMSNP formula that has as a set of marked
obstructions the set of all $R$-cycles on which one element is the
marked element. Each of these obstructions has treewidth
$(1,2)$ with one parameter, but not treewidth $(1,2)$ in the
strict sense.




%



\section{Ontology-Mediated Queries}
\label{sect:omqs}

While the results on disjunctive Datalog and on MMSNP obtained in the
previous sections are interesting in their own right, our premier aim
is to study fundamental question of rewritability in the context of
ontology-mediated queries (OMQs). Such questions have received a lot
of interest in the OMQ context, see for example
\cite{DBLP:journals/tods/BienvenuCLW14,BHLW-IJCAI16,LutzSabellekIJCAI17}
and references therein. In particular, we settle an open question from
\cite{DBLP:journals/tods/BienvenuCLW14} by showing that in the OMQ
language $(\mathcal{ALCI},\text{CQ})$, introduced in detail below,
FO-rewritability is decidable and {\sc 2NExpTime}-complete.  In what
follows, we first introduce several prominent description logics to
serve as ontology languages and, based on that, ontology-mediated
queries. We then show how the results from the previous sections can
be used to obtain results about ontology-mediated queries.

\subsection{Preliminaries}

In description logics, ontologies are defined by so-called TBoxes. A
TBox, in turn, is a set of inclusions (that is, logical implications)
between concepts (that is, logical formulas), and possibly also
additional kinds of statements. Each description logic is determined
by the constructors that are available to build up concepts
and by the statements that are allowed in TBoxes. Here, we introduce
the widely known description logics \ALC, \ALCI, and $\mathcal{SHI}$,
listed in the order of increasing expressive power. We refer the
reader to \cite{DL-Textbook} for a more thorough introduction to DLs.

An \emph{$\mathcal{ALCI}$-concept} is formed according to the syntax rule
\[
C,D ::= \top \mid \bot \mid A \mid \neg C \mid C \sqcap D \mid C
\sqcup D \mid \exists r . C \mid \exists r^-.C \mid \forall r . C \mid
\forall r^- . C 
\]
where $A$ ranges over a fixed countably infinite set of \emph{concept
  names} and $r$ over a fixed countably infinite set of \emph{role
  names}.  An \emph{\ALC-concept} is an $\mathcal{ALCI}$-concept in
which the constructors $\exists r^- . C$ and $\forall r^-.C$ are not
used.  An \ALC-TBox (resp.\ \ALCI-TBox) is a finite set of concept
inclusions $C \sqsubseteq D$, $C$ and $D$ $\ALC$-concepts (resp.\
\ALCI-concepts). While \ALCI extends \ALC with additional concept
constructors, $\mathcal{SHI}$ extends \ALCI with additional types of
TBox statements. There is thus no need to define
$\mathcal{SHI}$-concepts as these are simply \ALCI-concepts. A
\emph{role} is either a role name or an expression $r^-$ with $r$ a
role name. A \emph{$\mathcal{SHI}$-TBox} is a finite set of
\begin{itemize}
\item \emph{concept inclusions}
$C \sqsubseteq D$, $C$ and $D$ $\mathcal{ALCI}$-concepts, 

\item \emph{role
  inclusions} $r \sqsubseteq s$, $r$ and $s$ roles, and

\item
\emph{transitivity statements} $\mn{trans}(r)$, $r$ a role
name.

\end{itemize}
DL semantics is given in terms of interpretations. An
\emph{interpretation} takes that form $\Imc=(\Delta^\Imc,\cdot^\Imc)$
where $\Delta^\Imc$ is a non-empty set called the \emph{domain} and
$\cdot^\Imc$ is the \emph{interpretation function} which maps each
concept name $A$ to a subset $A^\Imc \subseteq \Delta^\Imc$ and each
role name $r$ to a binary relation $r^\Imc \subseteq r^\Imc \times
r^\Imc$. Note that an interpretation is simply a notational variant of
a relational FO-structure that interprets only unary and binary
relations. The interpretation function is extended to compound
concepts in the standard way, as given in Figure~\ref{fig:semantics}.
\begin{figure}[t]
  \centering
\[
\begin{array}{r@{\;}c@{\;}l}
(\neg C)^\Imc &=& \Delta^\Imc \setminus C^\Imc \\[\myeqnsep] 
(C \sqcap D)^\Imc &=& C^\Imc \cap D^\Imc \\[\myeqnsep]
(C \sqcup D)^\Imc &=& C^\Imc \cup D^\Imc \\[\myeqnsep]

  (\exists r . C)^\Imc &=& \{ d \in \Delta^\Imc \mid \exists e \in 
                           C^\Imc: (d,e) \in r^\Imc \} \\[\myeqnsep] 
  (\exists r^- . C)^\Imc &=& \{ d \in \Delta^\Imc \mid \exists e \in  C^\Imc: (e,d) \in r^\Imc \}\\[\myeqnsep] 
  (\forall r . C)^\Imc &=& \{ d \in \Delta^\Imc \mid \forall e \in 
                           \Delta^\Imc : (d,e) \in r^\Imc \Rightarrow e\in  C^\Imc\} \\[\myeqnsep] 
  (\forall r^- . C)^\Imc &=& \{ d \in \Delta^\Imc \mid \forall e \in 
                           \Delta^\Imc : (e,d) \in r^\Imc \Rightarrow e\in  C^\Imc\}. 
\end{array}
\]
  \caption{Semantics of \ALCI-concepts}
  \label{fig:semantics}
\end{figure}
An intepretation is a \emph{model} of a TBox \Tmc if it
\emph{satisfies} all statements in \Tmc, that is,
\begin{itemize}

\item $C \sqsubseteq D \in \Tmc$ implies $C^\Imc \subseteq D^\Imc$;

\item $r \sqsubseteq s \in \Tmc$ implies $r^\Imc \subseteq s^\Imc$;

\item $\mn{trans}(r) \in \Tmc$ implies that $r^\Imc$ is transitive.

\end{itemize}
For roles $r,s$, we write $\Tmc \models r \sqsubseteq s$ if every model
\Imc of \Tmc satisfies $r^\Imc \subseteq s^\Imc$.

In description logic, data is typically stored in so-called ABoxes.
For uniformity with MDDLog, we use instances instead, identifying
unary relations with concept names, binary relations with role names,
and disallowing relations of any other arity. An interpretation \Imc
is a \emph{model} of an instance $I$ if $A(a) \in I$ implies
$a \in A^\Imc$ and $r(a,b) \in I$ implies $(a,b) \in r^\Imc$. We say
that an instance $I$ is \emph{consistent} with a TBox \Tmc if $I$ and
\Tmc have a joint model. 


An \emph{ontology-mediated query (OMQ)} over a schema $\Sbf_E$ is a
triple $(\Tmc,\Sbf_E,q)$ where \Tmc is a TBox formulated in a
description logic and $q$ is a query. 
The TBox can introduce symbols that are not in $\Sbf_E$, which allows
it to enrich the schema available for formulating the query~$q$. In
fact, $q$ can use symbols from $\Sbf_E$, additional symbols from \Tmc,
and also completely fresh symbols (which is useful only in very rare
cases). As the TBox language, we may use any of the description logics
introduced above. Since all these logics admit only unary and binary
relations, we assume that these are the only allowed arities in
schemas throughout Section~\ref{sect:omqs}. As the actual query
language, we use UCQs and CQs. The OMQ languages that these choices
give rise to are denoted with $(\ALC,\text{CQ})$,
$(\mathcal{SHI},\text{UCQ})$, and so on. In the actual query, we
generally disallow the use of role names $r$ such that for some role
name $s$, $\mn{trans}(s) \in \Tmc$ and $\Tmc \models s \sqsubseteq r$.
In fact, admitting such roles in the query poses serious additional
complications, which are outside the scope of this paper; see e.g.\
\cite{DBLP:conf/dlog/BienvenuELOS10,DBLP:conf/icalp/GottlobPT13}.  To
make the restriction explicit, we add a superscript $\cdot^-$ to OMQ
languages when the DL used permits transitivity statements in the
TBox, such as in $(\mathcal{SHI},\text{UCQ})^-$.

The semantics of an OMQ is given in terms of certain answers. Let $I$
be an $\Sbf_E$-instance and $\vect{a}$ a tuple of constants
from~$I$. We write $I \models Q(\vect{a})$ and call $\vect{a}$ a
\emph{certain answer to $Q$ on} $I$ if for all models \Imc of $I$
and~\Tmc, we have $\Imc \models q(\vect{a})$. The latter denotes
satisfaction of $q(\vect{a})$ in \Imc in the usual sense of
first-order logic.
\begin{exa}
Let $Q=(\Tmc,\Sbf_E,q)$ be the following OMQ, formulated in $(\ALC,\text{CQ})$: 
\[
\begin{array}{r@{\;}c@{\;}l@{\;\;\;}c}
  \Tmc &=& \{ \; \exists \mn{hasAbn}. \mn{CTest} \sqsubseteq \mn{Smoker} \sqcup \exists \mn{hasRisk}.\mn{MTC},& (1) \\[0.5mm]
   && \; \;   \;\exists \mn{hasAbn}. \mn{CTest} \sqcap \exists \mn{hasRisk}.\mn{MEN2} \sqsubseteq \exists \mn{hasRisk}. \mn{MTC},  &(2)
\\ &&\; \;  \;\mn{PCCPatient} \sqsubseteq \exists \mn{hasRisk}.\mn{MEN2}   & (3)\\[0.5mm]
 &&\; \; \; \exists \mn{hasRelative} . \exists \mn{hasRisk}.\mn{MEN2}
 \sqsubseteq \exists \mn{hasRisk}. \mn{MEN2} \ \} & (4)\\[0.5mm]
     \Sbf_E &=& \{ \; \mn{hasAbn}, \mn{CTest}, \mn{Smoker}, \mn{hasRelative}, \mn{PCCPatient} \; \} \\[0.5mm]
  q(x) &=& \; \mn{hasRisk}(x,y) \wedge \mn{MTC}(y).
\end{array}
\]
The TBox \Tmc describes the risk of somebody having medullary thyroid
cancer (MTC) in the presence of an abnormal calcitonin test
(CTest). While abnormal calcitonin levels are a marker for MTC, there
can also be false positives, for example due to smoking (Line~1 of
\Tmc). However, in the presence of a high risk for the genetic
syndrome MEN2, high calcitonin levels immediately raise an MTC
suspicion (Line~$2$). Pheochromocitoma patients (PCCPatient) have a
high MEN2 risk (Line~$3$). As MEN2 is caused by a genetic mutation,
the risk carries within families (Line~$4$). 
On the $\Sbf_E$-instance
\[ \mn{hasAbn}(\mn{john}, \mn{t}),\mn{ CTest}(\mn{t}),
\mn{Smoker}(\mn{john}), \mn{hasRelative}(\mn{john}, \mn{anna}), \mn{PCCPatient}(\mn{anna}),
\]
%
%
the
only certain answer to $Q$ is $\mn{john}$.

\end{exa}
An OMQ $Q=(\Tmc,\Sbf_E,q)$ is \emph{FO-rewritable} if there is an FO
query $\varphi(x)$ over schema~$\Sbf_E$ (and possibly involving
equality), called an \emph{FO-rewriting} of $Q$, such that for all
$\Sbf_E$-instances $I$ and $\abf \subseteq \adom(I)$, we have $I
\models Q(\abf)$ iff $I \models \varphi(\abf)$. Other notions of
rewritability such as UCQ-rewritability and MDLog-rewritability are
defined accordingly. 

Note that the TBox \Tmc can be inconsistent with the input instance
$I$, that is, there could be no joint model of \Tmc and~$I$. It can thus
be a sensible alternative to work with \emph{consistent
  FO-rewritability}, considering only $\Sbf_E$-instances $I$ that are
consistent w.r.t.\ \Tmc. This can then be complemented with
rewritability of inconsistency for $\Tmc$, that is, rewritability of
the Boolean OMQ $(\Tmc,\Sbf_E,\exists x \, A(x))$, $A(x)$ a fresh
concept name, which is true on an $\Sbf_E$-instance $I$ iff $I$ is
inconsistent with \Tmc. It is not hard to prove, though, that
consistent \Qmc-rewritability can be reduced to \Qmc-rewritability in
polynomial time for all OMQ languages condidered in this paper and all $\Qmc
\in \{ \text{FO}, \text{MDLog}, \text{DLog} \}$; see the corresponding
proof for query containment in \cite{KR-submitted}. Moreover,
rewritability of consistency was studied in \cite{DBLP:journals/tods/BienvenuCLW14} and shown to
be \NExpTime-complete for all OMQ languages considered in this paper.

\subsection{Rewritability of OMQs}

We now lift the results from earlier sections to OMQs.  There is a
known equivalence-preserving translation from the relevant OMQ
languages to MDDLog, but it involves a double exponential blowup
\cite{DBLP:journals/tods/BienvenuCLW14} that most likely is
unavoidable.\footnote{It was shown in
  \cite{DBLP:journals/tods/BienvenuCLW14} that a single exponential
  blowup is unavoidable. Whether the blowup has to be double
  exponential is an open problem.}  We refine this translation and
carefully trace the parameters in which the blowup occurs to show
that, despite these blowups, the complexity of the relevant problems
does not increase. The following is our main result concerning OMQs.
\begin{thm}
\label{thm:alcicompl}
In all OMQ languages between $(\ALC,\text{UCQ})$ and
$(\mathcal{SHI},\text{UCQ})^-$, as well as between $(\ALCI,
\text{CQ})$ and $(\SHI, \text{CQ})^-$,
  \begin{enumerate}
  \item FO-rewritability (equivalently: UCQ-rewritability) is
    2\NExpTime-complete; in fact, there is an algorithm which, given
    an OMQ $Q=(\Tmc,\Sbf_E,q)$, decides in time
    $2^{2^{p(n_q\cdot\mn{log}n_\Tmc)}}$ whether $Q$ is FO-rewritable;

  \item MDLog-rewritability is in 3\ExpTime (and 2\NExpTime-hard); in
    fact, there is an algorithm which, given an OMQ
    $Q=(\Tmc,\Sbf_E,q)$, decides in time
    $2^{2^{2^{p(n_q\cdot\mn{log}n_\Tmc)}}}$ whether $Q$ is MDLog-rewritable
  \end{enumerate}
  where $n_q$ and $n_\Tmc$ are the size of $q$ and \Tmc and $p$ is a polynomial. 
\end{thm}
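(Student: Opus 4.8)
The plan is to reduce both statements to the already-established rewritability results for MDDLog and then to control the complexity by tracing the parameters in which the OMQ-to-MDDLog translation blows up. First I would invoke the equivalence-preserving translation of OMQs into MDDLog from \cite{DBLP:journals/tods/BienvenuCLW14}: for every OMQ $Q=(\Tmc,\Sbf_E,q)$ of arity $n$ in any of the listed languages one constructs an $n$-ary MDDLog program $\Pi_Q$ over $\Sbf_E$ such that $I\models Q(\abf)$ iff $I\models\Pi_Q(\abf)$ for all $\Sbf_E$-instances $I$ and all tuples $\abf\subseteq\adom(I)$ of length $n$. Since the two queries are semantically identical, they have exactly the same FO-, UCQ-, and MDLog-rewritings; hence $Q$ is FO-rewritable (resp.\ MDLog-rewritable with $n$ parameters, the natural non-Boolean target argued for around Theorem~\ref{th:general}) iff $\Pi_Q$ is. By Proposition~\ref{prop:ross} FO- and UCQ-rewritability coincide also non-Booleanly, so Point~1 reduces to UCQ-rewritability of $\Pi_Q$ and Point~2 to its rewritability into MDLog with $n$ parameters, both decidable by Theorem~\ref{th:general}. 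Decidability and the qualitative classes $2\NExpTime$ and $3\ExpTime$ are then immediate.

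The real work is in the sharp running-time bounds, and here the key observation is that the template set produced by the CSP construction is governed by the \emph{diameter} and the \emph{IDB schema} of $\Pi_Q$, not by its (double-exponential) rule count: by Theorem~\ref{th:simpleToCsp} the size of $S_{\Pi_Q}$ depends only on $|\Sbf_I|$ of the simple program and on the maximal EDB arity. I would therefore arrange the translation so that $\Pi_Q$ has diameter polynomial in $n_q$ while its IDB schema is only single-exponential in $n_\Tmc$, even though the number of rules may be double-exponential. Feeding these two parameters through Theorem~\ref{th:mmsnpToSimple}—crucially in the schema-based form that bounds the simple program by $2^{p(k\cdot\mn{log}(\cdot))}$, so that the query enters through $k$ and the TBox only logarithmically—and then through Theorem~\ref{th:simpleToCsp} and Proposition~\ref{prop:rewrCorrespARGH} yields a template set of size at most $2^{2^{p(n_q\cdot\mn{log}n_\Tmc)}}$. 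Since FO-rewritability of generalized coCSPs is in {\sc NP} and MDLog-rewritability is in $\ExpTime$ in the template size, the stated bounds $2^{2^{p(n_q\cdot\mn{log}n_\Tmc)}}$ (nondeterministic, matching $2\NExpTime$) and $2^{2^{2^{p(n_q\cdot\mn{log}n_\Tmc)}}}$ follow. For Point~2 I would additionally route through Lemmas~\ref{lem:nonBoolRed1} and~\ref{lem:nonBoolRed2}, verifying that passing to Boolean programs with constants and then eliminating constants preserves the diameter and IDB-schema bounds on which the estimate rests.

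For the matching lower bounds I would invoke the $2\NExpTime$-hardness results of \cite{KR-submitted}, which already apply to the least expressive languages in each range, namely $(\ALC,\text{UCQ})$ and $(\ALCI,\text{CQ})$; the disjunction in the first range and the inverse roles in the second are exactly what the encodings exploit, and the restriction to plain CQs is what forces the separate $(\ALCI,\text{CQ})$ range. Combining the upper bounds established for the most expressive languages $(\SHI,\text{UCQ})$ and $(\SHI,\text{CQ})$ with hardness for the least expressive ones then gives completeness uniformly throughout each range by the usual sandwiching argument, since every OMQ in a smaller language is also one in the larger language.

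The main obstacle is precisely the parameter bookkeeping of the second paragraph: showing that the double-exponential blowup of the translation is \emph{absorbed} rather than \emph{compounded}. This rests on two points that must be checked carefully—that the template-size bound of Theorem~\ref{th:simpleToCsp} does not see the double-exponential rule count at all, and that the quantity propagated by Theorem~\ref{th:mmsnpToSimple} is the product of $k$ with a logarithmic size term, so that $n_\Tmc$ contributes only as $\mn{log}n_\Tmc$. A naive reduction that substitutes the full double-exponential size of $\Pi_Q$ into the generic MDDLog bound would yield only a $3\NExpTime$ estimate and miss the tight result; getting the stated $p(n_q\cdot\mn{log}n_\Tmc)$ shape is exactly where the refinement of the translation and the detailed tracing through Lemmas~\ref{prop:rewrCorresp},~\ref{lem:nonBoolRed2}, and~\ref{lem:nonBoolRed1} are needed.
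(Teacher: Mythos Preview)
Your overall architecture matches the paper's: translate the OMQ to MDDLog, reduce non-Boolean to Boolean via Lemmas~\ref{lem:nonBoolRed1} and~\ref{lem:nonBoolRed2}, simplify via Theorem~\ref{th:mmsnpToSimple}, build templates via Theorem~\ref{th:simpleToCsp}, and appeal to \cite{KR-submitted} for the lower bounds. You also correctly locate the crux in the parameter bookkeeping and correctly observe that a black-box composition yields only a $3$\NExpTime{} bound.

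The gap is in your second paragraph, where you claim that feeding ``diameter and IDB schema'' through Theorem~\ref{th:mmsnpToSimple} suffices. The bound in Theorem~\ref{th:mmsnpToSimple} is $2^{p(k\cdot\log n)}$ with $n$ the \emph{size} of the input program, not its IDB-schema cardinality; and this is not just slack in the statement. The simplification step makes rule bodies biconnected by splitting rules and introducing a fresh IDB relation \emph{per split}. Since the OMQ translation produces double-exponentially many rules, the simple program's IDB schema $\Sbf'_I$ can a priori be double-exponential, and then Theorem~\ref{th:simpleToCsp} gives $|S_\Pi|\le 2^{|\Sbf'_I|}$ triple-exponential---exactly the obstacle the paper flags. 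Your assertion that ``the template-size bound of Theorem~\ref{th:simpleToCsp} does not see the double-exponential rule count at all'' is therefore unjustified: the rule count enters one step earlier, through the new IDBs created during biconnectification.

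What the paper actually does (Theorem~\ref{THM:ALCI} and Lemma~\ref{LEM:OMQTOSIMPLE}) is not a generic chaining but a \emph{combined} construction that opens up both the OMQ translation and the simplification step. The refined translation is arranged so that the only rules occurring in double-exponential multiplicity are the ``type~3'' rules, whose bodies use a single variable; such bodies are already biconnected and require no fresh IDBs (and after dejoining in Lemma~\ref{lem:nonBoolRed2} they acquire a uniform two-atom shape that needs only single-exponentially many fresh nullary IDBs in total). The remaining rule types occur only single-exponentially often, so their splitting contributes only single-exponentially many new IDBs. This structural argument is the missing idea; once you have it, the simple program's IDB schema stays at $2^{p(n_q\cdot\log n_\Tmc)}$ and your template-size calculation goes through.
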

Note that the runtime for deciding FO-rewritability stated in 
Theorem~\ref{thm:alcicompl} is double exponential only in the size of
the actual query $q$ (which tends to be very small) while it is only
single exponential in the size of the TBox (which can become large)
and similarly for  MDLog-rewritability, only one exponential higher.

The lower bounds in Theorem~\ref{thm:alcicompl} are from
\cite{KR-submitted}. To prove the upper bounds, we first give a
refined translation from OMQs to MDDLog.  A proof is provided in
Appendix~\ref{app:fromOMQtoMDDLog}.
%
\begin{restatable}{thm}{thmalci}
\label{THM:ALCI}
For every OMQ $Q=(\Tmc,\Sbf_E,q)$ from $(\mathcal{SHI},\text{UCQ})^-$, one can
construct an equivalent MDDLog program $\Pi$ such that
  \begin{enumerate}

  \item the size of $\Pi$ is bounded by $2^{2^{p(n_q \cdot \mn{log}n_\Tmc)}}$;

  \item the IDB schema of $\Pi$ is of size at most $2^{p(n_q \cdot \mn{log}n_\Tmc)}$;

  \item the rule size of $\Pi$ is bounded by $n_q$

  \end{enumerate}
  where $n_q$ and $n_\Tmc$ are the size of $q$ and \Tmc and $p$ is a
  polynomial. The construction takes time polynomial in the size of $\Pi$.
\end{restatable}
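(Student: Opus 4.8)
The plan is to follow the standard type-based translation of expressive-DL OMQs into disjunctive Datalog, as in \cite{DBLP:journals/tods/BienvenuCLW14}, but to arrange it so that only \emph{query-relevant} information is exposed through the IDB schema while the full, exponentially large type reasoning is absorbed into the (double-exponentially many) rules; this is what keeps the IDB schema small. First I would dispose of the genuinely $\mathcal{SHI}$-specific features. Role inclusions are handled by computing the preorder induced on roles by $\Tmc$ via $\Tmc \models r \sqsubseteq s$ and propagating role memberships accordingly. Transitivity is internalized by the usual absorption trick: for every $\forall r.C$ occurring in $\Tmc$ and every role $s$ with $\mn{trans}(s)$ and $\Tmc \models s \sqsubseteq r$, one adds axioms pushing $C$ and $\forall s.C$ along $s$-edges. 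Since OMQs in $(\mathcal{SHI},\text{UCQ})$ are forbidden from using such transitive super-roles in $q$, the query never has to be matched along these edges, so after this step one may reason as in $\mathcal{ALCI}$. All of this is polynomial in $n_\Tmc$, so it does not affect the claimed bounds, and it lets me work with a normalized $\mathcal{ALCI}$-TBox for the rest of the argument.

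The core of the construction rests on the forest-model property of $\mathcal{ALCI}$: every instance consistent with $\Tmc$ has a model in which the anonymous part forms trees dangling from the active-domain elements. The key observation is that any match of a disjunct of $q$ in such a model splits into a part landing in the active-domain core and a bounded family of rooted, tree-shaped \emph{query components} of at most $n_q$ variables landing inside the tree witnesses. I would precompute, for each type $t$ (a consistent subset of the closure of $\Tmc$ and $q$) and each such rooted component $p$, whether every tree witness of an element of type $t$ satisfies $p$; this is a family of entailment checks that is exponential in $n_\Tmc$ but is carried out entirely at construction time and is never stored in the IDB schema. The resulting MDDLog program then uses disjunctive rules to assign type information to active-domain elements, the disjunction realizing the ``for all models'' quantifier of certain-answer semantics through the minimal-model semantics of MDDLog, together with goal rules obtained from a disjunct of $q$ by replacing each tree component with a single unary IDB atom on its root variable. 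Since this replacement never increases the number of variable occurrences beyond that of $q$, the rule size stays at most $n_q$, giving Point~3.

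For the size bounds I would index the IDB relations not by full types, of which there are $2^{\mathrm{poly}(n_\Tmc)}$, but by the interface data that must be passed between rules: essentially rooted query components together with the bounded amount of type information a single query variable can ``see'', each naming at most $\mathrm{poly}(n_q)$ objects over an alphabet of TBox symbols of size $n_\Tmc$. This yields at most $2^{p(n_q \cdot \mn{log}\,n_\Tmc)}$ IDB relations, i.e.\ $n_\Tmc^{\mathrm{poly}(n_q)}$ of them, establishing Point~2. The number of rules is then governed by all admissible combinations of such annotations over the at most $n_q$ variables of a rule, of which there are double-exponentially many, and each such rule is of polynomial size; summing gives a program of size $2^{2^{p(n_q \cdot \mn{log}\,n_\Tmc)}}$ (Point~1), constructible in time polynomial in its size. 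Correctness, namely that $I \models \Pi(\abf)$ iff $I \models Q(\abf)$, I would prove by the two expected directions: a minimal model of $\Pi \cup I$ that omits $\mn{goal}(\abf)$ is turned into a forest countermodel of $\Tmc$ and $I$ witnessing $I \not\models Q(\abf)$, and conversely any such countermodel induces a type assignment that satisfies every rule without deriving $\mn{goal}(\abf)$.

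The main obstacle will be exactly the IDB-schema bound of Point~2. The naive translation materializes one IDB per type and is therefore single-exponential in $n_\Tmc$, far larger than allowed. The delicate point is to show that the exponentially large type reasoning can be confined to the rule set, where a double-exponential blowup is permitted, while the IDBs carry only the $n_\Tmc^{\mathrm{poly}(n_q)}$-bounded query-component interface. Verifying that this restricted interface still suffices to reconstruct, across adjacent rule applications, every tree-shaped query match in the forest model is where the real work lies, and it is what pins down the precise exponents quoted in the statement.
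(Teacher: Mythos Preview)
Your overall strategy matches the paper's: guess types at active-domain elements via disjunctive rules, use tree-shaped query components to absorb matches into the anonymous part, and push the exponential type enumeration into the rule set rather than the IDB schema. You also correctly pinpoint Point~2 as the crux. However, your description of the IDB schema---relations indexed by ``interface data \ldots\ naming at most $\mathrm{poly}(n_q)$ objects over an alphabet of size $n_\Tmc$''---suggests that each IDB stores only a bounded, query-relevant slice of the type. That is where the gap lies. To go from ``no minimal model of $\Pi\cup I$ derives $\mn{goal}(\abf)$'' to an actual countermodel of $Q$, you must be able to read off at every active-domain element a \emph{full} type that is locally realizable w.r.t.\ $\Tmc$ and compatible with neighbouring types along EDB edges. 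Local realizability depends on all of $\mn{sub}(\Tmc)$, not just on $n_q$-many symbols, so a query-bounded interface cannot enforce it.

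The paper's resolution is simpler than a clever interface: it keeps one unary IDB pair $P_C,\overline{P}_C$ for every \emph{individual} concept $C\in\mn{sub}(\Tmc)\cup\mn{con}(q_0)$ (where $\mn{con}(q_0)$ collects the rooted tree-shaped subqueries of $q_0$ viewed as $\mathcal{ALCI}$-concepts), plus nullary $P_q,\overline{P}_q$ for each Boolean tree-shaped subquery. The full type at an element is then the \emph{conjunction} of these linearly-in-$n_\Tmc$ plus $2^{p(n_q)}$-many unary facts, not a single compound IDB fact. This gives an IDB schema of size $O(|\mn{sub}(\Tmc)|+2^{p(n_q)})\le 2^{p(n_q\cdot\log n_\Tmc)}$ as required, and is exactly the change from \cite{DBLP:conf/pods/BienvenuCLW13}, which used one predicate $P_t$ per subset $t$. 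The exponential blowup migrates entirely into the rules: one rule $\bot\leftarrow \delta(x)$ for every non-realizable single-variable diagram over these IDBs (double-exponentially many), small two-variable three-atom rules for edge compatibility, and goal rules ranging over diagrams with at most $n_q$ variable occurrences built from role and concept names in $q$ together with the $P_q$ predicates---the latter restriction is what delivers Point~3. So the trick is not to restrict what the IDBs collectively encode, but to \emph{distribute} the encoding of a full type across many one-bit IDB atoms; the per-subconcept granularity is what keeps the schema single-exponential while still allowing full consistency checking in the rules.
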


Let $Q$ be an OMQ from $(\mathcal{SHI},\text{UCQ})^-$. Instead of
deciding FO- or MDLog-rewritability of~$Q$, we can decide the same
problem for the MDDLog program delivered by
Theorem~\ref{THM:ALCI}. The bounds stated in Theorem~\ref{THM:ALCI},
Lemmas~\ref{lem:nonBoolRed1} and~\ref{lem:nonBoolRed2}, and
Theorems~\ref{th:mmsnpToSimple} and~\ref{th:simpleToCsp}, though, only
guarantee that we obtain a CSP template with 3-exponentially many
elements, which does not yield {\sc 2NExpTime} upper bounds.
However, it is possible to combine the construction underlying
Theorem~\ref{THM:ALCI} with those underlying
Lemmas~\ref{lem:nonBoolRed1} and~\ref{lem:nonBoolRed2} and
Theorem~\ref{th:mmsnpToSimple} to obtain the following.%
\begin{restatable}{lem}{lemomqtosimple}
\label{LEM:OMQTOSIMPLE}
Given an OMQ $Q=(\Tmc,\Sbf_E,q)$
from $(\mathcal{SHI},\text{UCQ})^-$ with $\Tmc$ of size $n_\Tmc$ and $q$
of size $n_q$, one can construct a simple MDDLog program $\Pi_Q$ over
an aggregation EDB schema $\Sbf'_E$ such that%
  \begin{enumerate}
  \item $Q$ is \Qmc-rewritable iff $\Pi_Q$ is \Qmc-rewritable for
  every $\Qmc \in \{ \text{FO}, \text{UCQ}, \text{MDLog} \}$;

  \item the size of $\Pi_Q$ and the cardinality of $\Sbf'_E$ are
    bounded by $2^{2^{p(n_q \cdot \mn{log}n_\Tmc)}}$ and the arity of
    relations in $\Sbf'_E$ is bounded by $\max\{n_q,2\}$;
  \item the IDB schema of $\Pi_Q$ is of size $2^{p(n_q \cdot \mn{log}n_\Tmc)}$
  \end{enumerate}
  where $p$ is a polynomial.  The construction takes time polynomial
  in the size of~$\Pi_Q$.
\end{restatable}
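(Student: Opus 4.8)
The plan is to chain the four constructions underlying Theorem~\ref{THM:ALCI}, Lemmas~\ref{lem:nonBoolRed1} and~\ref{lem:nonBoolRed2}, and Theorem~\ref{th:mmsnpToSimple}, but to do so as a single \emph{integrated} construction rather than as a black-box composition, since only the latter keeps the IDB schema small. First I would apply Theorem~\ref{THM:ALCI} to obtain an MDDLog program $\Pi$ equivalent to $Q$ over the (polynomially sized) data schema $\Sbf_E$, recording the three parameters it guarantees: program size at most $2^{2^{p(n_q\cdot\mn{log}n_\Tmc)}}$, IDB schema of size at most $2^{p(n_q\cdot\mn{log}n_\Tmc)}$, and rule size (hence diameter) at most $n_q$. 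I would then apply Lemma~\ref{lem:nonBoolRed1} to replace $\Pi$ by the Boolean programs-with-constants $\Pi_1,\dots,\Pi_m$, apply Lemma~\ref{lem:nonBoolRed2} to eliminate the constants from each, merge the constant-free results into a single Boolean program by disjoint tagging (so that its rewritability is equivalent to simultaneous rewritability of all the $\Pi_i$), and finally run Theorem~\ref{th:mmsnpToSimple} on this program to reach the simple form $\Pi_Q$. Correctness of Point~1 then follows from the rewritability equivalences underlying the proof of Theorem~\ref{th:general}: the translation of Theorem~\ref{THM:ALCI} is equivalence preserving, FO- and UCQ-rewritability coincide by Proposition~\ref{prop:ross}, Lemmas~\ref{lem:nonBoolRed1} and~\ref{lem:nonBoolRed2} relate rewritability of the non-Boolean program to that of the constant-free Boolean ones (with UCQ and MDLog corresponding to $0$- and $1$-DLog), and Lemmas~\ref{prop:rewrCorresp} and~\ref{lem:FOgirthdontcare} relate a Boolean MDDLog program to its simple form.

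The bookkeeping for the \emph{easy} parameters is routine. The EDB schema stays polynomial along the whole chain: Theorem~\ref{THM:ALCI} works over $\Sbf_E$, Lemma~\ref{lem:nonBoolRed1} does not change the EDB schema, and the dejoining of Lemma~\ref{lem:nonBoolRed2} only adds at most $n\le n_q$ fresh monadic relations. Likewise the diameter never exceeds $n_q$, because Theorem~\ref{THM:ALCI} bounds the rule size by $n_q$, Lemma~\ref{lem:nonBoolRed1} preserves rule size, and Lemma~\ref{lem:nonBoolRed2} bounds the diameter of its output by the rule size of its input. Consequently, when Theorem~\ref{th:mmsnpToSimple} is applied it produces a $k$-aggregation schema with $k\le n_q$, so the arity of $\Sbf'_E$ is bounded by $\max\{n_q,2\}$ (the $2$ covering the binary EDB relations). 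The merge of the $m\le n_q^{\,n_q}$ programs multiplies the size only by a single-exponential factor, so the program size remains bounded by $2^{2^{p(n_q\cdot\mn{log}n_\Tmc)}}$; and since $\Sbf_E$ is polynomial, the aggregation relations of $\Sbf'_E$, indexed by CQs over $\Sbf_E$ with at most $n_q$ variables, are only single-exponentially many, well within the stated bound.

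The one genuinely delicate parameter, and the main obstacle, is the size of the IDB schema, which Point~3 requires to stay \emph{single} exponential. A naive composition fails here: the program fed into Theorem~\ref{th:mmsnpToSimple} has double-exponential size $n$, so the guarantee $2^{p(k\cdot\mn{log}n)}$ of that theorem only bounds $|\Sbf'_I|$ by a \emph{double} exponential. This is not a cosmetic loss: when the resulting simple program is later turned into a generalized CSP via Theorem~\ref{th:simpleToCsp}, the template bound $|T|\le|\Sbf_E|\cdot 2^{m|\Sbf_I|}$ would then become triple-exponential, destroying the 2\NExpTime upper bound. The fix is to perform the biconnectivity step of Theorem~\ref{th:mmsnpToSimple} directly on the output of Theorem~\ref{THM:ALCI} and to observe that the fresh IDB relations it introduces need only be indexed by \emph{biconnected sub-CQs of rule bodies together with a designated boundary variable}; since every rule body has at most $n_q$ variables and $\Sbf_E$ is polynomial, there are only single-exponentially many such sub-CQs, hence only single-exponentially many fresh IDBs.

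The point that makes this work is that the \emph{type} information that has to be communicated across a cut vertex is already carried by the IDBs manufactured by Theorem~\ref{THM:ALCI}, which are exactly the (single-exponentially many) DL-types; the integrated construction therefore reuses these existing IDBs instead of creating a fresh family indexed by \emph{subsets} of IDBs, which is precisely what would have incurred the extra exponential. With $|\Sbf_I|$ thus kept at $2^{p(n_q\cdot\mn{log}n_\Tmc)}$, the arity of $\Sbf'_E$ at $\max\{n_q,2\}$, and the program size at $2^{2^{p(n_q\cdot\mn{log}n_\Tmc)}}$, all three bounds of Points~2 and~3 hold, and one checks that every step is computable in time polynomial in the size of the output. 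I expect the careful verification that the biconnectivity IDBs really can be confined to this single-exponential family, i.e.\ that no subset-of-IDBs annotation is needed once the DL-types are present, to be the most technically demanding part of the argument.
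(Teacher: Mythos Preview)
Your overall chain of constructions is the same as the paper's, and you correctly identify the IDB schema size as the only delicate parameter. However, your proposed argument for bounding it has a gap.

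You want to index the fresh biconnectivity IDBs by ``biconnected sub-CQs of rule bodies together with a designated boundary variable'' and claim there are only single-exponentially many such objects ``since every rule body has at most $n_q$ variables and $\Sbf_E$ is polynomial''. But the rule bodies after Theorem~\ref{THM:ALCI} are over $\Sbf_E \cup \Sbf_I$, not just $\Sbf_E$, and in particular the type-3 rules $\bot \leftarrow P_{C_1}(x)\wedge\cdots\wedge P_{C_n}(x)$ may contain one atom for each $C \in \mn{sub}(\Tmc)\cup\mn{con}(q_0)$, so the number of atoms (and hence of sub-CQs) is governed by the single-exponential IDB schema, not by $n_q$ and $\Sbf_E$. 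Your counting therefore does not give a single-exponential bound. Your final paragraph gestures at ``reusing the DL-type IDBs'' but does not say what concretely replaces the fresh biconnectivity IDBs, and the biconnectivity step does not naturally factor through type information in the way you suggest.

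The paper's argument is more pedestrian but works: it distinguishes rules by type. Rules of types~1, 2, 4, 5 number only $2^{p(n_q\cdot\mn{log}n_\Tmc)}$, so biconnectifying them introduces at most that many fresh IDBs. Rules of type~3 are the doubly-exponentially many ones, but in $\Pi_0$ they have a single variable and are already biconnected; after the dejoining of Lemma~\ref{lem:nonBoolRed2} they become $\bot \leftarrow P_{C_1}(x_1)\wedge R_i(x_1)\wedge\cdots\wedge P_{C_n}(x_n)\wedge R_i(x_n)$, whose connected components are just pairs $P_C(x)\wedge R_a(x)$. There are only $|\mn{sub}(\Tmc)\cup\mn{con}(q_0)|\cdot n$ such pairs, hence single-exponentially many fresh nullary IDBs suffice for all type-3 rules together. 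This case analysis by rule type, exploiting the very restricted shape of the numerous type-3 rules, is the missing ingredient in your argument.
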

%
A proof is provided in Appendix~\ref{app:OMQtosimple}.
  Constructing a CSP template from this refined
  simple program and applying the decision procedures for
  rewritability of CSP templates, we obtain the upper bounds stated in
  Theorem~\ref{th:simpleToCsp}.


We remark that it is not possible to extend
Theorem~\ref{thm:alcicompl} to description logics with functional
roles or number restrictions since, in such DLs, FO-rewritability of
OMQs is undecidable \cite{DBLP:journals/tods/BienvenuCLW14}. The proof
can be adapted to MDLog-rewritability.


The results about the shape of rewritings stated in
Theorem~\ref{TH:REWRSHAPE2} (of course) also apply to the OMQ case.
Note that, in Points~1 and~2 of that theorem, we can then replace $k$
with $\max\{n_q,2\}$.  Moreover,
the canonical DLog programs introduced for MDDLog in
Section~\ref{sec:answer} can also be utilized for OMQs via the
translation underlying the proof of Theorem~\ref{THM:ALCI}.


Regarding Datalog-rewritability of OMQs, we obtain a potentially
incomplete decision procedure by combining Theorem~\ref{THM:ALCI} with
Lemmas~\ref{lem:nonBoolRed1} and~\ref{lem:nonBoolRed2} and the
algorithm from Section~\ref{sect:dlog-rewr}. It is possible to define
a class of OMQs $(\Tmc,\Sbf_E,q)$ that \emph{have equality} and for
which this procedure is complete. Roughly, $\Sbf_E$ needs to contain a
relation $\mn{eq}$ and $\Tmc$ enforces that for all models $\Imc$ of
\Tmc and all $(d,e) \in \mn{eq}^\Imc$, $d$ and $e$ satisfy exactly the
same subconcepts of \Tmc and exactly the same tree contractions of $q$
and then taking a subquery. We refrain from working out the details.

\section{Dichotomy and Deciding PTime Query Evaluation}
\label{sect:dicho}

There was a recent breakthrough in research on CSPs, independently
achieved by Bulatov and by Zhuk, who have proved the long standing
Feder-Vardi conjecture thus establishing a dichotomy between \PTime
and {\sc NP} for the complexity of CSPs
\cite{Bulatov17,Zhuk17}. Together with results by Chen and
Larose~\cite{Chen2017}, this also implies that it
is decidable and {\sc NP}-complete whether the CSP defined by a given
template has \PTime complexity. We observe that, together with the
translations given in this paper, we obtain several interesting
results on MMSNP, MDDLog, and OMQs.

In particular, we consider the (data) complexity of query evaluation,
which is defined in the expected way. For example, each OMQ
$Q=(\Tmc,\Sbf_E,q)$ gives rise to the following \emph{query evaluation
  problem}: given an $\Sbf_E$-instance $I$ and a tuple
$\abf \subseteq \adom(I)$, decide whether $I \models Q(\abf)$. This
problem is guaranteed to be in {\sc coNP} when $Q$ is from any of the
OMQ languages studied in this paper
\cite{DBLP:journals/tods/BienvenuCLW14}, but of course there are also
OMQs $Q$ for which it is in {\PTime} or even simpler and from a
practical perspective it is very important to understand the exact
complexity of evaluating the concrete queries that are relevant for
the application at hand. The definition of query evaluation is
analogous for MDDLog and for MMSNP; note that MMSNP only gives rise to
Boolean queries and that there is an {\sc NP} upper bound for the
complexity rather than a {\sc coNP} one.

The question of \PTime query evaluation also comes with an associated
`meta problem': given an OMQ $Q$ (or a query from some other relevant
language), decide whether $Q$ admits \PTime query evaluation. We
remark that the data complexity of OMQs as well as the associated meta
problem and dichotomy questions have received significant interest
\cite{DBLP:journals/tods/BienvenuCLW14,LMCS-18,LutzSabellekIJCAI17,DBLP:conf/aaai/HernichLW17,DBLP:conf/ijcai/LutzSW15,DBLP:conf/ijcai/LutzSW13,DBLP:journals/ai/CalvaneseGLLR13}. The
following theorem summarizes our results regarding the complexity of
query evaluation.
\begin{thm}
  In MDDLog and all OMQ languages between $(\ALC,\text{UCQ})$
  and $(\mathcal{SHI},\text{UCQ})^-$, as well as between
  $(\ALCI, \text{CQ})$ and $(\SHI, \text{CQ})^-$, 
  \begin{enumerate}

  \item there is a dichotomy in the complexity of query evaluation
    between \PTime and {\sc coNP};
    
  \item deciding \PTime-query evaluation is 2\NExpTime-complete.

  \end{enumerate}
  The same holds for MMSNP, with {\sc coNP} replaced by {\sc NP}
  in Point~(1).
\end{thm}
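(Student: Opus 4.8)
The plan is to reduce every case to the single-template CSP dichotomy of Bulatov and Zhuk \cite{Bulatov17,Zhuk17} and to the meta-result of Chen and Larose \cite{Chen2017}, using the translations of Section~\ref{sec:shape}. First I would collapse the various query languages to Boolean MDDLog. An OMQ $Q$ is equivalent to the MDDLog program delivered by Theorem~\ref{THM:ALCI} (and, with the sharper size bounds, to the simple program of Lemma~\ref{LEM:OMQTOSIMPLE}), so $Q$ and its translation have the same data complexity; an $n$-ary MDDLog program reduces to a Boolean one by marking the $n$ answer positions with fresh unary EDB relations, a polynomial-time computable and reversible transformation; and MMSNP is exactly the complement of Boolean MDDLog, so its $\PTime$-vs-{\sc NP} dichotomy is the complement of the $\PTime$-vs-{\sc coNP} dichotomy for Boolean MDDLog. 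It thus suffices to treat a Boolean MDDLog program $\Pi$ of diameter $k$.

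For such $\Pi$, Theorems~\ref{th:mmsnpToSimple} and~\ref{th:simpleToCsp} (summarised in Proposition~\ref{prop:rewrCorrespARGH}) yield a generalized CSP $S_\Pi$ whose complement coCSP$(S_\Pi)$ is equivalent to $\Pi$ up to the usual instance translation. I would turn this equivalence into a two-way polynomial-time reduction between the two query-evaluation problems. The forward direction maps an $\Sbf_E$-instance $I$ to its corresponding $\Sbf'_E$-instance $I'$ (computable in polynomial time since the defining CQs have bounded arity and no quantified variables) and applies Point~1 of Theorem~\ref{th:mmsnpToSimple}. For the backward direction, given an $\Sbf'_E$-instance $J$ I would first apply the explosion Lemma~\ref{lem:erdoes}---in its derandomised polynomial-time form due to Kun \cite{DBLP:journals/combinatorica/Kun13}---with girth parameter $g>k$ and $s=\max\{\,|T| \mid T \in S_\Pi\,\}$, obtaining $J^\ast$ with $J^\ast \rightarrow J$, girth exceeding $k$, and $J^\ast \rightarrow S_\Pi$ iff $J \rightarrow S_\Pi$; passing to the corresponding $\Sbf_E$-instance and invoking Point~2 of Theorem~\ref{th:mmsnpToSimple} (part 2a and, thanks to the high girth, 2b) shows that this instance satisfies $\Pi$ exactly when $J$ is a positive instance of coCSP$(S_\Pi)$. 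Hence $\Pi$ and coCSP$(S_\Pi)$ lie in the same complexity classes. Point~(1) then follows: by \cite{Bulatov17,Zhuk17}, lifted from single templates to generalized CSPs as in \cite{DBLP:journals/tods/BienvenuCLW14}, coCSP$(S_\Pi)$ is in $\PTime$ or {\sc coNP}-complete, and since $\Pi$ is always in {\sc coNP} the dichotomy transfers (for MMSNP, {\sc coNP} becomes {\sc NP} upon complementation).

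For the upper bound of Point~(2), the reductions above give that $\Pi$ is in $\PTime$ iff coCSP$(S_\Pi)$ is iff CSP$(S_\Pi)$ is. The Chen--Larose meta-algorithm \cite{Chen2017}, again lifted to generalized CSPs, decides the latter nondeterministically in time polynomial in $|S_\Pi|$ and in $\max_{T}|T|$. By Proposition~\ref{prop:rewrCorrespARGH} these quantities are bounded by $2^{2^{p(n)}}$ for Boolean MDDLog, and Lemma~\ref{LEM:OMQTOSIMPLE} keeps the template double- rather than triple-exponential in the OMQ case; nondeterministic polynomial time in a doubly exponential object is $2\NExpTime$ in the input size $n$, and constructing $S_\Pi$ stays within the same bound. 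This yields the $2\NExpTime$ upper bound uniformly for MDDLog, MMSNP, and the OMQ languages.

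The remaining ingredient, and the step I expect to be the main obstacle, is the $2\NExpTime$ lower bound. I would reuse the $2\NExpTime$-hardness reduction for FO-rewritability from \cite{KR-submitted}, which sends a $2\NExpTime$ computation to a program $\Pi_x$ that is FO-rewritable iff $x$ is a positive instance. Since FO-rewritability implies $\PTime$ evaluation, and since by Point~(1) a program not in $\PTime$ is {\sc coNP}-complete, the \emph{same} reduction witnesses $2\NExpTime$-hardness of deciding $\PTime$ evaluation \emph{provided} that in the negative case $\Pi_x$ is genuinely {\sc coNP}-hard rather than merely non-FO-rewritable. Concretely, the gadget realising non-rewritability must embed an {\sc NP}-complete constraint problem, so that $\PTime$ evaluation and FO-rewritability coincide on the image of the reduction; equivalently, the construction must never output a program that is in $\PTime$ but not FO-rewritable. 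Verifying this coincidence is the delicate point, and it is the only place where the argument must look inside the lower-bound construction rather than use the CSP dichotomy and meta-algorithm as black boxes.
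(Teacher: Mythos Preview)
Your approach coincides with the paper's on all the main ingredients: the Feder--Vardi translation combined with Kun's derandomisation to obtain polynomial-time equivalence with a generalized CSP, the Bulatov--Zhuk dichotomy and the Chen--Larose meta-algorithm lifted to generalized CSPs, and re-using the lower-bound reduction from \cite{KR-submitted}. Your anticipation that this reduction must land on a genuinely hard problem in the negative branch, not merely a non-FO-rewritable one, is exactly right; the paper confirms this by observing that the negative case produces a sentence equivalent to $3$-colourability.

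There is, however, a real gap in your passage from $n$-ary to Boolean MDDLog. The marking trick (add fresh unary EDB relations $R_1,\dots,R_n$ and replace $\mn{goal}(x_1,\dots,x_n)\leftarrow\text{body}$ by $\mn{goal}()\leftarrow R_1(x_1)\wedge\cdots\wedge R_n(x_n)\wedge\text{body}$) does yield that $\Pi$ is in \PTime iff $\Pi'$ is, and that is all you need for the upper bound in Point~(2). But it does \emph{not} give a many-one reduction from $\Pi'$-evaluation back to $\Pi$-evaluation: for an arbitrary extended instance $I'$ one has $I'\models\Pi'$ iff there exists a tuple $\abf$ (from a polynomial-size candidate set) with $I'|_{\Sbf_E}\models\Pi(\abf)$, and this disjunctive Turing reduction does not transfer {\sc coNP}-hardness from $\Pi'$ to $\Pi$. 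Hence Point~(1), the dichotomy, is not established by your argument. The paper closes this gap by going through Lemma~\ref{lem:nonBoolRed1} instead: that lemma produces finitely many Boolean programs $\Pi_\cbf$ (with constants) together with the claim that $I\models\Pi(\abf)$ iff $I[\cbf/\abf]\models\Pi_\cbf$ whenever $\abf\approx\cbf$, and, crucially, also $I\models\Pi_\cbf$ iff $I\models\Pi(\cbf)$. The latter is a polynomial-time many-one reduction from each $\Pi_\cbf$ to $\Pi$, so if any $\Pi_\cbf$ is {\sc coNP}-hard then so is $\Pi$. Replacing your marking step by this argument (and then removing constants via Lemma~\ref{lem:nonBoolRed2}, which the paper notes preserves complexity) repairs the proof while keeping the rest of your outline intact.
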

\begin{proof}\ For MMSNP, it is well-known that there is a dichotomy
  for query evaluation between \PTime and {\sc NP} iff there is such a
  dichotomy for the complexity of CSPs. This gives the MMSNP version
  of~(1). In fact, it is even known that the constructions from the
  proofs of Theorems~\ref{th:mmsnpToSimple} and~\ref{th:simpleToCsp},
  which transform an MMSNP sentence into a CSP, preserve complexity up
  to polynomial time reductions
  \cite{DBLP:journals/siamcomp/FederV98,DBLP:journals/combinatorica/Kun13}. We
  thus obtain the upper bound in the MMSNP version of (2). The lower
  bound is obtained from the reduction used in \cite{KR-submitted}
  to show that the Datalog-rewritability of (the complement of) MMSNP
  sentences is 2\NExpTime-hard. The proof is by reduction of a tiling
  problem.  Given such a tiling problem, one constructs an MMSNP
  sentence $\vp$ such that $\vp$ is FO-rewritable if there is a tiling
  and equivalent to 3-colorability (thus not Datalog-rewritable and
  {\sc NP}-hard) otherwise.  Clearly, such a reduction also yields
  2\NExpTime-hardness of \PTime query evaluation.

  For the cases of MDDLog and for OMQs, lower bounds are obtained
  along the same lines, that is, by observing that the lower bound
  constructions from \cite{KR-submitted} are directly applicable. For
  the upper bounds and the dichotomies, we first observe that the
  construction in the proofs of Lemma~\ref{lem:nonBoolRed2} preserves
  complexity up to polynomial time reductions (which is implicit in
  the proof) and recall that the translation in
  Lemma~\ref{LEM:OMQTOSIMPLE} is even equivalence-preserving. It thus
  remains to deal with Lemma~\ref{lem:nonBoolRed1}. There, an $n$-ary
  MDDLog program $\Pi$ is translated into a family of Boolean MDDLog
  programs (with constants) $\Pi_\cbf$, $\cbf \in C^n$ where $C$ is
  fixed set of $n$ constants. The claim formulated in the proof of
  Lemma~\ref{lem:nonBoolRed1} provides
  \begin{enumerate}

  \item a polynomial time reduction of evaluating $\Pi$ to evaluating
    the programs $\Pi_\cbf$: given an instance $I$ and an
    $\abf \subseteq \mn{Ind}(I)$, to decide whether $I \models \Pi(\abf)$
    choose $\cbf$ with $\abf \approx \cbf$ and check whether
    $I[\cbf/\abf] \models \Pi_\cbf$;

  \item for each $\cbf \in C^n$, a polynomial time reduction of
    evaluating $\Pi_\cbf$ to evaluating $\Pi$: given an instance $I$,
    to decide whether $I \models \Pi_\cbf$ check whether
    $I \models \Pi(\cbf)$.

  \end{enumerate}
  It follows that $\Pi$ can be evaluated in \PTime iff all of the
  programs $\Pi_\cbf$ can. This is enough to transfer the upper bound
  for deciding \PTime query evaluation. For dichotomy, we additionally
  need that if one of the programs $\Pi_\cbf$ is {\sc coNP}-hard, then
  so is $\Pi$, which follows from~(2).
\end{proof}

\section{Discussion}
\label{sec:concl}

We have clarified the decidability status and computational complexity
of FO- and MDLog-rewritabil\-ity in MMSNP, MDDLog, and various OMQ
languages based on expressive description logics and conjunctive
queries, and we also made several interesting observations regarding
dichotomies and the decidability and complexity of \PTime query
evaluation. For Datalog-rewritability, we were only able to obtain
partial results, namely a sound algorithm that is complete on a
certain class of inputs and potentially incomplete in general.  This
raises several natural questions: is our algorithm actually complete
in general? Does an analogue of Lemma~\ref{lem:FOgirthdontcare} (that
is, rewritability on instances of high girth implies rewritability)
hold for Datalog as a target language? What is the complexity of
deciding Datalog-rewritability in the afore-mentioned languages? From
an OMQ perspective, it would also be important to work towards more
practical approaches for computing (FO-, MDLog-, and DLog-)
rewritings. Given the high computational complexities involved, such
approaches might have to be incomplete to be practically
feasible. However, the degree/nature of incompleteness should then be
characterized, and we expect the results in this paper to be helpful
in such an endeavour.


\bibliographystyle{alpha}
\bibliography{lmcssubmission}

\newpage

\appendix

\section{Translating Boolean MDDLog to generalized CSP}
\label{app:translate}

\subsection{From MDDLog to Simple MDDLog}
\label{app:step1tosimple}

Let $\Pi$ be a Boolean MDDLog program over schema $\Sbf_E$ and of
diameter $k$. We first construct from $\Pi$ an equivalent Boolean
MDDLog program $\Pi_B$ such that the following conditions are
satisfied:
\begin{enumerate}[label=(\roman*)]

\item all rule bodies are biconnected,
that is, when any single variable is removed from the body (by
deleting all atoms that contain it), then the resulting rule body is
still connected; 

\item if $R(x,\dots,x)$ occurs in a rule body with $R$ EDB, then
  the body contains no other EDB atoms.

\end{enumerate}
A good way to think about what is achieved in this first step is that,
when the resulting program is evaluated on an instance of bounded
treewidth, then it suffices to map the rule bodies to individual bags
while it is never necessary to cross `bag boundaries'.

To construct $\Pi_B$, we first extend $\Pi$ with all contractions of
rules in $\Pi$; we will refer to this step as the \emph{collapsing
  step}. We then split up rules that are not biconnected into multiple
rules by exhaustively executing the following rewriting steps:
\begin{itemize}

\item replace every rule
  $p(\vect{y}) \leftarrow q_1(\vect{x}_1) \wedge q_2(\vect{x}_2)$
  where $\vect{x}_1$ and $\vect{x}_2$ share exactly one variable $x$
  but both contain also other variables with the rules
$p_1(\vect{y}_1) \vee Q(x) \leftarrow q_1(\vect{x}_1)$ and
  $p_2(\vect{y}_2) \leftarrow Q(x) \wedge q_2(\vect{x}_2)$, where $Q$
  is a fresh monadic IDB relation and $p_i(\vect{y}_i)$ is the
  restriction of $p(\vect{y})$ to atoms that are nullary or contain a
  variable from $\xbf_i$, $i \in \{1,2\}$;

\item replace every rule
  $p(\vect{y}) \leftarrow q_1(\vect{x}_1) \wedge q_2(\vect{x}_2)$
  where $\vect{x}_1$ and $\vect{x}_2$ share no variables and are both
  non-empty with the rules $p_1(\vect{y}_1) \vee Q() \leftarrow
  q_1(\vect{x}_1)$ and $p_2(\vect{y}_2) \leftarrow
  Q() \wedge q_2(\vect{x}_2)$, where
  $Q()$ is a fresh nullary IDB relation and the $p_i(\vect{y}_i)$ are
  as above;

\item replace every rule
  $p(\vect{y}) \leftarrow R(x,\dots,x) \wedge q(\vect{x})$ where $R$
  is an EDB relation and $q$ contains at least one EDB atom and the
  variable $x$, with the rules $Q(x) \leftarrow R(x,\dots,x)$ and
  $p(\vect{y}) \leftarrow Q(x) \wedge q(\vect{x})$, where $Q$ is a
  fresh monadic IDB relation.
  
\end{itemize}
It is easy to see that the resulting program $\Pi_B$ is equivalent to
the original program~$\Pi$ and that all $\Pi_B$ satisfies
Conditions~(i) and~(ii)
above.  

We next construct from $\Pi_B$ the desired simple program $\Pi_S$ by
replacing, in every rule, the EDB atoms in the rule body with a single
EDB atom that represents the conjunction of all atoms replaced. We
thus introduce fresh EDB relations that represent conjunctions of old
EDB relations. Note that there can be implications between the new EDB
relations that we will have to take care of in the construction of
$\Pi_B$.

Let $Q_\Pi$ denote the set of CQs that can be obtained from a rule
body in $\Pi_B$ by consistently renaming variables, using only
variables that occur in $\Pi_B$. Let $\Sbf_I$ be the IDB schema of
$\Pi_B$. For every $q(\vect{x}) \in \Qmc_\Pi$, we write
$q(\vect{x})|_{\Sbf_E}$ to denote the restriction of $q(\vect{x})$ to
$\Sbf_E$-atoms, and likewise for $q(\vect{x})|_{\Sbf_I}$ and IDB
atoms. The EDB schema $\Sbf'_E$ of $\Pi_S$ consists of the relations
$R_{q(\vect{x})|_{\Sbf_E}}$, $q(\vect{x}) \in \Qmc_\Pi$, whose arity is
the number of variables in $q(\vect{x})$ (which, by construction of
$\Pi_B$, is identical to the number of variables in
$q(\vect{x})|_{\Sbf_E}$). 
%
%
The program $\Pi_S$ consists
of the following rules:
\begin{quote}

  whenever $p(\vect{y}) \leftarrow q_1(\vect{x}_1)$ is a rule in
  $\Pi_B$, $q_2(\vect{x}_2) \in \Qmc_\Pi$, and
  $q_1(\vect{x}_1) \subseteq q_2(\vect{x}_2)$, then $\Pi_S$ contains
  the rule
  $p(\vect{y}) \leftarrow R_{q_2(\xbf_2)}(\vect{x}_2) \wedge
  q_1(\vect{x}_1)|_{\Sbf_I}$


\end{quote}
The case where $q_1(\vect{x}_1)$ is identical to $q_2(\vect{x}_2)$
corresponds to adapting rules in $\Pi_B$ to the new EDB signature and
the other cases take care of implications between EDB relations.
\begin{exa}
\label{ex:trans}
Assume that $\Pi_B$ contains the following rules, where $A$ and $r$
are EDB relations:
\[
\begin{array}{r@{\;}c@{\;}l}
  P(x_3) 
&\leftarrow&
  A(x_1) \wedge r(x_1,x_2) \wedge r(x_2,x_3) \wedge r(x_3,x_1)
\\[\myeqnsep]
\mn{goal}()
&\leftarrow& 
  r(x_1,x_2) \wedge r(x_2,x_3) \wedge r(x_3,x_1)
  \, \wedge \\[\myeqnsep]
  && P(x_1)   \wedge P(x_2)   \wedge P(x_3)
\end{array}
\]
A new ternary EDB relation $R_{q_2}$ is introduced for the EDB body
atoms of the lower rule, where
$q_2 = r(x_1,x_2) \wedge r(x_2,x_3) \wedge r(x_3,x_1)$, and a new
ternary EDB relation $R_{q_1}$ is introduced for the upper rule,
$q_1 = A(x_1) \wedge q_2$. In $\Pi_S$, the rules are replaced with
\[
\begin{array}{r@{\;}c@{\;}l}
P(x_3) 
&\leftarrow& 
  R_{q_1}(x_1,x_2,x_3) 
\\[\myeqnsep]
\mn{goal}() 
&\leftarrow& 
  R_{q_2}(x_1,x_2,x_3)    \wedge P(x_1)   \wedge P(x_2)   \wedge
  P(x_3)
 \\[\myeqnsep]
\mn{goal}() 
&\leftarrow& 
  R_{q_1}(x_1,x_2,x_3)    \wedge P(x_1)   \wedge P(x_2)   \wedge
  P(x_3)
\end{array}
\]
Note that $q_2 \subseteq q_1$ and thus $q_1$ logically implies $q_2$,
which results in two copies of the goal rule to be generated.
\end{exa} 
Proof details for the following lemma can be found in
\cite{KR-submitted}.  Recall that $k$ is the diameter of the original
MDDLog program $\Pi$.
\begin{lem}
\label{lem:instancesbackforth}
~\\[-5mm]
  \begin{enumerate}

  \item If $I$ is an $\Sbf_E$-instance and $I'$ the corresponding
    $\Sbf'_E$-instance, then $I \models \Pi$ iff $I' \models \Pi_S$;

  \item If $I'$ is an $\Sbf'_E$-instance and $I$ the corresponding
    $\Sbf_E$-instance, then 
    \begin{enumerate}

    \item $I' \models \Pi_S$ implies $I \models \Pi$;

    \item $I \models \Pi$ implies $I' \models \Pi_S$ if
      the girth of $I'$ exceeds $k$.

    \end{enumerate}

  \end{enumerate}
\end{lem}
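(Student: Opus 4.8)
The plan is to work throughout with the biconnected program $\Pi_B$ instead of $\Pi$. Since $\Pi_B$ is equivalent to $\Pi$ on all $\Sbf_E$-instances, I may replace $I \models \Pi$ by $I \models \Pi_B$ everywhere. Recall that $\Pi_S$ has the same IDB schema $\Sbf_I$ as $\Pi_B$, and that for a Boolean DDLog program $I \not\models \Pi_B$ holds iff $I$ admits a \emph{good extension}: an extension by $\Sbf_I$-facts that satisfies all rules of $\Pi_B$ and omits $\mn{goal}()$; likewise for $\Pi_S$. The corresponding instances $I$ and $I'$ live over the same active domain (after a harmless normalization discarding elements occurring in no program-relevant fact), so one and the same set of $\Sbf_I$-facts is a candidate good extension for both. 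The entire argument then reduces to comparing, rule by rule, when such an extension fires a $\Pi_B$-rule over $I$ and when it fires a $\Pi_S$-rule over $I'$.

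For Point~1 I fix an $\Sbf_E$-instance $I$, its corresponding $I'$, and a common $\Sbf_I$-extension, and show it satisfies all $\Pi_B$-rules over $I$ iff it satisfies all $\Pi_S$-rules over $I'$; the two good-extension conditions then coincide, giving $I \models \Pi_B$ iff $I' \models \Pi_S$. In one direction, a homomorphism of a $\Pi_S$-body $R_{q_2}(\xbf) \wedge q_1|_{\Sbf_I}$ into the extension sends the aggregate atom to some $R_{q_2}(\abf) \in I'$, which by definition of $I'$ means $I \models q_2(\abf)$; as $q_1 \subseteq q_2$ this restricts to a homomorphism of $q_1|_{\Sbf_E}$ into $I$, which glued with the $\Sbf_I$-part yields a homomorphism of the $\Pi_B$-body $q_1$ over $I$. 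Conversely, if $h$ is a homomorphism of $q_1$ into the extension, then its EDB part witnesses $I \models q_1|_{\Sbf_E}(h(\xbf))$, hence $R_{q_1|_{\Sbf_E}}(h(\xbf)) \in I'$, so taking $q_2 := q_1$ (a member of the rule-body set $Q_\Pi$) the matching $\Pi_S$-rule is firable. In both cases the heads agree, being $\Sbf_I$-atoms over the shared extension.

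For Point~2 the instances are asymmetric: $I'$ is arbitrary and $I$ is its corresponding $\Sbf_E$-instance. The structural fact behind Example~\ref{ex:veryfirst} is that the round trip only adds facts, so $I' \subseteq I''$, where $I''$ is the $\Sbf'_E$-instance corresponding to $I$. Point~2(a) is then immediate: the inclusion $I' \subseteq I''$ is a homomorphism, truth of DDLog programs is preserved under homomorphisms, hence $I' \models \Pi_S$ gives $I'' \models \Pi_S$, and Point~1 applied to the pair $(I,I'')$ gives $I \models \Pi_B$, i.e.\ $I \models \Pi$.

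Point~2(b) is the only place where girth is needed and is the main obstacle. I argue contrapositively: from a good $\Pi_S$-extension $J'$ of $I'$ I build a good $\Pi_B$-extension $J$ of $I$, contradicting $I \models \Pi_B$. Taking $J$ to be $I$ together with the restriction of $J'$ to $\adom(I) \subseteq \adom(I')$, the only task is to satisfy each rule $p \leftarrow q_1$ of $\Pi_B$. Fixing a homomorphism $h$ of $q_1$ into $J$ (w.l.o.g.\ injective, as $\Pi_B$ is closed under identifying variables), the crux is a \emph{factoring lemma}: since $q_1|_{\Sbf_E}$ is biconnected with at most $k$ variables and $I$ has girth exceeding $k$, the whole image $h(q_1|_{\Sbf_E})$ consists of conjuncts of a \emph{single} fact $R_{q_2}(\abf) \in I'$. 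The reason is that, assigning to each EDB atom of $q_1$ a witnessing $I'$-fact it descends from, biconnectedness forbids a cut constant between distinct witnesses, so using two or more witnesses would force a cycle through at most $k$ facts of $I'$ linked by shared constants at distinct positions, contradicting the girth bound. With the image factored through $R_{q_2}(\abf)$ one obtains $q_1|_{\Sbf_E} \subseteq q_2$, so $\Pi_S$ contains the rule $p \leftarrow R_{q_2}(\xbf) \wedge q_1|_{\Sbf_I}$; firing it on $I' \cup J'$ as in Point~1 satisfies its head, and since $J$ and $J'$ share their $\Sbf_I$-facts over $\adom(I)$ the head of the $\Pi_B$-rule is satisfied too. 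The delicate step is precisely turning biconnectedness and the bound $k$ into a genuine cycle of $I'$ in the exact sense of the girth definition (distinct facts linked by shared constants at distinct positions), which is where the hypothesis that $I'$ has girth exceeding $k$ is consumed.
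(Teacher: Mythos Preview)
The paper does not give its own proof of this lemma; it simply writes ``Proof details for the following lemma can be found in \cite{KR-submitted}.'' So there is no in-paper argument to compare against, and your outline stands on its own. Your strategy is the natural one and is essentially correct: Point~1 via matching good extensions, Point~2(a) via the round trip $I' \subseteq I''$ and homomorphism preservation (an elegant shortcut), and Point~2(b) via the factoring lemma.

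One genuine omission to flag in your factoring step: you invoke only biconnectedness (condition~(i) of the construction of $\Pi_B$), but condition~(ii)---that a body containing an atom $R(x,\dots,x)$ on a single variable contains no other EDB atom---is equally essential, and your argument is incomplete without it. A body such as $A(x)\wedge R(x,y)\wedge R(y,x)$ is biconnected, yet in a high-girth $I'$ the fact $A(h(x))$ may descend only from a \emph{unary} $I'$-fact, which cannot be the witness for the $R$-atoms; and a unary fact cannot participate in any cycle in the paper's sense (it has a single position, so $p_i \neq p'_i$ fails), so the girth bound yields no contradiction. Condition~(ii) rules this out: in any multi-atom body of $\Pi_B$, every EDB atom uses at least two distinct variables. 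With that in hand, your bipartite variable/witness incidence argument goes through cleanly: every witness has degree $\ge 2$, so if the incidence graph is a tree one finds a cut variable (contradicting biconnectedness), and otherwise a cycle of length $\le 2k$ in the bipartite graph yields a valid cycle of length $\le k$ in $I'$ (distinct link variables give distinct in/out positions since girth $> 1$ forbids repeated constants in a fact).

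A smaller point: your ``w.l.o.g.\ $h$ injective, as $\Pi_B$ is closed under identifying variables'' is not literally true after the splitting step---the collapsed rule may have been replaced by split rules---but this is harmless: the split rules jointly simulate the collapsed rule, and each split rule's body again satisfies (i) and (ii), so one recurses.
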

The following example demonstrates why the restriction to high girth
instances in Point~2b of Lemma~\ref{lem:instancesbackforth} is
necessary, see also Example~\ref{ex:veryfirst}.
\begin{exa}
\label{ex2}
Consider the programs $\Pi_B$ and $\Pi_S$ from
Example~\ref{ex:trans}.  Take the $\Sbf'_E$-instance $I'$ defined by
\[
  R_{q_1}(a,a',c'), R_{q_1}(b,b',a'), R_{q_1}(c,c',b').
\]
It can be verified that $I' \not\models \Pi_S$. But the corresponding
$\Sbf_E$-instance $I$ is such that $\Pi_B$ derives the IDB relation $P$ at
$a'$, $b'$, and $c'$, and additionally $I$ contains the facts
\[
   r(c',b'), r(b',a'),r(a',c')
\]
which are not covered by any fact in $I'$. Thus clearly $I \models \Pi_B$.
\end{exa}

\subsection{From Simple MDDLog to Generalized CSP}
\label{app:step2tosimple}

Let $\Pi$ be a simple MDDLog program over EDB schema $\Sbf_E$ and with
IDB schema~$\Sbf_I$.  For $i \in \{0,1\}$, an \emph{i-type} is a set
$t$ of relation symbols from $\Sbf_I$ of arity at most $i$ that does
not contain $\mn{goal}()$ and that satisfies all rules in $\Pi$ which
use only IDB relations of arity at most $i$ and do not involve any EDB
relations.

We build a template $T_\theta$ for each 0-type $\theta$. The elements
of $T_\theta$ are exactly the 1-types that agree with $\theta$ on
nullary IDB relations. $T_\theta$ consists of the following facts:
\begin{enumerate}

\item $P()$ for each nullary $P \in \theta$. 
\item $P(t)$ for each 1-type $t$ and each monadic $P \in t$;
\item $R(t_1,\dots,t_n)$ for each relation $R \in \Sbf_E$ and all 1-types $t_1,\dots,t_n$ such that $\Pi$ does not contain a rule 
  \[ 
  P(x_i) \leftarrow R(x_1,\dots,x_n) \wedge P_1(x_{i_1}) \wedge
  \cdots \wedge P_n(x_{i_n})
  \]
  such that $P_j \in t_{i_j}$ for $1 \leq j \leq n$, and $P \notin t_i$.

\end{enumerate}
%
The
following was observed in
\cite{DBLP:journals/siamcomp/FederV98}.
\begin{lem}
  For any $\Sbf_E$-instance $I$, we have $I \models \Pi$ iff
  $I \not\rightarrow T_\theta$ for all $0$-types $\theta$.
\end{lem}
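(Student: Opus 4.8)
The plan is to use the standard model-theoretic characterization of truth for Boolean disjunctive Datalog. Since the rules of $\Pi$ are positive and every head variable occurs in the body, models may be restricted to the active domain; hence $I \models \Pi$ holds iff there is no \emph{countermodel}, that is, no extension $J$ of $I$ to the IDB relations over $\adom(I)$ that satisfies every rule of $\Pi$ and has $\mn{goal}() \notin J$. It therefore suffices to prove that a countermodel exists iff $I \rightarrow T_\theta$ for some $0$-type $\theta$; the lemma then follows by contraposition.

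For the direction from countermodels to homomorphisms, I would start from a countermodel $J$ and read off types: let $\theta$ be the set of nullary IDBs $P$ with $P() \in J$, and for each $a \in \adom(I)$ let $t_a$ be $\theta$ together with all unary IDBs $P$ with $P(a) \in J$. Because $J$ satisfies all rules and contains no $\mn{goal}()$, restricting attention to the rules that use only nullary IDBs shows that $\theta$ is a $0$-type, and restricting to the rules that use only IDBs of arity at most one (evaluated at the single element $a$) shows that each $t_a$ is a $1$-type agreeing with $\theta$ on the nullary IDBs; hence $t_a \in T_\theta$. Setting $h(a) = t_a$, I would then verify that $h$ is a homomorphism $I \rightarrow T_\theta$: the only nontrivial point is that for each EDB fact $R(a_1,\dots,a_n) \in I$ we have $R(t_{a_1},\dots,t_{a_n}) \in T_\theta$. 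Here simplicity is essential: the rules of $\Pi$ mentioning $R$ have a body consisting of the single EDB atom $R(x_1,\dots,x_n)$---whose variables are pairwise distinct and exhaust the body---plus IDB atoms on these variables. If the fact were excluded from $T_\theta$, then by the definition of $T_\theta$ some such rule would have all its IDB body atoms witnessed in the $t_{a_i}$ while its head disjuncts all fail; but then the same rule, instantiated by $x_\ell \mapsto a_\ell$, would be violated in $J$, contradicting that $J$ is a countermodel.

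For the converse, given a $0$-type $\theta$ and a homomorphism $h\colon I \rightarrow T_\theta$, I would build the candidate countermodel $J$ by putting $P() \in J$ iff $P \in \theta$ for nullary $P$, and $P(a) \in J$ iff $P \in h(a)$ for unary $P$ and $a \in \adom(I)$, with no $\mn{goal}()$. Since $\theta$ contains no $\mn{goal}()$, neither does $J$, so it remains to check that $J$ satisfies every rule. I would split into the three shapes allowed by simplicity: rules with no EDB atom and no variable are satisfied because $\theta$ is a $0$-type; rules with no EDB atom and a single variable $x$ are satisfied at each $a$ because $h(a)$ is a $1$-type, which satisfies all rules using only IDBs of arity at most one; and for a rule with EDB atom $R(x_1,\dots,x_n)$, whenever its body holds under $x_\ell \mapsto a_\ell$ we have $R(a_1,\dots,a_n) \in I$, hence $R(h(a_1),\dots,h(a_n)) \in T_\theta$, and the inclusion condition defining $T_\theta$ then forces one of the head disjuncts to be satisfied in $J$. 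Combining both directions yields ``$I \not\models \Pi$ iff $I \rightarrow T_\theta$ for some $\theta$'', which is the contrapositive of the claim.

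The main obstacle I anticipate is the bookkeeping around the EDB rules and the exact inclusion condition for $T_\theta$. The displayed exclusion condition in the definition of $T_\theta$ is written only for non-disjunctive rules with a single unary head and unary IDB body atoms, so I would first record that, more generally, each EDB relation is interpreted maximally subject to all rules of $\Pi$ being satisfied once the $1$-types at the argument positions are fixed, and that \emph{simplicity is exactly what makes this well-defined}: since every body containing an EDB atom carries all body variables in that atom, each occurring once, there is no join between two EDB atoms and no interaction between distinct EDB relations, so the maximal interpretation is unique and a fact can be admitted iff it violates no rule. With this reformulation in hand, the two rule-by-rule verifications above go through routinely; the only genuine content is matching the ``admit a fact unless some rule forces a missing head disjunct'' condition to rule satisfaction in $J$, in both directions.
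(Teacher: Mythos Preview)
Your proof is correct and follows the standard model-theoretic argument; the paper itself does not supply a proof but simply attributes the lemma to Feder and Vardi. Your observation that the displayed exclusion condition for $T_\theta$ is stated only for rules with a single unary head is apt: this is a notational shortcut in the paper, and your reformulation via the maximal interpretation of EDB relations subject to all rules being satisfied is exactly the intended reading (the paper says as much in the sketch following Theorem~\ref{th:simpleToCsp}). With that reading in place, both directions of your argument go through as written.
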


\section{MDLog-Rewritability of Generalized CSP}
\label{app:mdlogexp}

In the proof of the subsequent theorem, we use obstructions of CSPs,
which are defined in Section~\ref {sect:shapeObstrExplosion}.
%
%
\begin{thm}
  Given a finite set of templates $S$, it can be decided in \ExpTime
  whether coCSP$(S)$ is MDLog-rewritable.
%
\end{thm}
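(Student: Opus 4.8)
The plan is to reduce the question to the single-template case, where MDLog-rewritability of coCSP$(T)$ is decidable in \ExpTime \cite{Chen2017}. I would first recall the single-template picture that I will use as a black box. Writing $U(T)$ for the \emph{power structure} of $T$ (its domain is the set of nonempty subsets of $\adom(T)$, and $(A_1,\dots,A_r)\in R^{U(T)}$ iff for every $i$ and every $a\in A_i$ there is a tuple of $R^T$ through $a$ lying in $A_1\times\cdots\times A_r$), the relevant facts are: $T\rightarrow U(T)$ always holds via $a\mapsto\{a\}$; $U(T)$ is the most general $T$-arc-consistent instance, so an instance $I$ is $T$-arc-consistent iff $I\rightarrow U(T)$; arc-consistency is a \emph{complete} decision procedure for $\mn{CSP}(T)$ on instances of girth $\omega$ (acyclic instances); and coCSP$(T)$ is MDLog-rewritable iff $T$ has tree duality iff $U(T)\rightarrow T$, which \cite{Chen2017} decides in \ExpTime.

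Next I would remove redundancy from $S$. If $T_i\rightarrow T_j$ for distinct $T_i,T_j\in S$, then $I\rightarrow T_i$ implies $I\rightarrow T_j$, so deleting $T_i$ changes neither $\rightarrow S$ nor coCSP$(S)$. Computing the homomorphism preorder on $S$ costs only $O(|S|^2)$ homomorphism tests between the (polynomial-size) templates and stays well within \ExpTime; iterating the deletion yields a subfamily $S'\subseteq S$ with coCSP$(S')=\,$coCSP$(S)$ such that no $T\rightarrow T'$ holds for distinct $T,T'\in S'$. Thus $S'$ is a homomorphism-antichain, and it suffices to decide MDLog-rewritability of coCSP$(S')$.

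The heart of the argument is the characterization: \emph{for an antichain $S'$, coCSP$(S')$ is MDLog-rewritable iff every $T\in S'$ has tree duality.} The ``if'' direction is closure under intersection, since coCSP$(S')=\bigcap_{T\in S'}\text{coCSP}(T)$: taking the disjoint union of the monadic arc-consistency programs for the individual coCSP$(T)$ and adding the single rule $\mn{goal}()\leftarrow\bigwedge_{T\in S'}\mn{goal}_T()$ gives a monadic program defining the intersection. For the ``only if'' direction I would use that an MDLog-rewritable generalized coCSP admits an obstruction set of acyclic (girth-$\omega$) instances, the lift of Feder and Vardi's tree-duality characterization \cite{DBLP:journals/siamcomp/FederV98} from single templates to finite sets. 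Fix $T\in S'$ and suppose $U(T)\not\rightarrow T$. Since $T\rightarrow U(T)$, a homomorphism $U(T)\rightarrow T'$ with $T'\neq T$ would give $T\rightarrow T'$, impossible in an antichain; hence $U(T)\not\rightarrow T'$ for every $T'\in S'$, i.e.\ $U(T)$ is a yes-instance of coCSP$(S')$. Therefore some acyclic obstruction $O$ satisfies $O\rightarrow U(T)$ and $O\not\rightarrow S'$. But $O\rightarrow U(T)$ makes $O$ arc-consistent w.r.t.\ $T$, and since $O$ is acyclic, arc-consistency completeness gives $O\rightarrow T$, hence $O\rightarrow S'$, contradicting that $O$ is an obstruction. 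Thus $U(T)\rightarrow T$ and $T$ has tree duality.

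Combining the three steps yields the decision procedure: compute $S'$, run the \ExpTime tree-duality test of \cite{Chen2017} on each $T\in S'$, and answer ``MDLog-rewritable'' iff all tests succeed; as $|S'|\leq|S|$ and each template has size polynomial in $|S|$, the total time is single exponential in $|S|$. The step I expect to be the main obstacle is the ``only if'' direction of the characterization, specifically establishing that MDLog-rewritability of the \emph{generalized} coCSP$(S')$ already forces an obstruction set of \emph{genuinely acyclic} (girth-$\omega$) instances rather than merely high-girth or treewidth-$(1,k)$ ones. This acyclicity is exactly what licenses the appeal to arc-consistency completeness (which fails on instances containing even a single long cycle), and controlling it in the presence of the disjunction over several templates---where the antichain reduction is precisely the device that turns ``$U(T)\rightarrow S'$'' into ``$U(T)\rightarrow T$''---is where the real work lies.
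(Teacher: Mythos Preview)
Your plan is sound and the step you flag as the main obstacle can indeed be carried out, but the route differs from the paper's.

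\textbf{Filling your gap.} That an MDLog-rewritable coCSP$(S')$ admits an obstruction set of \emph{acyclic} instances follows from (the proof of) Lemma~\ref{lem:FOgirthdontcare}: given any MDLog rewriting $\Gamma$ of coCSP$(S')$, replace each rule body by all tree CQs obtainable from it via variable identification; the resulting program is still a rewriting of coCSP$(S')$ (the argument there is already stated for sets of templates), and its proof trees now yield genuinely tree-shaped instances rather than merely treewidth-$(1,k)$ ones. Without this upgrade your argument really does break, since arc-consistency is not complete on treewidth-$(1,k)$ instances in general---a $3$-cycle is arc-consistent with any bipartite graph containing an edge but does not map to it---so $O\rightarrow U(T)$ alone does not give $O\rightarrow T$ unless $O$ is a tree.

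\textbf{How the paper proceeds instead.} The paper never needs acyclic obstructions and never touches $U(T)$. For the ``only if'' direction it keeps an arbitrary MDLog rewriting $\Gamma$ of diameter $k$ and, for each $T_j$, shows directly that the set of treewidth-$(1,k)$ instances not mapping to $T_j$ is an obstruction set for $T_j$, via a disjoint-union trick: given $I\not\rightarrow T_j$, the disjoint union $I\sqcup T_j$ maps to no $T_i$ (incomparability handles $i\neq j$, the hypothesis handles $i=j$), so $\Gamma$ fires on it; a proof tree yields a treewidth-$(1,k)$ instance $J\rightarrow I\sqcup T_j$ with $J\not\rightarrow T_j$, and any connected component $O$ of $J$ with $O\not\rightarrow T_j$ must land in $I$. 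Feder--Vardi's Theorem~23 (bounded-treewidth obstructions imply MDLog-rewritability) then finishes. So the paper trades your power-structure contradiction for an elementary disjoint-union argument plus a different Feder--Vardi black box; your route front-loads the work of upgrading obstructions from bounded treewidth to trees, but once that is done the $U(T)$ step is very clean.
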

\begin{proof}\ Consider coCSP$(S)$ over schema $\Sbf_E$ with $S :=
  \{T_1,\dots,T_n\}$. We start with observing that we can assume that
  the templates in $S$ are mutually homomorphically incomparable: if
  this is not the case, we remove templates that are not
  homomorphically minimal and further remove templates so that none of
  the remaining templates are homomorphically equivalent. Clearly,
  this is equivalence preserving and can be done in \ExpTime.

  We aim to show that coCSP$(S)$ is MDLog-rewritable if and only if
  coCSP$(T_i)$ is for all $i \in \{1,\dots,n\}$, which gives the
  desired \ExpTime upper bound.  The ``if'' direction is immediate
  since the union of MDLog programs is expressible as an MDLog
  program.  For the ``only if'' direction, assume that coCSP($S$) is
  MDLog-rewritable, and let $\Gamma$ denote a concrete rewriting.
  Consider a template $T_j$ and let $\mathcal{O}(T_j)$ denote the set
  of all finite $\Sbf_E$-instances of treewidth $(1,k)$ that do not
  homomorphically map to $T_j$ where $k$ is the maximum number of
  variables that occur in a single rule of $\Gamma$.  We will show
  that $\mathcal{O}(T_j)$ is an obstruction set for~$T_j$. It then
  follows from Theorem~23 of \cite{DBLP:journals/siamcomp/FederV98},
  which says that the existence of an obstruction set of treewidth
  $(1,k)$ for some fixed $k$ implies MDLog-rewritability, that
  coCSP$(T_j)$ is MDLog-rewritable.

  By definition of $\mathcal{O}(T_j)$, it is immediate that if $O
  \rightarrow I$ for some $O \in \Omc(T_j)$ and $\Sbf_E$-instance~$I$,
  then $I \not\rightarrow T_j$. We now establish the converse.  Assume
   that 
   $I \not\rightarrow T_j$. Consider the disjoint union $U$ of $I$ and
   $T_j$. Since the templates in $S$ are homomorphically incomparable,
   $U \not\rightarrow T_i$ for all $i \in \{1,\dots,n\}$. Thus $U
   \models \Gamma$ and there is a proof tree for $\mn{goal}()$ from
   $U$ and~$\Gamma$. From that tree, we can read off an
   $\Sbf_E$-instance $J$ such that $J \rightarrow U$, $J$ has
   treewidth $(1,k)$, and $J \models \Gamma$. From the latter, we get
   $J \not \rightarrow T_j$. There must thus also be a connected
   component $O$ of $J$ with $O \not\rightarrow T_j$. We clearly have
   $O \in \mathcal{O}(T_j)$. Since $O \rightarrow U$, $O
   \not\rightarrow T_j$, and $O$ is connected, we moreover get $O
   \rightarrow I$ which finishes the proof.
\end{proof}

\section{Proof of Theorem~\ref{TH:REWRSHAPE2}}\label{appc}


 \threwrshapet*

\begin{proof}\ We treat the two cases, FO-rewritability and
  MDLog-rewritability with parameters, in parallel in a uniform
  way. To achieve uniformity, recall that FO-rewritability coincides
  with UCQ-rewritability by Proposition~\ref{prop:ross} and observe
  that a UCQ-rewriting of treewidth $(1,k)$ with $n$ parameters can be
  converted into a non-recursive MDLog-rewriting with $n$ parameters
  of diameter $k$ and vice versa. We work with the latter.
 
Assume that an $n$-ary MDDLog program $\Pi$ over EDB schema $\Sbf_E$
is rewritable into (non-recursive) MDLog with $n$ parameters. We can
convert
\begin{enumerate}

\item $\Pi$ into Boolean MDDLog programs $\Pi_1,\dots,\Pi_k$ with
  constants (Lemma~\ref{lem:nonBoolRed1}),

\item $\Pi_1,\dots,\Pi_k$ into Boolean MDDLog programs
  $\Pi'_1,\dots,\Pi'_k$ without constants (Lemma~\ref{lem:nonBoolRed2}),

\item $\Pi'_1,\dots,\Pi'_k$ into simple Boolean MDDLog programs
  $\Pi''_1,\dots,\Pi''_k$ (Theorem~\ref{th:mmsnpToSimple}), and

\item $\Pi''_1,\dots,\Pi''_k$ into CSP templates $T_1,\dots,T_k$
  (Theorem~\ref{th:simpleToCsp})

\end{enumerate}
such that all these programs and (complements of) templates are
rewritable into (non-recursive) MDLog. Moreover, in the proofs of the
mentioned lemmas and theorems, it is shown how to construct
(non-recursive) MDLog-rewritings of $\Pi''_1,\dots,\Pi''_k$ from given
ones of $T_1,\dots,T_k$, for $\Pi'_1,\dots,\Pi'_k$ from given ones of
$\Pi''_1,\dots,\Pi''_k$, and so on. We are going to analyze these
constructions in more detail.

We first note that for any (non-recursive) MDLog-rewritable CSP, there
is a (non-recursive) MDLog-rewriting where every rule body has at most
one EDB atom that contains all variables which occur in the rule body.
Since each program $\Pi''_i$ is actually \emph{equivalent} to the
complement of the CSP template $T_i$ in Step~4, the same is true for
the programs $\Pi''_i$. Thus, there is a (non-recursive)
MDLog-rewriting $\Gamma''_i$ of $\Pi''_i$ in which 
\begin{itemize}[align=left]

\item[($\dagger$)] each rule body has at most one EDB atom that contains all
  variables.

\end{itemize}

The translation of $\Pi'_i$ into $\Pi''_i$ in Step~3 involves
replacing the EDB schema $\Sbf_E$ with an aggregation schema
$\Sbf'_E$. More precisely, $\Sbf'_E$ consists of relations $R_{q(x)}$
where $q(x)$ is obtained from a rule body in $\Pi'_i$ by first
contracting, then splitting up the body into biconnected components,
and finally dropping all IDB relations. When translating the rewriting
$\Gamma''_i$ of $\Pi''_i$ into a rewriting $\Gamma'_i$ of $\Pi'_i$,
this change in schema is reverted. By ($\dagger$), the diameter of
$\Gamma'_i$ is thus bounded by the arity of relations in $\Gamma''_i$
and that arity, in turn, is bounded by the diameter of~$\Pi'_i$.
%
What's more important, though, is that we actually know what the rule
bodies in $\Gamma'_i$ look like: 
%
\begin{itemize}[align=left]

\item[($\ddagger$)] every rule body in $\Gamma'_i$ is obtained from a
  rule body in $\Pi'_i$ by first contracting, then splitting
  up the body into biconnected components, then dropping all IDB
  relations, and finally decorating with some fresh IDB relations
  without introducing fresh variables.

\end{itemize}
Now consider the translation of $\Pi_i$ into $\Pi'_i$ in Step~2 and
the corresponding translation of $\Gamma'_i$ into a rewriting
$\Gamma_i$ of $\Pi_i$. In the former, we dejoin rule bodies by
(sometimes) replacing different occurrences of the same variable $x$
with different variables $x_1,x_2$ and adding the atoms $R_j(x_1)$ and
$R_j(x_2)$ for some $j$, thus increasing the diameter. In the latter,
we rejoin the dejoined rules in $\Gamma'_i$ in the sense that we
replace variables $x,y$ with the same constant $c_j$ whenever the rule
body contains the (EDB) atoms $R_j(x)$ and $R_j(y)$. It can be
verified that rejoining any rule body of the form ($\ddagger$) results
in a rule body whose diameter is bounded by the diameter of
$\Pi'_i$. This gives the desired result since Step~1 preserves
diameter.
\end{proof}

\section{Proof of Theorem~\ref{THM:ALCI}}
\label{app:fromOMQtoMDDLog}

\thmalci*

\noindent

\begin{proof}\ Let $Q=(\Tmc,\Sbf_E,q_0)$ be an OMQ from
  $(\mathcal{SHI},\text{UCQ})$ and let $n_{q_0}$ and $n_\Tmc$ be the
  size of $q_0$ and \Tmc, respectively. We use $\mn{sub}(\Tmc)$ to
  denote the set of subconcepts of (concepts occurring in)
  \Tmc. Moreover, let $\Gamma$ be the set of all tree CQs that
  can be obtained from a CQ in $q_0$ by first existentially
  quantifying all answer variables, then contracting, and then taking
  a subquery.  
  %
%
%
%
  %
  Every $q \in \Gamma$ can be viewed as a $\mathcal{ALCI}$-concept
  provided that we additionally choose a root $x$ of the tree.  We
  denote this concept with $C_{q,x}$. For example, the tree CQ
  $q=\exists x \exists y \exists z \, r(x,y) \wedge A(y) \wedge
  s(x,z)$ and choice of $x$ as the root yields the
  $\mathcal{ALCI}$-concept $C_{q,x} = \exists r . A \sqcap \exists s
  . \top$. Let $\mn{con}(q_0)$ be the set of all these concepts
  $C_{q,x}$ and let $\Sbf_I$ be the schema that consists of monadic
  relation symbols $P_C$ and $\overline{P}_C$ for each $C \in
  \mn{sub}(\Tmc) \cup \mn{con}(q_0)$ and nullary relation symbols
  $P_q$ and $\overline{P}_q$ for each $q \in \Gamma$. We are going to
  construct an MDDLog program $\Pi$ over EDB schema $\Sbf_E$ and IDB
  schema $\Sbf_I$ that is equivalent to $Q$.

  By a \emph{diagram}, we mean a conjunction $\delta(\vect{x})$ of
  atoms over the schema $\Sbf_E \cup \Sbf_I$. For an interpretation
  \Imc, we write $\Imc \models \delta(\vect{x})$ if there is a
  homomorphism from $\delta(\vect{x})$ to \Imc, that is, a map
  $h:\vect{x} \rightarrow \Delta^\Imc$ such that:
  \begin{enumerate}

  \item $A(x) \in \delta$ with $A \in \Sbf_E$ implies $h(x) \in A^\Imc$;

  \item $r(x,y) \in \delta$ with $r \in \Sbf_E$ implies $(h(x),h(y)) \in A^\Imc$;

  \item $P_q() \in \delta$ implies $\Imc \models q$ and
    $\overline{P}_q() \in \delta$ implies $\Imc \not \models q$;

  \item $P_C(x) \in \delta$ implies $h(x) \in C^\Imc$ and
    $\overline{P}_C() \in \delta$ implies $h(x) \notin C^\Imc$.

  \end{enumerate}
  We say that $\delta(\vect{x})$ is \emph{realizable} if there is an
  model \Imc of \Tmc with $\Imc \models \delta(\vect{x})$.  A diagram
  $\delta(\vect{x})$ \emph{implies} a CQ $q(\vect{x}')$, with
  $\vect{x}'$ a tuple of variables from $\xbf$, if every homomorphism
  from $\delta(\vect{x})$ to some model \Imc of \Tmc is also a
  homomorphism from $q(\vect{x}')$ to \Imc. The MDDLog program $\Pi$
  consists of the following rules:
  \begin{enumerate}

  \item the rule $P_q() \vee \overline{P}_q() \leftarrow \mn{true}(x)$
    for each $q \in \Gamma$;

  \item the rule $P_C(x) \vee \overline{P}_C(x) \leftarrow \mn{true}(x)$
    for each $C \in \mn{sub}(\Tmc) \cup \mn{con}(q_0)$;


  \item the rule $\bot \leftarrow \delta(x)$ for each 
    non-realizable diagram $\delta(x)$ that contains a single
    variable $x$ and only atoms of the form $P_C(x)$, $C \in
    \mn{sub}(\Tmc) \cup \mn{con}(q_0)$;

  \item the rule $\bot \leftarrow \delta(\vect{x})$ for each 
    non-realizable connected diagram $\delta(\vect{x})$ that contains 
    at most two variables and at most three atoms;

  \item the rule $ \mn{goal}(\vect{x}') \leftarrow \delta(\vect{x})$
    for each diagram $\delta(\vect{x})$ that implies $q_0(\vect{x})$,
    has at most $n_{q_0}$ variable occurrences, and uses only relations
    of the following form: $P_q$, $P_C$ with $C$ a concept name that
    occurs in $q_0$, and role names from $\Sbf_E$ that occur in $q_0$.
    
  \end{enumerate}
  %
  To understand $\Pi$, a good first intuition is that rules of type~1
  and~2 guess an interpretation \Imc, rules of type~3 and~4 take care
  that the independent guesses are consistent with each other, with
  the facts in $I$ and with the inclusions in the TBox~\Tmc, and rules
  of type~5 ensure that $\Pi$ returns the answers to $q_0$ in~$\Imc$.

  However, this description is an oversimplification. Guessing \Imc is
  not really possible since \Imc might have to contain additional
  domain elements to satisfy existential quantifiers in \Tmc which may
  be involved in homomorphisms from (a CQ in) $q_0$ to $\Imc$, but new
  elements cannot be introduced by MDDLog rules. Instead of
  introducing new elements, rules of type~1 and~2 thus only guess the
  tree CQs that are satisfied by those
  elements. Tree CQs suffice because $\mathcal{SHI}$ has a
  tree-like model property and since we have disallowed the use of
  roles in the query that have a transitive subrole. The notion of
  `diagram implies query' used in the rules of type~5 takes care that
  the guessed tree CQs are taken into account when looking
  for homomorphisms from $q_0$ to the guessed model. This construction
  is identical to the one used in the proof of Theorem~1 of
  \cite{DBLP:conf/pods/BienvenuCLW13}, with two exceptions. First, we
  use predicates $P_C$ and $\overline{P}_C$ for every concept $C \in
  \mn{sub}(\Tmc) \cup \mn{con}(q_0)$ while the mentioned proof uses a
  predicate $P_t$ for every subset $t \subseteq \mn{sub}(\Tmc) \cup
  \mn{con}(q_0)$. And second, our versions of Rules~3-5 are formulated
  more carefully. 
  It can be verified that the correctness proof
  given in \cite{DBLP:conf/pods/BienvenuCLW13} is not affected by
  these modifications.  The modifications do make a difference
  regarding the size of $\Pi$, though, which we analyse next.

  It is not hard to see that, for some polynomial $p$, the number of
  rules of type~1 is bounded by $2^{p(n_{q_0})}$, the number of rules
  of type~2 and of type~4 is bounded by $2^{p(n_{q_0}\cdot \mn{log}n_\Tmc)}$, the
  number of rules of type~3 is bounded by $2^{2^{p(n_{q_0} \cdot
      \mn{log} n_\Tmc)}}$, and
  the number of rules of type~5 is bounded by $2^{2^{p(n_{q_0})}}$.
  Consequently, the overall number of rules is bounded by
  $2^{2^{p(n_{q_0} \cdot \mn{log}n_\Tmc)}}$ and so is the size of
  $\Pi$. The bounds on the size of the IDB schema and number of rules
  in $\Pi$ stated in Theorem~\ref{THM:ALCI} are easily verified.  It
  remains to argue that the construction can be carried out in double
  exponential time. It suffices to observe two facts. First,
  consistency of a given diagram $\delta(\vect{x})$ can be decided in
  \ExpTime since the satisfiability of $\mathcal{SHI}$ concepts
  w.r.t.\ TBoxes is in \ExpTime \cite{DBLP:phd/dnb/Tobies01}. And
  second, for a given diagram $\delta(\vect{x})$ and CQ $q(\vect{x}')$
  with $\vect{x}'$ a tuple of variables from $\xbf$, it can be decided
  in time single exponential in the size of $\delta(\vect{x})$ and of
  \Tmc and double exponential in the size of $q(\vect{x}')$ whether
  $\delta(\vect{x})$ implies $q(\vect{x}')$.  This is a consequence of
  the fact that, in $\mathcal{SHI}$, given an ABox $\Amc$ that may
  contain compound concepts (in place of concept names), a TBox
  $\Tmc$, a CQ $q(\xbf)$ and a candidate answer $\abf$, it can be
  decided in time single exponential in the size of $\Amc$ and \Tmc
  and double exponential in the size of $q$ whether $a$ is a certain
  answer to $q$ on \Amc w.r.t.~\Tmc~\cite{GliHoLuSa-JAIR08}.
\end{proof}

\section{Proof of Lemma~\ref{LEM:OMQTOSIMPLE}}
\label{app:OMQtosimple}

\lemomqtosimple*

\begin{proof}\ We convert $Q$ into an MDDLog program $\Pi_0$ as per
  Theorem~\ref{THM:ALCI} and then remove the answer variables
  according to the constructions in the proofs of
  Lemmas~\ref{lem:nonBoolRed1} and~\ref{lem:nonBoolRed2}, which gives
  programs $\Pi_1$ and $\Pi_2$.  Analyzing the latter constructions
  reveals that the number of rules on $\Pi_1$ is bounded by
  $r \cdot a^a$ rules where $r$ is the number of rules in $\Pi_0$ and
  $a$ is its arity. Moreover, the rule size 
  does not increase and neither the IDB schema nor the EDB schema
  changes. The latter construction produces a program with
  $r' \cdot s^s$ rules where $r'$ is the number of rules in $\Pi_1$
  and $s$ is the rule size of $\Pi_1$. Moreover, the IDB schema is not
  changed and the rule size at most doubles.  The EDB schema of the
  new program comprises $a$ fresh monadic relation symbols.  It can
  thus be verified that the obtained Boolean MDDLog program $\Pi_2$
  still satisfies Conditions~1-3 of Theorem~\ref{THM:ALCI} except that
  $n_q$ in the last point has to be replaced by $2n_q$. We make this
  explicit for the reader's convenience:
  \begin{enumerate}

  \item the size of $\Pi_2$ is bounded by $2^{2^{p(n_q \cdot \mn{log}n_\Tmc)}}$;

  \item the IDB schema of $\Pi_2$ is of size at most $2^{p(n_q \cdot \mn{log}n_\Tmc)}$;

  \item the rule size of $\Pi_2$ is bounded by $2n_q$.

  \end{enumerate}
  We next convert $\Pi_2$ into a simple Boolean MDDLog program $\Pi_Q$
  according to Theorem~\ref{th:mmsnpToSimple}. Let us analyze the
  construction in detail to understand the size of $\Pi_Q$, of its EDB
  schema $\Sbf'_E$, and of its IDB schema $\Sbf'_I$. 

  The initial variable identification step can be ignored. In fact, we
  start with at most $2^{2^{p(n_q\cdot \mn{log}n_\Tmc)}}$ rules, each
  of size at most $2n_q$. Thus variable identification results in a
  factor of $(2n_q)!$ regarding the program size and rule number,
  which is absorbed by $2^{2^{p(n_q\cdot \mn{log}n_\Tmc)}}$, and the
  other relevant parameters do not change; in particular, the IDB
  schema remains unchanged.
 
  The next and central step is to make rules biconnected. Given that the rule size is at most $2n_q$, this can split up each rule into at most $2n_q$ rules. This is absorbed by the bounds on program size and rule number. However, on first glance it might seem that we end up with a double exponentially large IDB schema since we might have to split up a double exponential number of rules, each time introducing at least one fresh IDB relation. To argue that this is actually not the case, we distinguish rules of type~1-2 and~4-5 from the construction of $\Pi_1$ (proof of Theorem~\ref{THM:ALCI}); note that the constructions in the proofs of Lemmas~\ref{lem:nonBoolRed1}
  and~\ref{lem:nonBoolRed2} modify the rules only in a very mild way and thus for every rule in $\Pi_2$ it is still clear which type it has.

We need not worry about rules of Type~1-2 and~4-5 since there are only $2^{p(|n_q| \cdot \mn{log}|n_\Tmc|)}$ many such rules, each of
  size at most $2n_q$, and thus the number of additional IDB relations
  introduced for making them biconnected is also bounded by
  $2^{p(n_q \cdot \mn{log}n_\Tmc)}$. Rules of type~3 in $\Pi_0$, on
  the other hand, are of a very restricted form, namely
  \[
    \bot \leftarrow P_{C_1}(x) \wedge \cdots \wedge P_{C_n}(x)
  \]
  with $C_1,\dots,C_n \in \mn{sub}(\Tmc) \cup \mn{con}(q_0)$.  These
  rules are biconnected and thus we are done when $Q$ is Boolean. In
  the non-Boolean case, rules of the above lead to the introduction of
  additional rules in the construction in the proof of
  Lemma~\ref{lem:nonBoolRed2}. This results in rules in $\Pi_2$ that
  are of the form
  \[
    \bot \leftarrow P_{C_1}(x_1) \wedge R_i(x_1) \wedge \cdots \wedge
    P_{C_n}(x_n) \wedge R_i(x_n)
  \]
  where $R_a$ is one of the fresh IDB relations introduced in the
  mentioned construction.  The latter rules have to be split up to be
  made biconnected. This will result in rules of the form
  \[
    \bot \leftarrow Q_1() \wedge \cdots \wedge Q_n() \quad
    \text{ and }
    \quad
    Q_i() \leftarrow P_C(x) \wedge R_a(x)
  \]
  Clearly, there are only $2^{p(n_q \cdot \mn{log}n_\Tmc)}$ many rule
  bodies of the latter form and thus it suffices to introduce at most
  the same number of fresh IDB relations $Q_i$. Thus, the size of the
  IDB schema of $\Pi_Q$ is bounded by
  $2^{p(n_q \cdot \mn{log}n_\Tmc)}$. Also note that, at this Point,
  the rule size has (potentially) decreased and is bounded by
  $\max\{n_q,2\}$. This is obvious for rules of Type~1-2 and~4-5,
  and also for the rules obtained from making rules of Type~3
  biconnected, see above.

  The last step is the change of EDB schema. It involves no blowups
  and we thus obtain the bounds stated in Lemma~\ref{LEM:OMQTOSIMPLE}.
  In particular, the arity of relations in $\Sbf'_E$ is bounded by
  $\max\{n_q,2\}$ since it is bounded by the rule size of the program
  that we had obtained before changing the EDB schema.
\end{proof}

\end{document}